\g@addto@macro\bfseries{\boldmath}
\crefname{lemma}{lemma}{lemmata}
\Crefname{lemma}{Lemma}{Lemmata}
\crefname{subsection}{subsection}{subsections}
\Crefname{subsection}{Subsection}{Subsections}
\crefname{conjecture}{conjecture}{conjectures}
\Crefname{conjecture}{Conjecture}{Conjectures}
\crefname{assumption}{assumption}{assumptions}
\Crefname{assumption}{Assumption}{Assumptions}
\crefname{introthm}{theorem}{theorems}
\Crefname{introthm}{Theorem}{Theorems}
\theoremstyle{plain}
\newtheorem{theorem}{Theorem}[section]
\newtheorem{proposition}[theorem]{Proposition}
\newtheorem{lemma}[theorem]{Lemma}
\newtheorem{assumption}[theorem]{Assumption}
\theoremstyle{definition}
\theoremstyle{plain}
\newtheorem{introthm}{Theorem}
\newcommand{\de}{\partial}
\newcommand{\iu}{\mathrm{i}}
\DeclareMathOperator*{\Res}{Res}
\newcommand{\cc}{\mathrm{c}}
\DeclareMathOperator{\Tr}{Tr}
\newcommand{\normord}[1]{:\mathrel{#1}:}
\newcommand{\NTr}[1]{\normord{\Tr{#1}}}
\newcommand{\mc}[1]{\mathcal{#1}}
\newcommand{\ms}[1]{\mathsf{#1}}
\newcommand{\Z}{\mathbb{Z}}
\newcommand{\R}{\mathbb{R}}
\newcommand{\C}{\mathbb{C}}
\renewcommand{\P}{\mathbb{P}}
\newcommand{\M}{\mathcal{M}}
\newcommand{\Mbar}{\overline{\mathcal{M}}}
\title{Matrix Correlators as Discrete Volumes of Moduli Space I: \\
Recursion Relations, the BMN-limit and DSSYK}
\author[a]{A. Giacchetto,}
\author[b]{P. Maity,}
\author[c]{E. A. Mazenc}
\affiliation[a]{Departement Mathematik, ETH Zürich \\ CH-8006 Zürich, Switzerland}
\affiliation[b]{Laboratory for Theoretical Fundamental Physics, EPFL \\ CH-1015, Lausanne, Switzerland}
\affiliation[c]{Institut für Theoretische Physik, ETH Zürich \\ CH-8093 Zürich, Switzerland}
\emailAdd{alessandro.giacchetto@math.ethz.ch,  pronobesh.maity@epfl.ch, emazenc@ethz.ch}
\abstract{
    We show certain correlators in generic one-matrix models define a notion of ``discrete'' volumes of the moduli space of Riemann surfaces, generalizing the connection between random matrices and JT gravity. We prove they obey a discrete, Mirzakhani-like recursion relation. Their fundamental discreteness crucially relies upon studying these matrix integrals away from the usual double-scaling limit. In a BMN-like limit of large traces, this recursion universally goes over to a continuous one, and the correlators asymptote to the volumes of Kontsevich. Finally, we demonstrate that the ETH matrix integral for DSSYK furnishes a discrete, $q$-analog of the Weil--Petersson volumes, thereby proving a conjecture due to K.~Okuyama.
} 
\begin{document}

\maketitle
\flushbottom

\setlength\parindent{0em} 
\setlength{\parskip}{.25em} 

\section{Overview \& summary of results}
This first installment establishes three main results:
\begin{enumerate}[label=\Alph*)]
 \setlength\itemsep{0.05mm}
    \item a discrete, manifestly geometric recursion relation for correlators of \textit{pruned} traces in generic one-cut matrix models,
    
    \item this recursion becomes universal and continuous in a BMN-like limit of large powers of the matrices, and
    
    \item a proof that the DSSYK matrix model computes a discrete $q$-analog of the Weil--Petersson volumes, thereby proving a conjecture by K.~Okuyama \cite{Oku23}.
\end{enumerate}

\subsection{A discrete Mirzakhani recursion for pruned correlators}
Correlation functions of resolvents in large $\ms{N}$ matrix models obey topological recursion \cite{Eyn04,AMM05,CEO06,EO07}. Essentially, a clever $1/\ms{N}$ expansion of the Schwinger--Dyson equations for the matrix integral \cite{ginsparg1993lectures,DiFrancesco:1993cyw} shows that the planar one-point function suffices to determine all correlators to all orders in $1/\ms{N}$. 
From an entirely different perspective, certain volumes of the moduli space of Riemann surfaces, such as the Weil--Petersson volumes studied by Mirzakhani \cite{Mir07a,Mir07b}, were also found to follow from a recursion relation. This similarity was elucidated by \cite{EO07+} and provided the starting point of Saad--Shenker--Stanford (SSS) \cite{SSSJTmatrix}, recasting JT gravity and its supersymmetric extensions as a matrix integral \cite{SW20,SaadTasi,TuriaciLesHouches, CJohnsonReview,CliffMaciek,CliffWPsuper}. However, the geometric origins of the recursion kernels appearing in the work of Mirzakhani are somewhat obscure on the matrix model side. 

\begin{figure}[b]
\centering
\tikzset{every picture/.style=thick}
\begin{tikzpicture}[x=1pt,y=1pt,xscale=.68,yscale=.7]
    \draw[shift={(-0.604, 566.341)}, xscale=0.7935, yscale=0.7518](0, 0) .. controls (-0.1451, 4.2965) and (-2.7567, 8.6152) .. (-4.7824, 8.5038) .. controls (-6.8081, 8.3925) and (-8.2478, 3.8512) .. (-8.1027, -0.4453) .. controls (-7.9576, -4.7418) and (-6.2277, -8.7935) .. (-4.2021, -8.6822) .. controls (-2.1764, -8.5708) and (0.1451, -4.2965) .. cycle;
    \draw[shift={(112.621, 535.355)}, xscale=0.7935, yscale=0.7518](0, 0) .. controls (-0.3877, 3.8657) and (-3.075, 8.012) .. (-5.187, 7.4605) .. controls (-7.299, 6.909) and (-8.8357, 1.6597) .. (-8.448, -2.206) .. controls (-8.0603, -6.0717) and (-5.7483, -8.5537) .. (-3.6363, -8.0022) .. controls (-1.5243, -7.4507) and (0.3877, -3.8657) .. cycle;
    \draw[shift={(109.163, 563.699)}, xscale=0.7935, yscale=0.7518](0, 0) .. controls (2.259, -0.032) and (4.7837, 4.053) .. (4.767, 7.7985) .. controls (4.7503, 11.544) and (2.1923, 14.95) .. (-0.0667, 14.982) .. controls (-2.3257, 15.014) and (-4.2857, 11.672) .. (-4.269, 7.9265) .. controls (-4.2523, 4.181) and (-2.259, 0.032) .. cycle;
    \draw[shift={(110.23, 598.997)}, xscale=0.7935, yscale=0.7518](0, 0) .. controls (2.1735, 0.9637) and (3.7122, 5.896) .. (2.6215, 9.3515) .. controls (1.5308, 12.807) and (-2.1892, 14.7857) .. (-4.3627, 13.822) .. controls (-6.5362, 12.8583) and (-7.1632, 8.9523) .. (-6.0725, 5.4968) .. controls (-4.9818, 2.0413) and (-2.1735, -0.9637) .. cycle;
    \draw[shift={(-3.581, 559.881)}, xscale=0.7935, yscale=0.7518](0, 0) .. controls (16.6717, 2.95) and (23.0172, -3.5195) .. (28.7709, -11.5327) .. controls (34.5245, -19.546) and (39.6863, -29.103) .. (46.3058, -34.6228) .. controls (52.9253, -40.1427) and (61.0026, -41.6253) .. (68.45, -40.328) .. controls (75.8975, -39.0307) and (82.7151, -34.9533) .. (91.0177, -34.6642) .. controls (99.3203, -34.375) and (109.1078, -37.874) .. (116.3951, -40.1822) .. controls (123.6823, -42.4905) and (128.4693, -43.608) .. (142.8073, -40.625);
    \draw[shift={(108.863, 541.001)}, xscale=0.7935, yscale=0.7518](0, 0) .. controls (-18.564, 0.469) and (-23.1115, 7.181) .. (-23.8052, 13.463) .. controls (-24.499, 19.745) and (-21.339, 25.597) .. (0.378, 30.191);
    \draw[shift={(109.219, 574.958)}, xscale=0.7935, yscale=0.7518](0, 0) .. controls (-18.846, 1.655) and (-25.906, 5.044) .. (-29.5542, 8.6797) .. controls (-33.2023, 12.3153) and (-33.4387, 16.1977) .. (-29.5698, 20.1448) .. controls (-25.701, 24.092) and (-17.727, 28.104) .. (1.274, 31.973);
    \draw[shift={(106.768, 609.388)}, xscale=0.7935, yscale=0.7518](0, 0) .. controls (-21.88, -5.97) and (-38.8606, -3.3755) .. (-51.1255, -2.3059) .. controls (-63.3905, -1.2363) and (-70.9397, -1.6917) .. (-77.4322, -3.6668) .. controls (-83.9246, -5.642) and (-89.3603, -9.137) .. (-94.0658, -14.4012) .. controls (-98.7714, -19.6653) and (-102.747, -26.6987) .. (-107.3767, -32.548) .. controls (-112.0064, -38.3973) and (-117.2902, -43.0627) .. (-122.3721, -45.6188) .. controls (-127.4539, -48.175) and (-132.3338, -48.622) .. (-140.0627, -48.752);
    \draw[shift={(26.512, 571.466)}, xscale=0.7935, yscale=0.7518](0, 0) .. controls (5.1605, -10.224) and (10.4697, -11.218) .. (14.4718, -10.8655) .. controls (18.4738, -10.513) and (21.1687, -8.814) .. (25.1425, -3.604);
    \draw[shift={(50.089, 588.924)}, xscale=0.7935, yscale=0.7518](0, 0) .. controls (8.5741, -5.307) and (14.8107, -3.712) .. (19.4125, -1.5103) .. controls (24.0143, 0.6915) and (26.9813, 3.5) .. (28.4753, 8.025);
    \draw[shift={(56.913, 550.468)}, xscale=0.7935, yscale=0.7518](0, 0) .. controls (4.7424, -6.899) and (8.7385, -7.643) .. (12.9232, -7.4638) .. controls (17.1079, -7.2845) and (21.4813, -6.182) .. (26.4903, -3.055);
    \draw[shift={(28.657, 568.048)}, xscale=0.7935, yscale=0.7518](0, 0) .. controls (7.2705, 5.299) and (14.3878, 3.332) .. (19.3267, -2.674);
    \draw[shift={(60.319, 546.819)}, xscale=0.7935, yscale=0.7518](0, 0) .. controls (2.639, 5.594) and (16.016, 8.462) .. (18.853, -0.045);
    \draw[shift={(53.572, 587.251)}, xscale=0.7935, yscale=0.7518](0, 0) .. controls (2.946, 9.172) and (12.581, 10.542) .. (22.375, 6.666);
    \draw(171.8853, 568.755) .. controls (174.1823, 568.8717) and (177.0347, 572.1063) .. (177.0275, 575.8002) .. controls (177.0203, 579.494) and (174.1537, 583.647) .. (171.8567, 583.5303) .. controls (169.5597, 583.4137) and (167.8323, 579.0273) .. (167.8395, 575.3335) .. controls (167.8467, 571.6397) and (169.5883, 568.6383) .. cycle;
    \draw(201.886, 588.5763) .. controls (203.4853, 589.7727) and (204.7127, 593.5803) .. (203.6448, 595.6337) .. controls (202.577, 597.687) and (199.214, 597.986) .. (197.6147, 596.7897) .. controls (196.0153, 595.5933) and (196.1797, 592.9017) .. (197.2475, 590.8483) .. controls (198.3153, 588.795) and (200.2867, 587.38) .. cycle;
    \draw(197.404, 559.1175) .. controls (199.3973, 558.7833) and (203.0547, 561.3197) .. (203.7047, 564) .. controls (204.3547, 566.6803) and (201.9973, 569.5047) .. (200.004, 569.8388) .. controls (198.0107, 570.173) and (196.3813, 568.017) .. (195.7313, 565.3367) .. controls (195.0813, 562.6563) and (195.4107, 559.4517) .. cycle;
    \draw(171.266, 568.8) .. controls (180.8433, 567.216) and (189.3387, 564.069) .. (196.752, 559.359);
    \draw(201.002, 569.484) .. controls (193.397, 573.218) and (193.16, 580.328) .. (201.886, 588.576);
    \draw(197.453, 596.659) .. controls (188.8543, 589.8343) and (180.1773, 585.4337) .. (171.422, 583.457);
    \draw(287.6224, 576.7666) .. controls (289.2348, 576.6753) and (291.4234, 579.5939) .. (291.3308, 582.7861) .. controls (291.2381, 585.9783) and (288.8641, 589.4439) .. (287.2518, 589.5353) .. controls (285.6394, 589.6266) and (284.7888, 586.3436) .. (284.8814, 583.1514) .. controls (284.9741, 579.9593) and (286.0101, 576.8579) .. cycle;
    \draw(208.4407, 555.5724) .. controls (210.434, 555.2382) and (214.0914, 557.7746) .. (214.7414, 560.4549) .. controls (215.3914, 563.1352) and (213.034, 565.9596) .. (211.0407, 566.2937) .. controls (209.0474, 566.6279) and (207.418, 564.4719) .. (206.768, 561.7916) .. controls (206.118, 559.1112) and (206.4474, 555.9066) .. cycle;
    \draw(276.4045, 518.9438) .. controls (278.7577, 517.8127) and (283.5773, 519.2833) .. (285.3018, 522.3388) .. controls (287.0263, 525.3943) and (285.6557, 530.0347) .. (283.3025, 531.1658) .. controls (280.9493, 532.297) and (277.6137, 529.919) .. (275.8892, 526.8635) .. controls (274.1647, 523.808) and (274.0513, 520.075) .. cycle;
    \draw(208.107, 555.66) .. controls (229.61, 547.082) and (235.793, 535.5495) .. (241.9072, 529.9609) .. controls (248.0213, 524.3723) and (254.0667, 524.7277) .. (259.3486, 524.3478) .. controls (264.6305, 523.968) and (269.149, 522.853) .. (276.404, 518.944);
    \draw(284.613, 530.065) .. controls (272.258, 547.071) and (275.979, 564.478) .. (288.741, 577.115);
    \draw(287.252, 589.535) .. controls (277.154, 587.151) and (270.4195, 591.123) .. (263.6609, 593.0767) .. controls (256.9023, 595.0303) and (250.1197, 594.9657) .. (244.4412, 593.038) .. controls (238.7627, 591.1103) and (234.1883, 587.3197) .. (231.2033, 583.4157) .. controls (228.2183, 579.5117) and (226.8227, 575.4943) .. (224.1483, 572.7037) .. controls (221.474, 569.913) and (217.521, 568.349) .. (210.074, 566.262);
    \draw(237.714, 567.225) .. controls (242.583, 557.27) and (250.561, 557.064) .. (255.484, 559.28);
    \draw(256.171, 579.705) .. controls (260.109, 570.823) and (268.618, 571.578) .. (274.668, 576.56);
    \draw(247.583, 545.153) .. controls (250.645, 536.6) and (254.9885, 535.608) .. (258.8275, 535.2357) .. controls (262.6665, 534.8635) and (266.001, 535.111) .. (268.694, 536.446);
    \draw(240.047, 563.438) .. controls (245.149, 565.824) and (251.867, 563.462) .. (252.603, 558.369);
    \draw(258.897, 575.615) .. controls (261.69, 580.842) and (269.799, 580.7) .. (271.578, 574.502);
    \draw(250.175, 539.929) .. controls (252.028, 545.008) and (257.0045, 544.4385) .. (260.4093, 542.7088) .. controls (263.814, 540.979) and (265.647, 538.089) .. (266.599, 535.653);
    \draw[shift={(315.439, 559.238)}, xscale=1.2159, yscale=1.1218](0, 0) .. controls (1.98, 0.4132) and (3.982, 4.4522) .. (3.9435, 7.9083) .. controls (3.905, 11.3645) and (1.826, 14.2378) .. (-0.154, 13.8247) .. controls (-2.134, 13.4115) and (-4.015, 9.7118) .. (-3.9765, 6.2557) .. controls (-3.938, 2.7995) and (-1.98, -0.4132) .. cycle;
    \draw[shift={(341.023, 574.128)}, xscale=1.2159, yscale=1.1218](0, 0) .. controls (1.6665, 0.4743) and (3.3588, 3.9727) .. (2.8643, 6.7988) .. controls (2.3698, 9.625) and (-0.3115, 11.779) .. (-1.978, 11.3047) .. controls (-3.6445, 10.8303) and (-4.2962, 7.7277) .. (-3.8017, 4.9015) .. controls (-3.3072, 2.0753) and (-1.6665, -0.4743) .. cycle;
    \draw[shift={(341.078, 550.627)}, xscale=1.2159, yscale=1.1218](0, 0) .. controls (1.9443, -0.0173) and (4.195, 3.0327) .. (4.3018, 5.905) .. controls (4.4087, 8.7773) and (2.3717, 11.472) .. (0.4273, 11.4893) .. controls (-1.517, 11.5067) and (-3.3687, 8.8467) .. (-3.4755, 5.9743) .. controls (-3.5823, 3.102) and (-1.9443, 0.0173) .. cycle;
    \draw[shift={(314.61, 559.234)}, xscale=1.2159, yscale=1.1218](0, 0) .. controls (7.5307, -0.696) and (14.3193, -3.0687) .. (20.366, -7.118);
    \draw[shift={(342.226, 563.441)}, xscale=1.2159, yscale=1.1218](0, 0) .. controls (-5.437, 2.1248) and (-7.9239, 4.9363) .. (-5.1613, 9.4105);
    \draw[shift={(314.732, 574.589)}, xscale=1.2159, yscale=1.1218](0, 0) .. controls (9.88, 4.227) and (15.294, 12.278) .. (20.68, 10.894);
    \draw[shift={(334.715, 571.36)}, xscale=1.2159, yscale=1.1218](0, 0) .. controls (1.0724, 1.623) and (2.7056, 2.4282) .. (4.8995, 2.4156);
    \draw(363.5038, 578.7191) .. controls (365.1703, 579.1934) and (366.8627, 582.6918) .. (366.3682, 585.5179) .. controls (365.8737, 588.3441) and (363.1923, 590.4981) .. (361.5258, 590.0238) .. controls (359.8593, 589.5494) and (359.2077, 586.4468) .. (359.7022, 583.6206) .. controls (360.1967, 580.7944) and (361.8373, 578.2448) .. cycle;
    \draw(363.7993, 546.9999) .. controls (365.7437, 546.9826) and (367.9943, 550.0326) .. (368.1012, 552.9049) .. controls (368.208, 555.7773) and (366.171, 558.4719) .. (364.2267, 558.4893) .. controls (362.2823, 558.5066) and (360.4307, 555.8466) .. (360.3238, 552.9743) .. controls (360.217, 550.1019) and (361.855, 547.0173) .. cycle;
    \draw(364.0972, 558.4865) .. controls (377.3657, 559.4955) and (384, 565.3333) .. (384, 576);
    \draw(363.2704, 578.6721) .. controls (371.7568, 579.5574) and (378.2192, 576.2221) .. (382.6575, 568.6662);
    \draw(361.8945, 590.0857) .. controls (376, 592) and (382, 596) .. (389.6667, 598.6667) .. controls (397.3333, 601.3333) and (406.6667, 602.6667) .. (416.3333, 601.3333) .. controls (426, 600) and (436, 596) .. (448.4864, 600.3585);
    \draw(363.7993, 546.9999) .. controls (384, 548) and (396, 536) .. (409.8352, 532.3554) .. controls (423.6704, 528.7108) and (439.3407, 533.4216) .. (449.0714, 530.5666);
    \draw(450.2647, 557.7688) .. controls (452.28, 558.6503) and (454.007, 563.0877) .. (453.1807, 566.0277) .. controls (452.3543, 568.9677) and (448.9747, 570.4103) .. (446.9593, 569.5288) .. controls (444.944, 568.6473) and (444.293, 565.4417) .. (445.1193, 562.5017) .. controls (445.9457, 559.5617) and (448.2493, 556.8873) .. cycle;
    \draw(451.0047, 588.6) .. controls (453.157, 589.1447) and (455.015, 593.2273) .. (454.3455, 596.158) .. controls (453.676, 599.0887) and (450.479, 600.8673) .. (448.3267, 600.3227) .. controls (446.1743, 599.778) and (445.0667, 596.91) .. (445.7362, 593.9793) .. controls (446.4057, 591.0487) and (448.8523, 588.0553) .. cycle;
    \draw(449.1768, 530.5665) .. controls (451.5053, 530.6297) and (454.5697, 535.0933) .. (454.6577, 538.8647) .. controls (454.7457, 542.636) and (451.8573, 545.715) .. (449.5288, 545.6518) .. controls (447.2003, 545.5887) and (445.4317, 542.3833) .. (445.3437, 538.612) .. controls (445.2557, 534.8407) and (446.8483, 530.5033) .. cycle;
    \draw(449.8868, 545.6375) .. controls (438.1103, 547.5884) and (438.3818, 555.5231) .. (449.485, 557.5983);
    \draw(448.8875, 569.7407) .. controls (432.4397, 574.0518) and (436.7916, 584.2531) .. (451.0778, 588.6199);
    \draw(403.9729, 557.9209) .. controls (406.0039, 550.2199) and (412.3739, 547.1989) .. (420.2019, 550.5909);
    \draw(406.0009, 553.4089) .. controls (408.3569, 557.1839) and (411.0044, 557.2559) .. (413.5694, 556.3269) .. controls (416.1344, 555.3979) and (418.6169, 553.4679) .. (418.3269, 549.8909);
    \draw(400.0543, 583.9956) .. controls (403.1193, 575.4236) and (413.5173, 577.9006) .. (419.6883, 582.7376);
    \draw(402.0833, 580.6506) .. controls (406.7523, 585.4276) and (412.9443, 585.5386) .. (417.1593, 581.0286);
    \draw(518.0105, 579.1541) .. controls (519.677, 579.6284) and (521.3694, 583.1268) .. (520.8749, 585.9529) .. controls (520.3804, 588.7791) and (517.699, 590.9331) .. (516.0325, 590.4588) .. controls (514.366, 589.9844) and (513.7144, 586.8818) .. (514.2089, 584.0556) .. controls (514.7034, 581.2294) and (516.344, 578.6798) .. cycle;
    \draw(518.3067, 547.4349) .. controls (520.2511, 547.4176) and (522.5017, 550.4676) .. (522.6086, 553.3399) .. controls (522.7154, 556.2123) and (520.6784, 558.9069) .. (518.7341, 558.9243) .. controls (516.7897, 558.9416) and (514.9381, 556.2816) .. (514.8312, 553.4093) .. controls (514.7244, 550.5369) and (516.3624, 547.4523) .. cycle;
    \draw(597.1768, 522.5665) .. controls (599.5053, 522.6297) and (602.5697, 527.0933) .. (602.6577, 530.8647) .. controls (602.7457, 534.636) and (599.8573, 537.715) .. (597.5288, 537.6518) .. controls (595.2003, 537.5887) and (593.4317, 534.3833) .. (593.3437, 530.612) .. controls (593.2557, 526.8407) and (594.8483, 522.5033) .. cycle;
    \draw(582.2647, 593.7688) .. controls (584.28, 594.6503) and (586.007, 599.0877) .. (585.1807, 602.0277) .. controls (584.3543, 604.9677) and (580.9747, 606.4103) .. (578.9593, 605.5288) .. controls (576.944, 604.6473) and (576.293, 601.4417) .. (577.1193, 598.5017) .. controls (577.9457, 595.5617) and (580.2493, 592.8873) .. cycle;
    \draw(575.0047, 620.6) .. controls (577.157, 621.1447) and (579.015, 625.2273) .. (578.3455, 628.158) .. controls (577.676, 631.0887) and (574.479, 632.8673) .. (572.3267, 632.3227) .. controls (570.1743, 631.778) and (569.0667, 628.91) .. (569.7362, 625.9793) .. controls (570.4057, 623.0487) and (572.8523, 620.0553) .. cycle;
    \draw(518.062, 579.17) .. controls (533.282, 584.407) and (545.7265, 581.608) .. (556.7675, 582.0988) .. controls (567.8085, 582.5895) and (577.446, 586.37) .. (582.754, 594.055);
    \draw(515.676, 590.309) .. controls (530.103, 601.369) and (532.871, 609.8175) .. (539.9533, 616.7865) .. controls (547.0355, 623.7555) and (558.432, 629.245) .. (572.503, 632.362);
    \draw(517.23, 547.762) .. controls (539.808, 533.732) and (541.798, 526.495) .. (544.7, 522.1308) .. controls (547.602, 517.7667) and (551.416, 516.2753) .. (556.919, 516.927) .. controls (562.422, 517.5787) and (569.614, 520.3733) .. (575.886, 522.5804) .. controls (582.158, 524.7875) and (587.51, 526.407) .. (596.605, 522.641);
    \draw(598.819, 537.396) .. controls (584.434, 544.538) and (580.501, 554.7315) .. (575.0778, 560.5951) .. controls (569.6547, 566.4587) and (562.7413, 567.9923) .. (556.2702, 567.2203) .. controls (549.799, 566.4483) and (543.77, 563.3707) .. (538.0385, 561.4356) .. controls (532.307, 559.5005) and (526.873, 558.708) .. (518.734, 558.924);
    \draw(542.323, 603.983) .. controls (545.388, 595.411) and (555.786, 597.888) .. (561.957, 602.725);
    \draw(550.916, 537.951) .. controls (552.947, 530.25) and (559.317, 527.229) .. (567.145, 530.621);
    \draw(555.856, 554.313) .. controls (559.485, 547.254) and (566.439, 546.117) .. (573.639, 550.008);
    \draw(579.77, 605.752) .. controls (571.168, 606.112) and (568.8015, 608.4225) .. (568.2763, 611.5123) .. controls (567.751, 614.602) and (569.067, 618.471) .. (575.179, 620.652);
    \draw(544.352, 600.638) .. controls (549.021, 605.415) and (555.213, 605.526) .. (559.428, 601.016);
    \draw(558.619, 550.58) .. controls (561.289, 555.18) and (564.603, 554.696) .. (566.8698, 553.7312) .. controls (569.1365, 552.7665) and (570.356, 551.321) .. (571.675, 549.069);
    \draw(552.944, 533.439) .. controls (555.3, 537.214) and (557.9475, 537.286) .. (560.5125, 536.357) .. controls (563.0775, 535.428) and (565.56, 533.498) .. (565.27, 529.921);
    \node at (56, 620) {\small$N_{g,n}$};
    \node at (141, 570) {$\displaystyle= \sum_{m=2}^n$};
    \node at (186, 577) {\footnotesize$B$};
    \node at (252, 620) {\small$N_{g,n-1}$};
    \node at (298, 570) {$+$};
    \node at (327, 568) {\footnotesize$C$};
    \node at (352, 570) {$\Biggl($};
    \node at (408, 620) {\small$N_{g-1,n+1}$};
    \node at (552, 645) {\small$N_{h,1+J}$};
    \node at (552, 505) {\small$N_{h',1+J'}$};
    \node at (612, 570) {$\Biggr)$};
    \node at (484, 555) {$\displaystyle + \sum_{\substack{h,h' \\ J,J'}}$};
\end{tikzpicture}
\caption{
    \textbf{A Pictorial Representation of the Discrete Recursion:} 
    The pruned correlators 
    $\langle \prod_{i=1}^n \frac{1}{b_i} \! \NTr{M^{b_i}} \rangle_{g,\cc}$ 
    define a discrete notion of volume of the moduli space of Riemann surfaces, denoted 
    $N_{g,n}(b_1,\ldots,b_n)$. 
    They satisfy a discrete recursion relation that parallels Mirzakhani's formula for the Weil--Petersson volumes. 
    The recursion kernels $B$ and $C$ can be computed directly from the matrix model potential.
}
\label{fig:disc:rec}
\end{figure}
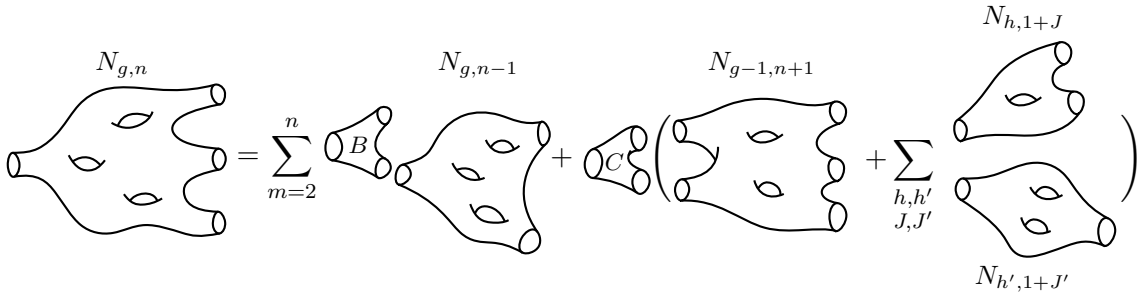

In this work, we derive a recursion relation directly for the connected correlators of \emph{pruned traces} in a generic one-cut matrix model:
\begin{equation}
    N_{g,n}(b_1,\ldots,b_n)
    \coloneqq
    \left\langle \prod_{i=1}^n \frac{1}{b_i} \NTr{M^{b_i}} \right\rangle_{\!\!g,\cc},
    \qquad\qquad
    b_i \in \Z_+ .
\end{equation}
Pruning can be viewed as a planar analog of normal ordering: all planar one-point functions vanish, though not necessarily the higher-genus ones. In a Feynman diagram expansion, this normal ordering corresponds to deleting all petals from the diagrams. Pruned correlators were first introduced by Norbury and Scott \cite{NS13} in the abstract setting of topological recursion, independently of any matrix model realization.

It is the correlator of pruned traces, and \textit{not} the standard ones, which satisfies a Mirzakhani-like recursion relation. In the specific context of the double-scaled JT gravity matrix integral, they play the same role as the Weil--Petersson volumes. Our work makes a more general statement: it goes beyond the specific choice of matrix model and, most importantly, breaks away from the usual double-scaling limit.

While our recursion ultimately follows from the Eynard--Orantin topological recursion for matrix model correlators \cite{EO07}, it possesses two fundamental features that distinguish it. First, it is inherently discrete, replacing the traditional residue calculus with sums over the powers of the matrices appearing in the traces. Second, it makes manifest the geometric content of the recursion, in direct parallel with the Mirzakhani recursion satisfied by the Weil--Petersson volumes, as illustrated in \cref{fig:disc:rec}. The first main result of this paper is:

\begin{introthm}\label{thm:A}
    For $2g-2+n > 1$, the pruned correlators in a generic one-cut matrix model satisfy the recursion relation 
    \begin{multline} \label{eq:disc:rec}
    	N_{g,n}(b_1,\dots,b_n) 
    	=
    	\sum_{m=2}^n \sum_{\beta > 0}
    		\beta \, B(b_1,b_m,\beta)
    		N_{g,n-1}(\beta,b_2,\dots, \widehat{b_m},\dots, b_n) \\
    	+ \frac{1}{2} \sum_{\beta,\beta' > 0}
    		\beta \beta' \, C(b_1,\beta,\beta') \Bigg(
    			N_{g-1,n+1}(\beta,\beta',b_2,\dots,b_n) \\
    			+
    			\sum_{\substack{ h + h' = g \\ J \sqcup J' = \{2,\dots,n\} }}^{\textup{stable}}
    				N_{h,1+|J|}(\beta,b_{J}) N_{h',1+|J'|}(\beta',b_{J'})
    	\Bigg),
    \end{multline}
    where a caret as in $\widehat{b_m}$ denotes omission. The recursion kernels $B$ and $C$ can be expressed in terms of a single building-block function $H$:
    \begin{equation}
    \begin{aligned}
    	B(b,b',\beta)
    	& \coloneqq
    	\frac{1}{2b} \Bigl(
    		H(b + b' - \beta)
    		-
    		H(-b - b' - \beta) \\
    		& \qquad\qquad\qquad
    		+
    		H(b - b' - \beta)
    		-
    		H(-b + b' - \beta)
    	\Bigr), \\
    	C(b,\beta,\beta')
    	& \coloneqq
    	\frac{1}{b} \Bigl(
    		H(b - \beta - \beta')
    		-
    		H(-b - \beta - \beta')
    	\Bigr).
    \end{aligned}
    \end{equation}
    The function $H(\ell)$ is \emph{explicitly} determined from the matrix-model potential $V$, the planar resolvent $W_{0,1}$, and the uniformization of the spectral curve $x(z) = \gamma(z + 1/z) + \delta$, see \cref{eq:resolvent,eq:SC}, via contour integration around cut in the non-physical sheet:
    \begin{equation}\label{eq:H}
        H(\ell)
        \coloneqq
        \frac{1}{2\pi\iu} \oint_{\Gamma} \frac{z^{-2-\ell}}{y(z) \, x'(z)} \, dz,
        \qquad\quad
        y = \frac{1}{2} V' - W_{0,1}.
        \qquad\quad
        \begin{tikzpicture}[baseline={(0, 0cm)}]
            \draw[thick,decoration={markings,mark=at position 0.33 with {\arrow{>}}},postaction={decorate}] (1,0) arc (0:180:1cm);
            \draw[thick,decoration={markings,mark=at position 0.33 with {\arrow{>}}},postaction={decorate}] (-1,0) arc (-180:0:1cm);
            \fill[white] (1,0) circle (.2cm);
            \fill[white] (-1,0) circle (.2cm);
            \draw[->] (-1.5,0) -- (1.5,0);
            \draw[->] (0,-1.3) -- (0,1.3);
            \node at (1,0) {\tiny$\bullet$};
            \node at (1,0) [below right] {\tiny$+1$};
            \node at (-1,0) {\tiny$\bullet$};
            \node at (-1,0) [below left] {\tiny$-1$};
            \node at (1,1) {$\Gamma$};
            \clip (0,0) circle (1cm);
            \draw[thick] (1,0) circle (.2cm);
            \draw[thick] (-1,0) circle (.2cm);
        \end{tikzpicture}
    \end{equation}
    Together with the genus-$0$, $3$-point correlator $N_{0,3}$ and the genus-$1$, $1$-point correlator $N_{1,1}$, the recursion uniquely determines all $N_{g,n}$.
\end{introthm}

This geometric version of topological recursion therefore suggests that the pruned correlators can be viewed as providing a discrete notion of volumes of the moduli space of Riemann surfaces. In this picture, they would compute a weighted count of Riemann surfaces with integer-length boundaries. The discrete boundary lengths correspond to the powers appearing in the traces of the dual matrix model correlators.

We will make this picture precise using a well-established bijection between metrized ribbon graphs and points on the decorated moduli space of Riemann surfaces \cite{Str84,Kon92}. In a nutshell, we expand the correlators in terms of Feynman diagrams, and map each diagram to a point on the moduli space. This notion of discreteness exists at each order in $1/\ms{N}$, and we argue it persists to any finite order in perturbation theory in the interaction coupling---see \cref{sec:matrix}.

\subsection{The BMN-like limit and its Airy universality}
Our recursion relation also reveals the existence of a particularly interesting and universal limit of pruned correlators. It reflects the well-known Airy universality that governs the square-root vanishing of the matrix eigenvalue distribution near its endpoints \cite{TW94,Eyn16,BrezinHikamiAiry,BrezinZee}.

In essence, this limit consists in taking the powers of the matrices inside each trace to be very large. This closely parallels the Berenstein--Maldacena--Nastase (BMN) limit in AdS/CFT \cite{BMN02}. Geometrically, it corresponds to sending the boundaries of the dual Riemann surfaces to infinity. Using the recursion relation, we will demonstrate that for any one-cut matrix model, the pruned correlators converge in this limit to the Kontsevich volumes, independently of the potential. 

\begin{introthm}\label{thm:B}
    For $2g-2+n > 0$, the pruned correlators in a generic one-cut matrix model satisfy
    \begin{equation}\label{eq:BMN}
        \lim_{\substack{t \rightarrow 0^+ \\ t^{-1}(L_1 + \cdots + L_n) \\ \textup{ even/odd}}} \!\!
            t^{2(3g-3+n)}
            \left\langle \prod_{i=1}^n \frac{1}{L_i/t} \NTr{M^{L_i/t}} \right\rangle_{\!\!g,\cc}
        =
        \bigl( \sigma_+^{2g-2+n} \pm \sigma_-^{2g-2+n} \bigr)
        V^{\textup{Kon}}_{g,n}(L_1,\ldots,L_n) .
    \end{equation}
    Here $V^{\textup{Kon}}_{g,n}$ are the Kontsevich volumes of the moduli space of Riemann surfaces and $\sigma_{\pm}$ are scaling constants that depend on the matrix model spectral curve near endpoints of the eigenvalue distribution.
\end{introthm}

In the BMN-like limit, the discrete recursion universally goes over to a continuous one, known to govern the celebrated Kontsevich volumes \cite{BCSW12,ABCGLW}. For the precise definition of these volumes and their expression as integrals over $\Mbar_{g,n}$, see \cref{sec:volumes}. The scaling constants originate from the two endpoints of the large $\ms{N}$ eigenvalue distribution, which characterize the one-cut phase of the underlying matrix model.

To heuristically motivate this result from a diagrammatic point of view, note that as the powers $b_i$ of the matrices grow, the contributing Feynman diagrams become dominated by contractions between external legs, thus washing away the details of the underlying potential. Taking a moduli space vantage point, the large number of Wick-contractions effectively translates to filling the moduli space with more and more discrete points, see \cref{fig:BMN}. In this limit, the dependence on the matrix model potential disappears, and the pruned correlators converge to the continuous Kontsevich volumes.

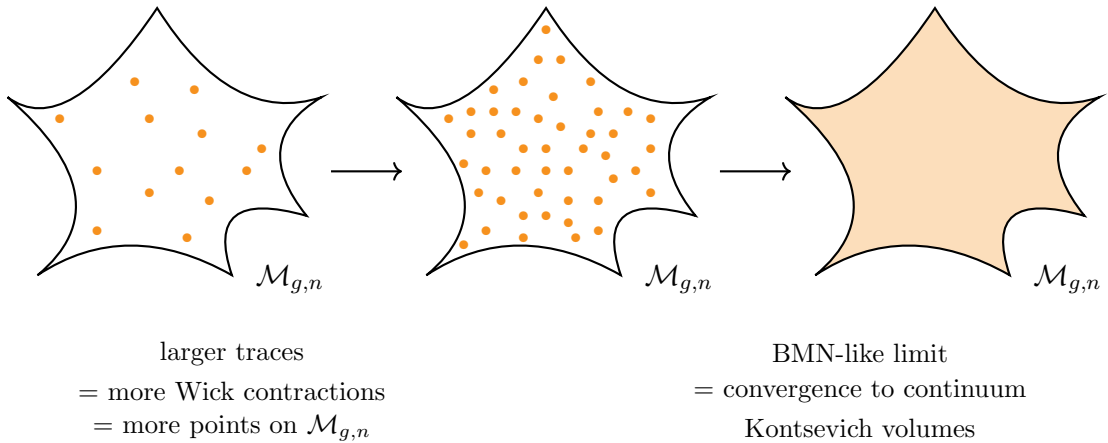
\begin{figure}[t]
    \centering
    \tikzset{every picture/.style=thick}
    \begin{tikzpicture}[x=1pt,y=1pt,scale=.7]
        \draw(24, 392) .. controls (56, 413.3333) and (90.6667, 413.3333) .. (128, 392) .. controls (117.3333, 424) and (130.6667, 434.6667) .. (168, 424) .. controls (146.6667, 450.6667) and (149.3333, 472) .. (176, 488) .. controls (144, 477.3333) and (114.6667, 493.3333) .. (88, 536) .. controls (56, 488) and (29.3333, 472) .. (8, 488) .. controls (50.6667, 450.6667) and (56, 418.6667) .. (24, 392) -- cycle;
        \draw(232, 392) .. controls (264, 413.3333) and (298.6667, 413.3333) .. (336, 392) .. controls (325.3333, 424) and (338.6667, 434.6667) .. (376, 424) .. controls (354.6667, 450.6667) and (357.3333, 472) .. (384, 488) .. controls (352, 477.3333) and (322.6667, 493.3333) .. (296, 536) .. controls (264, 488) and (237.3333, 472) .. (216, 488) .. controls (258.6667, 450.6667) and (264, 418.6667) .. (232, 392) -- cycle;
        \filldraw[fill=BurntOrange,fill opacity=.3](440, 392) .. controls (472, 413.3333) and (506.6667, 413.3333) .. (544, 392) .. controls (533.3333, 424) and (546.6667, 434.6667) .. (584, 424) .. controls (562.6667, 450.6667) and (565.3333, 472) .. (592, 488) .. controls (560, 477.3333) and (530.6667, 493.3333) .. (504, 536) .. controls (472, 488) and (445.3333, 472) .. (424, 488) .. controls (466.6667, 450.6667) and (472, 418.6667) .. (440, 392) -- cycle;
        \node[BurntOrange] at (56, 416) {\scriptsize$\bullet$};
        \node[BurntOrange] at (56, 448) {\scriptsize$\bullet$};
        \node[BurntOrange] at (36, 476) {\scriptsize$\bullet$};
        \node[BurntOrange] at (76, 496) {\scriptsize$\bullet$};
        \node[BurntOrange] at (84, 476) {\scriptsize$\bullet$};
        \node[BurntOrange] at (112, 468) {\scriptsize$\bullet$};
        \node[BurntOrange] at (108, 492) {\scriptsize$\bullet$};
        \node[BurntOrange] at (144, 460) {\scriptsize$\bullet$};
        \node[BurntOrange] at (136, 448) {\scriptsize$\bullet$};
        \node[BurntOrange] at (116, 432) {\scriptsize$\bullet$};
        \node[BurntOrange] at (104, 412) {\scriptsize$\bullet$};
        \node[BurntOrange] at (100, 448) {\scriptsize$\bullet$};
        \node[BurntOrange] at (84, 436) {\scriptsize$\bullet$};
        \node[BurntOrange] at (264, 416) {\scriptsize$\bullet$};
        \node[BurntOrange] at (264, 448) {\scriptsize$\bullet$};
        \node[BurntOrange] at (244, 476) {\scriptsize$\bullet$};
        \node[BurntOrange] at (284, 496) {\scriptsize$\bullet$};
        \node[BurntOrange] at (292, 476) {\scriptsize$\bullet$};
        \node[BurntOrange] at (320, 468) {\scriptsize$\bullet$};
        \node[BurntOrange] at (320, 496) {\scriptsize$\bullet$};
        \node[BurntOrange] at (352, 460) {\scriptsize$\bullet$};
        \node[BurntOrange] at (344, 448) {\scriptsize$\bullet$};
        \node[BurntOrange] at (324, 432) {\scriptsize$\bullet$};
        \node[BurntOrange] at (312, 412) {\scriptsize$\bullet$};
        \node[BurntOrange] at (308, 448) {\scriptsize$\bullet$};
        \node[BurntOrange] at (292, 436) {\scriptsize$\bullet$};
        \node[BurntOrange] at (284, 412) {\scriptsize$\bullet$};
        \node[BurntOrange] at (284, 424) {\scriptsize$\bullet$};
        \node[BurntOrange] at (272, 432) {\scriptsize$\bullet$};
        \node[BurntOrange] at (260, 436) {\scriptsize$\bullet$};
        \node[BurntOrange] at (252, 452) {\scriptsize$\bullet$};
        \node[BurntOrange] at (252, 408) {\scriptsize$\bullet$};
        \node[BurntOrange] at (256, 468) {\scriptsize$\bullet$};
        \node[BurntOrange] at (256, 480) {\scriptsize$\bullet$};
        \node[BurntOrange] at (268, 480) {\scriptsize$\bullet$};
        \node[BurntOrange] at (268, 492) {\scriptsize$\bullet$};
        \node[BurntOrange] at (280, 480) {\scriptsize$\bullet$};
        \node[BurntOrange] at (292, 508) {\scriptsize$\bullet$};
        \node[BurntOrange] at (304, 508) {\scriptsize$\bullet$};
        \node[BurntOrange] at (300, 488) {\scriptsize$\bullet$};
        \node[BurntOrange] at (296, 524) {\scriptsize$\bullet$};
        \node[BurntOrange] at (304, 472) {\scriptsize$\bullet$};
        \node[BurntOrange] at (296, 460) {\scriptsize$\bullet$};
        \node[BurntOrange] at (296, 448) {\scriptsize$\bullet$};
        \node[BurntOrange] at (280, 448) {\scriptsize$\bullet$};
        \node[BurntOrange] at (284, 460) {\scriptsize$\bullet$};
        \node[BurntOrange] at (272, 468) {\scriptsize$\bullet$};
        \node[BurntOrange] at (308, 432) {\scriptsize$\bullet$};
        \node[BurntOrange] at (296, 424) {\scriptsize$\bullet$};
        \node[BurntOrange] at (308, 420) {\scriptsize$\bullet$};
        \node[BurntOrange] at (332, 444) {\scriptsize$\bullet$};
        \node[BurntOrange] at (328, 456) {\scriptsize$\bullet$};
        \node[BurntOrange] at (316, 460) {\scriptsize$\bullet$};
        \node[BurntOrange] at (332, 468) {\scriptsize$\bullet$};
        \node[BurntOrange] at (340, 480) {\scriptsize$\bullet$};
        \node[BurntOrange] at (324, 480) {\scriptsize$\bullet$};
        \node[BurntOrange] at (352, 476) {\scriptsize$\bullet$};
        \node[BurntOrange] at (352, 436) {\scriptsize$\bullet$};
        \node[BurntOrange] at (324, 416) {\scriptsize$\bullet$};
        \node at (159, 390) {$\mathcal{M}_{g,n}$};
        \node at (367, 390) {$\mathcal{M}_{g,n}$};
        \node at (575, 390) {$\mathcal{M}_{g,n}$};

        \draw[->] (181, 448) -- (219, 448);
        \draw[->] (389, 448) -- (427, 448);
        
        \node at (128, 350) {\small{larger traces}};
        \node at (128, 330) {\small{= more Wick contractions}};
        \node at (128, 310) {\small{= more points on $\M_{g,n}$}};
        
        \node at (464, 350) {\small{BMN-like limit}};
        \node at (464, 330) {\small{= convergence to continuum}};
        \node at (464, 310) {\small{Kontsevich volumes}};
    \end{tikzpicture}
    \caption{\textbf{A Moduli Space Perspective on the BMN-like Limit:}  In the limit of large traces, many more Wick contractions are possible. Since each such Feynman diagram maps onto one point on $\M_{g,n}$, more and more points populate the moduli space. Our recursion relation proves that, generically, the $N_{g,n}$ converge to the well-known continuum Kontsevich volumes, $V^{\textup{Kon}}_{g,n}$.}
    \label{fig:BMN}
\end{figure}

This result was previously established by Norbury and Scott within the framework of spectral curve topological recursion \cite{NS13}, where the Airy correlators naturally appear in this limit. Our approach provides an independent derivation based entirely on the discrete recursion of \cref{thm:A}.

\subsection{DSSYK matrix correlators as discrete \texorpdfstring{$q$}{q}-Weil--Petersson volumes}
The third main result of this paper pertains to the pruned matrix correlators in the ETH-matrix description \cite{SonnerETHMatrix} of double-scaled SYK (DSSYK), introduced in \cite{JKMS23}. We will refer to this model as the DSSYK matrix integral, which is not double-scaled and is studied in the conventional 't~Hooft limit. It plays the same role to DSSYK as the matrix integral of SSS \cite{SSSJTmatrix} plays to the usual SYK model. 

In \cite{Oku23}, Okuyama explicitly computed several low-order correlators and noticed their striking similarity to Weil--Petersson volumes. He conjectured that a particular \textit{combined} limit, where the power of the matrices is sent to infinity while the DSSYK double-scaling parameter $\lambda$ is sent to zero, would precisely recover these Weil--Petersson volumes. This limit is more involved than the BMN-like limit of the previous section, since the underlying matrix model potential is tuned simultaneously. We prove his conjecture using the discrete recursion outlined above. Since the underlying potential is even, only the cases where the matrix powers sum to an even integer are non-trivial.

\begin{introthm}[Okuyama's conjecture]\label{thm:C}
    For $2g-2+n > 0$, the DSSYK correlators satisfy
    \begin{equation} \label{eq:OkConj}
        \lim_{\lambda \rightarrow 0^+} \
        (2 \, (q)_{\infty}^3)^{2g-2+n}
        \lambda^{2(3g-3+n)} 
        \left\langle 
            \prod_{i=1}^n 
            \frac{1}{L_i/\lambda} 
            \NTr{M^{L_i/\lambda}} 
        \right\rangle_{\!\!g,\cc}^{\!\!\textup{DSSYK}}
        =
        2 \cdot V^{\textup{WP}}_{g,n}(L_1,\ldots,L_n)
    \end{equation}
    whenever the sum of $L_i/\lambda \in \mathbb{Z}_+$ is even. Here $q = e^{-\lambda}$ is the double-scaling parameter of the underlying DSSYK model, $(q)_{\infty} = \prod_{k \ge 1} (1-q^k)$, and $V^{\textup{WP}}_{g,n}$ are the Weil--Petersson volumes of the moduli space of Riemann surfaces.
\end{introthm}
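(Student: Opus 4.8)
The plan is to run the discrete recursion of \cref{thm:A} for the DSSYK matrix integral and to take the combined limit $\lambda \to 0^+$, $b_i = L_i/\lambda$ \emph{directly inside} that recursion, showing that it degenerates to Mirzakhani's continuous recursion for the Weil--Petersson volumes. Unlike \cref{thm:B}, this is not a statement about a fixed spectral curve: the DSSYK curve itself depends on $\lambda$, so one is genuinely double-scaling both the recursion kernels and their arguments. It is exactly this simultaneous tuning of the potential that promotes the Kontsevich volumes of \cref{thm:B} to the Weil--Petersson ones---the additional data being the $2\pi^2\kappa_1$ class that distinguishes $V^{\textup{WP}}_{g,n}$ from $V^{\textup{Kon}}_{g,n}$.

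The first step is to write down the building block $H = H^{\textup{DSSYK}}_\lambda$ for the DSSYK model. The leading spectral density and disk amplitude of the ETH matrix integral of \cite{JKMS22} are expressed through $q$-Pochhammer symbols such as $(q)_\infty$ and $(e^{\pm 2\iu\theta};q)_\infty$ with $q = e^{-\lambda}$ (the $q$-deformed semicircle); feeding this spectral curve into \cref{eq:building:block} yields an explicit, $q$-dependent formula for $H^{\textup{DSSYK}}_\lambda$. The key analytic input is then its asymptotics in the regime where the argument grows like $\ell/\lambda$ while $q \to 1$. This is the ``conformal'' (or ``crystallization'') limit of the relevant $q$-series, controlled by quantum-dilogarithm asymptotics such as $\log(q)_\infty = -\frac{\pi^2}{6\lambda} + \frac{1}{2}\log\frac{2\pi}{\lambda} + \bigO(\lambda)$; the $1/\lambda$ singularities are precisely what assemble the various factors of $\pi^2$, and in this limit one recovers the JT-gravity spectral curve $y \propto \sinh(2\pi\sqrt{\,\cdot\,})$ that underlies the Weil--Petersson volumes.

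Given these asymptotics, I would establish three ingredients and then induct on $2g-2+n$. The base of the induction is the pair of seed correlators $N^{\textup{DSSYK}}_{0,3}$ and $N^{\textup{DSSYK}}_{1,1}$, which fall outside the range of \cref{thm:A}: a direct evaluation (in the spirit of \cite{Oku23}) shows that, after the normalization of \cref{eq:OkConj}, they converge to $2\,V^{\textup{WP}}_{0,3} = 2$ and $2\,V^{\textup{WP}}_{1,1}$. The second ingredient is the convergence of the kernels: writing $\beta = \ell/\lambda$ so that $\sum_{\beta>0}$ is a Riemann sum of mesh $\lambda$, one shows that the rescaled kernels $(q)_\infty^3\, B^{\textup{DSSYK}}_\lambda(L_1/\lambda,L_m/\lambda,\ell/\lambda)$ and $(q)_\infty^3\, C^{\textup{DSSYK}}_\lambda(L_1/\lambda,\ell/\lambda,\ell'/\lambda)$ converge, up to explicit constants, to the two kernels of Mirzakhani's recursion; the powers of $\lambda$ and of $2(q)_\infty^3$ in \cref{eq:OkConj} are then forced by the requirement that they be multiplicative under the gluings of the recursion, i.e.\ by additivity of $2g-2+n$ and of $3g-3+n = \dim_{\C}\Mbar_{g,n}$, while the overall factor of $2$ arises---just as in \cref{thm:B}---from the two endpoints of the eigenvalue distribution, near which the large-trace recursion localizes and splits into two decoupled copies of Mirzakhani's. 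The third ingredient is an exchange-of-limits statement: Mirzakhani's kernels decay exponentially in $\ell$, and the lower correlators grow only polynomially in the $L_i$ (as one verifies inductively), so dominated convergence lets one pass the limit through the infinite sums over $\beta,\beta'$. Combining these, the normalized recursion \cref{eq:disc:rec} converges term by term to Mirzakhani's recursion for $V^{\textup{WP}}_{g,n}$, which completes the induction.

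The principal obstacle is the uniform control of $H^{\textup{DSSYK}}_\lambda$ in the combined limit (the second half of the second step): one needs the $q \to 1$ asymptotics of the $q$-Pochhammer factors not merely pointwise but uniformly over the full range of arguments $\ell \in \lambda\Z_{+}$, with relative errors $o(1)$ after the $\lambda$- and $(q)_\infty$-rescalings---both to justify the interchange of limit and summation and to ensure that the $1/\lambda$ poles of $\log(q)_\infty$ organize into exactly the combination $2\pi^2\kappa_1$ rather than some spurious additive constant. Securing these uniform error estimates, together with the exponential decay in $\ell$ inherited from the $q$-deformed spectral edge, is the technical heart of the argument; the remainder is bookkeeping of powers of $\lambda$ and $(q)_\infty$ via the Euler characteristic, exactly as in the proof of \cref{thm:B}.
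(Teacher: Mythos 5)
Your skeleton coincides with the paper's: specialize the discrete recursion of \cref{thm:A} to the DSSYK curve, show the rescaled kernels converge to Mirzakhani's hyperbolic kernels, check the $(0,3)$ and $(1,1)$ seeds, and induct on $2g-2+n$ with a Riemann-sum-to-integral step as in \cref{sec:BMN}. But the heart of the theorem---actually producing $H^{\textup{hyp}}(\ell)=2\log(1+e^{\ell/2})$ out of the DSSYK spectral curve---is asserted rather than derived, and the mechanism you propose for it would not deliver it. In the paper this step is done by first extracting a closed form for the building block: the partial-fraction decomposition of $1/Q_q$ (with zeros at $z=\pm q^{k/2}$) is fed into \cref{eq:building:block}, and a nontrivial cancellation of the ramp ($F$-)part, proved via the Jacobi triple product in \cref{app:cancellation}, leaves $H_q(\ell)=\frac{2}{(q)_{\infty}^{3}}\sum_{k\ge1}(-1)^{k+1}q^{k(k+1)/2}\,\frac{q^{-k\ell/2}}{1-q^k}$ as in \cref{eq:H:qWP}. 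Since $q^{-k\ell/(2\lambda)}=e^{k\ell/2}$ exactly and $\lambda/(1-q^k)\to 1/k$, the rescaled limit $(q)_{\infty}^{3}\lambda\,H_q(\ell/\lambda)\to 2\log(1+e^{\ell/2})$ is then termwise elementary; no $q\to1$ modular or quantum-dilogarithm asymptotics are needed anywhere.

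By contrast, you want the $1/\lambda$ pole of $\log(q)_\infty$ to ``assemble'' the factors of $\pi^2$, i.e.\ the $2\pi^2\kappa_1$ that distinguishes $V^{\textup{WP}}$ from $V^{\textup{Kon}}$. That attribution is wrong in this normalization: every occurrence of $(q)_\infty$ in $H_q$, $A_q$ and $D_q$ is an explicit prefactor $1/(q)_\infty^{3}$ that cancels \emph{identically} against the factor $(2(q)_\infty^{3})^{2g-2+n}$ on the left of \cref{eq:OkConj} (one per kernel application or initial datum, counted by the Euler characteristic), so the asymptotics of $(q)_\infty$ never enter and contribute no $\pi^2$. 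The $\pi^2$'s come instead from $\zeta_q(2)\sim\zeta(2)/\lambda^2$ in the $(1,1)$ datum $D_q$ of \cref{eq:ABCD:qWP} and from the shape of the limiting kernel $2\log(1+e^{\ell/2})$ itself; routing them through $\log(q)_\infty\sim-\pi^2/(6\lambda)$ would generate exponentially large and small factors $e^{\pm\pi^2/(2\lambda)}$ whose exact cancellation your asymptotic error control cannot by itself guarantee. Relatedly, the overall factor of $2$ and the evenness hypothesis on $\sum_i L_i/\lambda$ are not produced by ``two decoupled copies at the spectral edges'' (that is only the heuristic offered for \cref{thm:B}); in the actual induction they come from the parity constraints forcing the sums over $\beta,\beta'$ to run over half of the rescaled lattice. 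Your base-case and dominated-convergence steps are fine, but without the closed form of $H_q$ (or an equivalent derivation of the hyperbolic kernel) the central claim of kernel convergence remains unsupported.
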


This is the precise sense in which the DSSYK matrix integral correlators furnish a discrete, $q$-analog of the Weil--Petersson volumes. 
For the definition of the Weil--Petersson volumes and their expression as integrals over $\Mbar_{g,n}$, see \cref{sec:volumes}; For more details on the DSSYK matrix model, see \cref{sec:DSSYK:rec}.

Our recursion relation is particularly suited to this combined limit. While the spectral curve of the DSSYK matrix integral is relatively complicated, being expressed in terms of a Jacobi theta function, the building-block function of the recursion kernel for the DSSYK model takes a remarkably simple form:
\begin{equation}
	H_q(b)
	=
	\frac{2}{(q)_{\infty}^3} \sum_{k \ge 1}
		(-1)^{k+1} q^{\frac{k(k+1)}{2}}
		\frac{q^{-\frac{kb}{2}}}{1 - q^k},
    \qquad
    b \in \Z.
\end{equation}
This series furnishes a $q$-analog of 
\begin{equation}
    H(\ell) = 2 \log(1 + e^{\ell/2}),
    \qquad
    \ell \in \R,
\end{equation}
the building-block function appearing in the recursion relation for the continuum Weil--Petersson volumes. From the explicit expression, we immediately see that $H_q$ computed from the DSSYK matrix integral reduces to the continuum version in the combined limit:
\begin{equation}
	\lim_{\lambda \to 0^+} \,
	   (q)_{\infty}^3 \, \lambda \,
       H_q\Bigl( \frac{\ell}{\lambda} \Bigr)
	=
	2 \sum_{k \ge 1} \frac{(-1)^{k+1}}{k} e^{\ell k/2}
	=
	2 \log(1 + e^{\ell/2}) 
	= H(\ell) .
\end{equation}
Together with the fact that the base cases of our recursion relation (corresponding to the topologies of a pair of pants and a one-holed torus) also flow to their continuum Weil--Petersson counterparts, this establishes Okuyama's conjecture for all $g$ and $n$.

\paragraph{Note added.} The posting of this work was coordinated with the authors of \cite{DN}, who received an early draft of our paper in late September 2025. Unlike their work, the boundary lengths appearing in our volumes are discrete and coincide with the correlators computed by the DSSYK matrix integral.

\section{Moduli space of Riemann surfaces: a tale of three volumes}
\label{sec:volumes}
In this section, we review the Weil--Petersson volumes, the Kontsevich volumes, and the discrete Norbury volumes of the moduli space of Riemann surfaces, as well as their recursive computation and its geometric origin. These compute, respectively, the volumes of the moduli space of hyperbolic metrics, the volumes of the moduli space of flat Strebel metrics, and the number of lattice points on the moduli space of flat Strebel metrics. All three are connected to the moduli space of Riemann surfaces, i.e.~the moduli space of complex structures. Here we draw a parallel following \cite{Gia21}.

From a physical perspective, the Weil--Petersson volumes play a central role in JT gravity, where they describe the geometry of the moduli space of hyperbolic surfaces contributing to the gravitational path integral \cite{SSSJTmatrix}. The Kontsevich volumes were introduced to prove Witten's conjecture relating intersection theory on the moduli space of curves to topological quantum gravity \cite{Wit91,Kon92}. Finally, the discrete Norbury volumes provide a discretization of the latter \cite{Nor10}, and are closely related to the Gaussian Unitary Ensemble (GUE), as will be discussed in the next section.

\subsection{A warm-up analogy}
Before reviewing these volumes and their geometric origins, let us illustrate an analogy. Consider the topological $2$-sphere. There are two different, yet equally meaningful, models for this space:
\begin{itemize}
    \item The \emph{smooth model}: the round sphere of radius $L$, defined as
    \begin{equation}
        S_L \coloneqq \Set{ (x,y,z) \in \R^3 | x^2 + y^2 + z^2 = L^2 }.
    \end{equation}
    
    \item The \emph{combinatorial model}: the surface of the cube of side $L$, defined as
    \begin{equation}
        C_L \coloneqq \Set{ (x,y,z) \in \R^3 | \max\{ |x|,|y|,|z| \} = \frac{L}{2} }.
    \end{equation}
\end{itemize}

The two models are topologically equivalent, yet each carries its own intrinsic geometry. 
The smooth model has a natural notion of symplectic area, obtained by integrating the canonical $2$-form on $S_L$:
\begin{equation}
    \mathrm{Area}(S_L)
    \coloneqq
    \int_{0}^{2\pi} \int_{0}^{\pi} L^2 \sin\theta \, d\theta \wedge d\varphi
    = 4\pi L^2.
\end{equation}
The combinatorial model, on the other hand, admits a different notion of area, obtained by summing the areas of its six faces:
\begin{equation}
    \mathrm{Area}(C_L)
    \coloneqq
    6
    \int_{-\frac{L}{2}}^{\frac{L}{2}} \int_{-\frac{L}{2}}^{\frac{L}{2}}
        dx \wedge dy
    =
    6 L^2.
\end{equation}
This is again a symplectic volume, where the symplectic form is obtained by gluing the Darboux form $dx \wedge dy$ on each face of the cube's surface.

When the side of the cube is an integer, which to avoid confusion we denote as $b$, the combinatorial model also has a notion of integral points: the points in $C_b^{\Z} \coloneqq C_b \cap \Z^3$, i.e.~points on the surface of the cube with integer coordinates. In this case, it makes sense to define the discrete area of the cube's surface as the number of such integral points:
\begin{equation}
    \# C_b^{\Z}
    =
    \frac{1 + (-1)^b}{2}
    \bigl(
        6 b^2 + 2
    \bigr).
\end{equation}
The factor $\frac{1 + (-1)^b}{2}$ enforces that $b$ is an even integer: otherwise, $C_b^{\Z}$ is empty, since at least one coordinate would be a half-integer.  
Notice that this discrete area already encodes information about the continuous one: the leading term of $6b^2 + 2$ is precisely $6b^2$, that is,
\begin{equation}
    \lim_{t \to 0^+} \
        t^{2} \cdot \# C_{L/t}^{\Z}
    =
    \mathrm{Area}(C_L) .
\end{equation}
This is not a coincidence but an instance of the general correspondence between lattice point counting and integration, known as Ehrhart theory. Geometrically, it expresses the fact that by counting lattice points on an increasingly finer mesh, one recovers the continuous volume in the limit. It can also be seen as a convergence of the Dirac delta measure on the rescaled lattice points on the cube's surfaces to the Euclidean measure on the top-dimensional faces.

To summarize, the same topological space, namely the topological $2$-sphere, admits two natural geometries, smooth and combinatorial, the first with a natural notion of area, and the second with both an area and a discrete area.

\begin{center}
    \renewcommand{\arraystretch}{1.2}
    \begin{tabular}{>{\centering\arraybackslash}p{.3\textwidth}|>{\centering\arraybackslash}p{.3\textwidth}|>{\centering\arraybackslash}p{.3\textwidth}}
        \toprule
         smooth model & \multicolumn{2}{c}{combinatorial model} \\
         \midrule
         round sphere $S_L$ &
         cube $C_L$ &
         integer points on cube $C_b^{\Z}$ \\
        \begin{tikzpicture}[x=1pt,y=1pt,scale=.4]
            \fill[opacity=.05] (410.3177, 584.4762)  arc[start angle=-115.3539, end angle=-25.3772, radius=88]  .. controls (495.8362, 598.0951) and (456.7726, 584.1587) .. (410.3177, 584.4762)  -- cycle;
            \draw[opacity=.5] (360.3662, 672.0196)  .. controls (360, 664) and (376, 656) .. (393.3333, 650.6667)  .. controls (410.6667, 645.3333) and (429.3333, 642.6667) .. (448, 642.6667)  .. controls (466.6667, 642.6667) and (485.3333, 645.3333) .. (502.6667, 650.6667)  .. controls (520, 656) and (536, 664) .. (535.6345, 672.0119);
        \draw[thick] (448, 664) circle[radius=88];
        \end{tikzpicture}
        &
        \begin{tikzpicture}[x=1pt,y=1pt,scale=.4]
            \draw[thick,opacity=.5](128, 576) -- (164, 640) -- (284, 640);
            \draw[thick,opacity=.5](164, 640) -- (164, 752);
            \draw[thick] (128, 576) rectangle (256, 704);
            \draw[thick] (256, 576) -- (284, 640) -- (284, 752) -- (164, 752) -- (128, 704);
            \draw[thick] (256, 704) -- (284, 752);
        \end{tikzpicture}
        &
        \begin{tikzpicture}[x=1pt,y=1pt,scale=.4]
            \draw[opacity=.5](128, 576) -- (164, 640) -- (284, 640);
            \draw[opacity=.5](164, 640) -- (164, 752);
            \node[opacity=.5] at (224, 640) {\tiny$\bullet$};
            \node[opacity=.5] at (164, 696) {\tiny$\bullet$};
            \node[opacity=.5] at (144, 672) {\tiny$\bullet$};
            \node[opacity=.5] at (224, 696) {\tiny$\bullet$};
            \node[opacity=.5] at (208, 608) {\tiny$\bullet$};
            \node[opacity=.5] at (164, 640) {\tiny$\bullet$};
            \node[opacity=.5] at (145.9942, 607.9897) {\tiny$\bullet$};
            \node at (128, 576) {\tiny$\bullet$};
            \node at (192, 576) {\tiny$\bullet$};
            \node at (256, 576) {\tiny$\bullet$};
            \node at (128, 640) {\tiny$\bullet$};
            \node at (192, 640) {\tiny$\bullet$};
            \node at (256, 640) {\tiny$\bullet$};
            \node at (256, 704) {\tiny$\bullet$};
            \node at (192, 704) {\tiny$\bullet$};
            \node at (128, 704) {\tiny$\bullet$};
            \node at (164, 752) {\tiny$\bullet$};
            \node at (284, 752) {\tiny$\bullet$};
            \node at (224, 752) {\tiny$\bullet$};
            \node at (284, 696) {\tiny$\bullet$};
            \node at (272, 672) {\tiny$\bullet$};
            \node at (269.9998, 607.9996) {\tiny$\bullet$};
            \node at (284, 640) {\tiny$\bullet$};
            \node at (270.0026, 728.0045) {\tiny$\bullet$};
            \node at (146.0057, 728.0076) {\tiny$\bullet$};
            \node at (208, 728) {\tiny$\bullet$};
            \draw(128, 576) rectangle (256, 704);
            \draw(256, 576) -- (284, 640) -- (284, 752) -- (164, 752) -- (128, 704);
            \draw(256, 704) -- (284, 752);
        \end{tikzpicture}
         \\
         \midrule
         sphere's area & cube's area & cube's lattice point count \\
         $\mathrm{Area}(S_L)=4\pi L^2$ & $\mathrm{Area}(C_L)=6L^2$ & $\#C_b^{\Z} = \frac{1+(-1)^b}{2}(6b^2 + 2)$\\
         \bottomrule
    \end{tabular}
\end{center}

It is also worth mentioning that there exists a third model of the $2$-sphere: the \emph{complex-geometric} one, namely the projective line $\P^1 = \C \cup \set{\infty}$. On $\P^1$, it is natural to consider cohomology classes defined by complex vector bundles. A natural question then arises: can we express the cohomology classes representing the symplectic forms from the smooth and combinatorial models under the appropriate isomorphisms? The answer is yes. For instance, via the stereographic projection, one can identify the cohomology class of the symplectic form $L^2 \sin\theta \, d\theta \wedge d\varphi$ on the sphere with the first Chern class of the hyperplane line bundle $\mc{O}(1) \to \P^1$:
\begin{equation}
    \mathrm{Area}(S_L)
    = 
    4\pi L^2 \int_{\P^1} c_1\bigl( \mc{O}(1) \bigr).
\end{equation}
This equality is rather striking: the left-hand side is intrinsic to the differential-geometric nature of the round sphere $S_L$, while the right-hand side is intrinsic to the complex geometry of $\P^1$ and, a priori, it has nothing to do with a notion of area. The comparison is made possible only through the isomorphism between $\P^1$ and $S_L$. A similar correspondence can be established with the area form arising from the combinatorial model.

\subsection{Riemann surfaces and their volumes}
We can now move on to a more intricate example that exhibits all the features discussed above: the moduli space of Riemann surfaces, its different models, and the corresponding notions of volume.

The ``complex-geometric model'' is the moduli space parameterizing complex structures up to biholomorphism, denoted simply by $\M_{g,n}$ and called the moduli space of complex curves (see \cite{GL26} for a physics oriented account on the subject):
\begin{equation}
    \M_{g,n}
    \coloneqq
    \left.
    \Set{\substack{
        \displaystyle\text{complex structures on a surface of genus $g$} \\[1ex]
        \displaystyle\text{with $n$ marked points}
    }}
    \right/ \sim.
\end{equation}
As in the toy example above, there are two different but meaningful models one can consider of the same moduli space, which depend on the additional data of boundary lengths $L_1,\ldots,L_n \in \R_+$:
\begin{itemize}
    \item The ``smooth model'': the moduli space of \emph{hyperbolic metrics} with geodesic boundaries up to isometry:
    \begin{equation}
        \M_{g,n}^{\textup{hyp}}(L_1,\ldots,L_n)
        \coloneqq
        \left.
        \Set{\substack{
            \displaystyle\text{hyperbolic metrics on a surface of genus $g$} \\[1ex]
            \displaystyle\text{with $n$ geodesic boundaries of lengths $(L_1,\ldots,L_n)$}
        }}
        \right/ \sim.
    \end{equation}

    \item The ``combinatorial model'': the moduli space of \emph{Strebel graphs} (also known as metrized ribbon graphs) up to isometry:
    \begin{equation}
        \M_{g,n}^{\textup{comb}}(L_1,\ldots,L_n)
        \coloneqq
        \left.
        \Set{\substack{
            \displaystyle\text{Strebel graphs on a surface of genus $g$} \\[1ex]
            \displaystyle\text{with $n$ boundaries of lengths $(L_1,\ldots,L_n)$}
        }}
        \right/ \sim.
    \end{equation}
\end{itemize}
The reason why these spaces are isomorphic follows from two classical theorems due to Riemann and Strebel. The first, the \emph{uniformization theorem}, asserts that for every complex structure there exists a unique hyperbolic metric with prescribed geodesic boundary lengths. The second, \emph{Strebel's theorem}, guarantees that for each complex structure there exists a unique Strebel differential with prescribed residues.

In contrast to the analytic description given by uniformization, Strebel's theorem leads to a purely combinatorial one. It identifies each Riemann surface with a metrized ribbon graph, a \emph{Strebel graph}, thereby providing a combinatorial model of the moduli space that is perhaps less familiar in the physics literature. For each point of the moduli space $\mathcal{M}_{g,n}$ and each vector of positive real numbers $(L_1, \ldots, L_n)$, there exists a unique meromorphic quadratic differential $\phi(z)\,(dz)^2$, called the \emph{Strebel differential}, satisfying certain properties; see \cite{Str84, MP98, Muk03}. Its only singularities are double poles whose residues are the prescribed positive real numbers~$L_i$. The Strebel differential foliates the Riemann surface into a family of curves known as horizontal trajectories, shown in red in the central panel of \cref{fig:StrebelCurves}. Along these curves, the square root of the differential is purely real. In general, the horizontal trajectories form closed concentric loops, whose limit set defines a canonical graph on the surface, the \emph{Strebel graph}, depicted in orange in \cref{fig:StrebelCurves}. This graph can be embedded in the surface by replacing each vertex with a small disk and each edge with a thin ribbon, hence the name \emph{ribbon graph}. The vertices of this graph correspond to the zeros of the differential, and the valence of each vertex equals the order of the zero plus two; in particular, all vertices are at least trivalent, a property that will play an important role in \cref{subsec:GUE:correl}.

The differential induces a natural metric on the surface,
\begin{equation}
    ds^2_{\textup{Strebel}} \coloneqq |\phi| \, dz\, d\bar{z},
\end{equation}
which is flat almost everywhere, except at curvature singularities at the zeros and poles of $\phi$. Each edge of the Strebel graph acquires a length $\ell_e$ by integrating the line element along a horizontal trajectory between two zeros, hence the name \emph{metrized ribbon graph}. The continuous moduli of the Riemann surface are encoded in these edge lengths $\ell_e$, which provide a combinatorial parametrization of the moduli space $\M_{g,n}$, denoted $\mathcal{M}_{g,n}^{\textup{comb}}(L_1,\ldots,L_n)$.

Geometrically, one can view each surface as being composed of semi-infinite cylinders glued along the Strebel graph. The circumferences of these cylinders correspond to the boundary lengths $L_i$; see the right panel of \cref{fig:StrebelCurves}. This construction makes it manifest that the horizontal trajectories are geodesics with respect to the Strebel metric. For a physicist-friendly introduction to Strebel's construction, see sections~2.2--2.4 of \cite{GKKMS}.

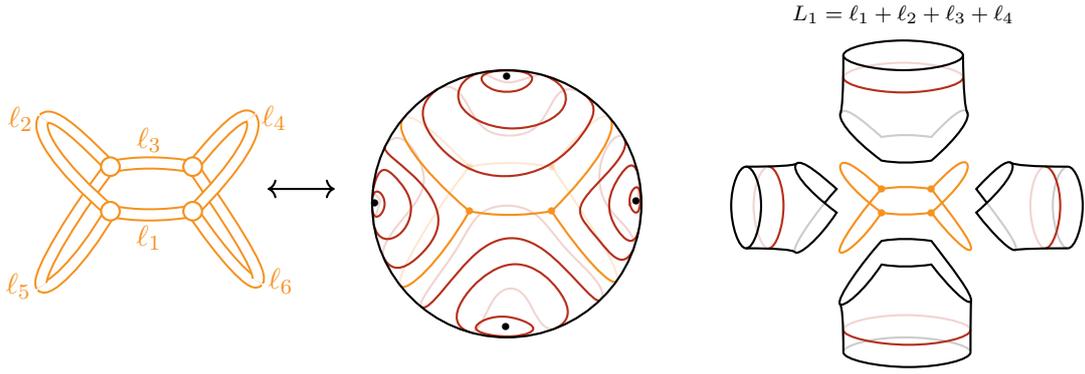
\begin{figure}[t]
    \centering
    \tikzset{every picture/.style=thick}
    \begin{tikzpicture}[x=1pt,y=1pt,scale=.7]
        \draw[BurntOrange,line width=5pt](68, 404) .. controls (81.3333, 401.3333) and (96, 401.3333) .. (112, 404);
        \draw[white,line width=3.5pt](68, 404) .. controls (81.3333, 401.3333) and (96, 401.3333) .. (112, 404);
        
        \draw[BurntOrange,line width=5pt](68, 404) .. controls (52, 376) and (34.431, 359.893) .. (31.238, 363.705);
        \draw[white,line width=3.5pt](68, 404) .. controls (52, 376) and (34.431, 359.893) .. (31.238, 363.705);
        
        \draw[BurntOrange,line width=5pt](112, 404) .. controls (128, 380) and (143.937, 362.667) .. (146.735, 366.356);
        \draw[white,line width=3.5pt](112, 404) .. controls (128, 380) and (143.937, 362.667) .. (146.735, 366.356);
        
        \draw[BurntOrange,line width=5pt](32.318, 453.645) .. controls (37.519, 459.339) and (56, 440) .. (68, 428);
        \draw[white,line width=3.5pt](32.318, 453.645) .. controls (37.519, 459.339) and (56, 440) .. (68, 428);
        
        \draw[BurntOrange,line width=5pt](68, 428) .. controls (81.3333, 430.6667) and (96, 430.6667) .. (112, 428);
        \draw[white,line width=3.5pt](68, 428) .. controls (81.3333, 430.6667) and (96, 430.6667) .. (112, 428);
        
        \draw[BurntOrange,line width=5pt](112, 428) .. controls (128, 448) and (139.238, 458.584) .. (143.117, 454.326);
        \draw[white,line width=3.5pt](112, 428) .. controls (128, 448) and (139.238, 458.584) .. (143.117, 454.326);
        
        \draw[BurntOrange,line width=5pt](146.735, 366.356) .. controls (150.179, 371.699) and (128, 404) .. (112, 428);
        \draw[white,line width=3.5pt](146.735, 366.356) .. controls (150.179, 371.699) and (128, 404) .. (112, 428);
        
        \draw[BurntOrange,line width=5pt](31.238, 363.705) .. controls (27.422, 369.086) and (52, 408) .. (68, 428);
        \draw[white,line width=3.5pt](31.238, 363.705) .. controls (27.422, 369.086) and (52, 408) .. (68, 428);

        \draw[BurntOrange,line width=5pt](68, 404) .. controls (40, 428) and (27.786, 447.476) .. (32.318, 453.645);
        \draw[white,line width=3.5pt](68, 404) .. controls (40, 428) and (27.786, 447.476) .. (32.318, 453.645);

        \draw[BurntOrange,line width=5pt](112, 404) .. controls (136, 428) and (148, 448) .. (143.117, 454.326);
        \draw[white,line width=3.5pt](112, 404) .. controls (136, 428) and (148, 448) .. (143.117, 454.326);
  
        \filldraw[draw=BurntOrange, fill=white](68, 428) circle[radius=5];
        \filldraw[draw=BurntOrange, fill=white](112, 428) circle[radius=5];
        \filldraw[draw=BurntOrange, fill=white](112, 404) circle[radius=5];
        \filldraw[draw=BurntOrange, fill=white](68, 404) circle[radius=5];

        \node[BurntOrange] at (88.615, 402.001) [below] {$\ell_1$};
        \node[BurntOrange] at (32.318, 453.645) [left]{$\ell_2$};
        \node[BurntOrange] at (88.757, 430) [above]{$\ell_3$};
        \node[BurntOrange] at (143.117, 454.326) [right]{$\ell_4$};
        \node[BurntOrange] at (31.238, 363.705) [left]{$\ell_5$};
        \node[BurntOrange] at (146.735, 366.356) [right]{$\ell_6$};

        \draw[<->] (152, 416) -- (188, 416);
        
        \draw[BrickRed, opacity=.2] (350.5277, 393.5139)  .. controls (351.5842, 400.273) and (350.3863, 404.9605) .. (350.2408, 409.8785)  .. controls (350.0953, 414.7965) and (351.0022, 419.9451) .. (350.1627, 424.1617);
        \draw[BrickRed, opacity=.2] (347.6978, 383.4845)  .. controls (350.0052, 391.1695) and (341.0026, 397.5848) .. (335.8346, 404.1257)  .. controls (330.6667, 410.6667) and (329.3333, 417.3333) .. (329.5349, 423.4731)  .. controls (329.7365, 429.6129) and (331.473, 435.2258) .. (334.2492, 439.4479)  .. controls (337.0254, 443.67) and (340.8412, 446.5012) .. (342.825, 443.1713);
        \draw[BrickRed, opacity=.2] (264.7396, 478.3642)  .. controls (268.586, 479.0895) and (274.293, 473.5448) .. (280.7457, 473.4089)  .. controls (287.1984, 473.273) and (294.3969, 478.5459) .. (300.2874, 477.0827);
        \draw[BrickRed, opacity=.2] (321.2489, 467.0129)  .. controls (311.3814, 472.8013) and (307.6907, 468.4006) .. (304.512, 462.867)  .. controls (301.3333, 457.3333) and (298.6667, 450.6667) .. (292.6667, 447.3333)  .. controls (286.6667, 444) and (277.3333, 444) .. (271.3333, 446.6667)  .. controls (265.3333, 449.3333) and (262.6667, 454.6667) .. (259.2044, 460.8615)  .. controls (255.7422, 467.0563) and (251.4844, 474.1125) .. (242.9554, 469.7389);
        \draw[BrickRed, opacity=.2] (214.1076, 378.9794)  .. controls (211.7803, 384.9766) and (220.2582, 394.6942) .. (226.2543, 402.1455)  .. controls (232.2504, 409.5967) and (235.7646, 414.7815) .. (235.6419, 421.3795)  .. controls (235.5193, 427.9775) and (231.7596, 435.9888) .. (227.4603, 440.2808)  .. controls (223.1609, 444.5729) and (218.3219, 445.1458) .. (214.8346, 438.6181);
        \draw[BrickRed, opacity=.2] (208.9047, 396.6222)  .. controls (208.3798, 400.6144) and (212.1899, 406.3072) .. (212.1323, 409.312)  .. controls (212.0746, 412.3169) and (208.1493, 412.6337) .. (208.3315, 414.9007);
        \draw[BrickRed, opacity=.2] (234.2927, 352.3687)  .. controls (231.646, 354.6531) and (243.823, 369.3265) .. (251.2448, 380.6633)  .. controls (258.6667, 392) and (261.3333, 400) .. (268, 404)  .. controls (274.6667, 408) and (285.3333, 408) .. (292.6667, 404)  .. controls (300, 400) and (304, 392) .. (312.8266, 382.4751)  .. controls (321.6533, 372.9501) and (335.3065, 361.9002) .. (329.7967, 355.9972);
        \draw[BrickRed, opacity=.2] (259.9735, 338.8412)  .. controls (265.3717, 337.5017) and (268.6859, 350.7508) .. (273.6763, 357.3754)  .. controls (278.6667, 364) and (285.3333, 364) .. (290.1352, 357.5635)  .. controls (294.937, 351.1269) and (297.874, 338.2539) .. (306.1367, 340.9115);
        \node[BurntOrange, opacity=.2] at (304, 428) {\tiny$\bullet$};
        \node[BurntOrange, opacity=.2] at (260, 428) {\tiny$\bullet$};
        \draw[BurntOrange, opacity=.2] (223.2376, 363.7052)  .. controls (219.4222, 369.0856) and (244, 408) .. (260, 428);
        \draw[BurntOrange, opacity=.2] (338.7348, 366.356)  .. controls (342.1793, 371.6993) and (320, 404) .. (304, 428);
        \draw[BurntOrange, opacity=.2] (304, 428)  .. controls (320, 448) and (331.2376, 458.5837) .. (335.1174, 454.3257);
        \draw[BurntOrange, opacity=.2] (224.3178, 453.6453)  .. controls (229.5193, 459.339) and (248, 440) .. (260, 428);
        \draw[BurntOrange, opacity=.2] (260, 428)  .. controls (273.3333, 430.6667) and (288, 430.6667) .. (304, 428);
        \draw[BurntOrange] (260, 404)  .. controls (273.3333, 401.3333) and (288, 401.3333) .. (304, 404);
        \draw[BrickRed] (271.3239, 478.9537)  .. controls (268.6421, 478.2871) and (266.3211, 477.1435) .. (266.4939, 475.2384)  .. controls (266.6667, 473.3333) and (269.3333, 470.6667) .. (273.3333, 469.3333)  .. controls (277.3333, 468) and (282.6667, 468) .. (286.6667, 469.3333)  .. controls (290.6667, 470.6667) and (293.3333, 473.3333) .. (293.5352, 475.2347)  .. controls (293.737, 477.1361) and (291.4741, 478.2722) .. (288.8225, 478.9389)  .. controls (286.1709, 479.6056) and (283.1307, 479.8028) .. (280.0894, 479.8065)  .. controls (277.0481, 479.8102) and (274.0056, 479.6204) .. cycle;
        \draw[BrickRed] (264.7396, 478.3642)  .. controls (255.6692, 475.7644) and (251.8346, 467.8822) .. (255.2506, 461.2744)  .. controls (258.6667, 454.6667) and (269.3333, 449.3333) .. (280.6667, 448.6667)  .. controls (292, 448) and (304, 452) .. (309.0241, 458.1126)  .. controls (314.0481, 464.2251) and (312.0962, 472.4502) .. (300.2874, 477.0827);
        \draw[BrickRed] (242.9554, 469.7389)  .. controls (232.0274, 461.6902) and (234.0137, 452.8451) .. (239.6735, 444.4225)  .. controls (245.3333, 436) and (254.6667, 428) .. (265.939, 424.6695)  .. controls (277.2113, 421.3389) and (290.4225, 422.6779) .. (300.9876, 428.1559)  .. controls (311.5527, 433.634) and (319.4716, 443.2511) .. (323.5328, 450.3042)  .. controls (327.5939, 457.3572) and (327.7973, 461.8462) .. (321.2489, 467.0129);
        \draw[BrickRed] (214.1076, 378.9794)  .. controls (219.4553, 369.0342) and (227.7276, 374.5171) .. (233.8638, 380.5919)  .. controls (240, 386.6667) and (244, 393.3333) .. (243.3333, 400.6667)  .. controls (242.6667, 408) and (237.3333, 416) .. (230.0767, 422.9839)  .. controls (222.8202, 429.9678) and (213.6403, 435.9355) .. (214.8346, 438.6181);
        \draw[BrickRed] (208.9047, 396.6222)  .. controls (210.9878, 387.474) and (217.4939, 387.737) .. (222.0803, 391.2018)  .. controls (226.6667, 394.6667) and (229.3333, 401.3333) .. (228.6667, 407.3333)  .. controls (228, 413.3333) and (224, 418.6667) .. (219.8651, 422.8464)  .. controls (215.7301, 427.0262) and (211.4603, 430.0523) .. (208.3315, 414.9007);
        \draw[BrickRed] (208.0391, 405.6276)  .. controls (208.3553, 400.8561) and (210.1776, 400.4281) .. (211.7555, 401.5474)  .. controls (213.3333, 402.6667) and (214.6667, 405.3333) .. (214.6667, 408)  .. controls (214.6667, 410.6667) and (213.3333, 413.3333) .. (211.7366, 414.2514)  .. controls (210.1398, 415.1695) and (208.2796, 414.3391) .. (208.0259, 409.9324);
        \draw[BrickRed] (234.2927, 352.3687)  .. controls (238.5939, 349.0972) and (251.297, 370.5486) .. (263.6485, 381.2743)  .. controls (276, 392) and (288, 392) .. (300.7528, 382.3665)  .. controls (313.5057, 372.7331) and (327.0113, 353.4662) .. (329.7967, 355.9972);
        \draw[BrickRed] (259.9735, 338.8412)  .. controls (249.7548, 342.6607) and (252.8774, 351.3303) .. (257.772, 358.3318)  .. controls (262.6667, 365.3333) and (269.3333, 370.6667) .. (275.3333, 373.3333)  .. controls (281.3333, 376) and (286.6667, 376) .. (292, 374)  .. controls (297.3333, 372) and (302.6667, 368) .. (307.4955, 361.5063)  .. controls (312.3244, 355.0126) and (316.6488, 346.0253) .. (306.1367, 340.9115);
        \draw[BrickRed] (286.8383, 336.9936)  .. controls (290.5347, 337.616) and (293.2674, 338.808) .. (293.967, 340.0707)  .. controls (294.6667, 341.3333) and (293.3333, 342.6667) .. (290.6667, 344)  .. controls (288, 345.3333) and (284, 346.6667) .. (279.9985, 347.0623)  .. controls (275.9969, 347.458) and (271.9938, 346.916) .. (268.6605, 345.5826)  .. controls (265.3272, 344.2493) and (262.6636, 342.1247) .. (263.6354, 340.4399)  .. controls (264.6071, 338.7552) and (269.2143, 337.5104) .. (273.8479, 336.9144)  .. controls (278.4816, 336.3184) and (283.1418, 336.3712) .. cycle;
        \draw[BrickRed] (347.6978, 383.4845)  .. controls (342.0496, 371.4781) and (335.0807, 381.28) .. (329.2315, 388.4342)  .. controls (323.3823, 395.5883) and (318.653, 400.0946) .. (319.222, 405.1699)  .. controls (319.7911, 410.2451) and (325.6587, 415.8891) .. (332.0848, 422.8043)  .. controls (338.5109, 429.7194) and (345.4955, 437.9055) .. (342.825, 443.1713);
        \draw[BrickRed] (350.5277, 393.5139)  .. controls (347.794, 383.7518) and (342.5304, 388.9548) .. (339.1735, 394.0056)  .. controls (335.8166, 399.0564) and (334.3664, 403.955) .. (334.8219, 408.9287)  .. controls (335.2774, 413.9024) and (337.6387, 418.9512) .. (340.7812, 423.5002)  .. controls (343.9237, 428.0492) and (347.8474, 432.0984) .. (350.1627, 424.1617);
        \draw[BrickRed] (345.823, 407.881)  .. controls (345.3679, 410.2676) and (345.9077, 412.9517) .. (346.9695, 414.8059)  .. controls (348.0313, 416.6601) and (349.6152, 417.6844) .. (350.5385, 416.1819)  .. controls (351.4617, 414.6794) and (351.7242, 410.6499) .. (351.3873, 407.7511)  .. controls (351.0504, 404.8522) and (350.114, 403.0839) .. (348.9209, 403.2444)  .. controls (347.7278, 403.405) and (346.278, 405.4943) .. cycle;
        \node at (349.1885, 409.6479) {\tiny$\bullet$};
        \node at (209.4486, 408.5412) {\tiny$\bullet$};
        \node at (279.9554, 476.7277) {\tiny$\bullet$};
        \node at (279.4477, 341.7047) {\tiny$\bullet$};
        \draw[BurntOrange] (260, 404)  .. controls (244, 376) and (226.4307, 359.8925) .. (223.2376, 363.7052);
        \draw[BurntOrange] (260, 404)  .. controls (232, 428) and (219.7864, 447.4756) .. (224.3178, 453.6453);
        \draw[BurntOrange] (304, 404)  .. controls (328, 428) and (340, 448) .. (335.1174, 454.3257);
        \draw[BurntOrange] (304, 404)  .. controls (320, 380) and (335.937, 362.6673) .. (338.7348, 366.356);
        \draw (280, 408) circle[radius=72];
        \node[BurntOrange] at (304, 404) {\tiny$\bullet$};
        \node[BurntOrange] at (260, 404) {\tiny$\bullet$};

        \draw[opacity=.2](525.9041, 457.9217) .. controls (523.5559, 460.1775) and (516.7527, 454.5703) .. (507.0667, 443.9747) .. controls (497.3807, 445.3874) and (488.5019, 445.3874) .. (480.4303, 443.9747) .. controls (473.1658, 450.3321) and (461.9779, 460.5775) .. (458.8293, 457.561);
        
        \draw[BrickRed, opacity=.2](524, 476) .. controls (524, 480) and (508, 484) .. (492, 484) .. controls (476, 484) and (460, 480) .. (460, 476) -- (460, 476);
        \draw[BrickRed](460, 476) .. controls (460, 472) and (476, 468) .. (492, 468) .. controls (508, 468) and (524, 472) .. (524, 476);
        \draw[opacity=.2](460, 328) .. controls (460, 332) and (476, 336) .. (493, 336) .. controls (510, 336) and (528, 332) .. (528, 328);
        \draw[BrickRed, opacity=.2](460, 340) .. controls (460, 344) and (476, 348) .. (493, 348) .. controls (510, 348) and (528, 344) .. (528, 340);
        \draw[opacity=.2](552.0944, 383.3168) .. controls (553.7869, 385.6147) and (545.2575, 397.324) .. (536.8052, 408.4523);
        \draw[opacity=.2](580, 384) .. controls (576, 384) and (572, 396) .. (572, 407) .. controls (572, 418) and (576, 428) .. (580, 428);
        \draw[BrickRed, opacity=.2](568, 384) .. controls (564, 384) and (560, 396) .. (560, 407) .. controls (560, 418) and (564, 428) .. (568, 428);
        \draw[BrickRed](568, 428) .. controls (572, 428) and (576, 418) .. (576, 407) .. controls (576, 396) and (572, 384) .. (568, 384);
        \draw[BrickRed, opacity=.2](420, 428) .. controls (416, 428) and (412, 418) .. (412, 407) .. controls (412, 396) and (416, 384) .. (420, 384) -- (420, 384);
        \draw[BrickRed](420, 384) .. controls (424, 384) and (428, 396) .. (428, 407) .. controls (428, 418) and (424, 428) .. (420, 428);
        \draw[BrickRed](528, 340) .. controls (528, 336) and (512, 332) .. (495, 332) .. controls (478, 332) and (460, 336) .. (460, 340);
        \node[BurntOrange] at (507.0664, 415.9748) {\tiny$\bullet$};
        \node[BurntOrange] at (480.4299, 415.9748) {\tiny$\bullet$};
        \draw[shift={(458.175, 381.913)}, xscale=0.6054, yscale=0.5298, BurntOrange](0, 0) .. controls (-3.816, 5.381) and (20.762, 44.295) .. (36.762, 64.295);
        \draw[shift={(528.094, 383.317)}, xscale=0.6054, yscale=0.5298, BurntOrange](0, 0) .. controls (3.444, 5.343) and (-18.735, 37.644) .. (-34.735, 61.644);
        \draw[shift={(507.066, 415.975)}, xscale=0.6054, yscale=0.5298, BurntOrange](0, 0) .. controls (16, 20) and (27.238, 30.584) .. (31.117, 26.326);
        \draw[shift={(458.829, 429.561)}, xscale=0.6054, yscale=0.5298, BurntOrange](0, 0) .. controls (5.201, 5.694) and (23.682, -13.645) .. (35.682, -25.645);
        \draw[shift={(480.43, 415.975)}, xscale=0.6054, yscale=0.5298, BurntOrange](0, 0) .. controls (13.3333, 2.6667) and (28, 2.6667) .. (44, 0);
        \draw[shift={(480.43, 403.26)}, xscale=0.6054, yscale=0.5298, BurntOrange](0, 0) .. controls (13.3333, -2.6667) and (28, -2.6667) .. (44, 0);
        \draw[shift={(480.43, 403.26)}, xscale=0.6054, yscale=0.5298, BurntOrange](0, 0) .. controls (-16, -28) and (-33.569, -44.107) .. (-36.762, -40.295);
        \draw[shift={(480.43, 403.26)}, xscale=0.6054, yscale=0.5298, BurntOrange](0, 0) .. controls (-28, 24) and (-40.214, 43.476) .. (-35.682, 49.645);
        \draw[shift={(507.066, 403.26)}, xscale=0.6054, yscale=0.5298, BurntOrange](0, 0) .. controls (24, 24) and (36, 44) .. (31.117, 50.326);
        \draw[shift={(507.066, 403.26)}, xscale=0.6054, yscale=0.5298, BurntOrange](0, 0) .. controls (16, -24) and (31.937, -41.333) .. (34.735, -37.644);
        \node[BurntOrange] at (507.0664, 403.2601) {\tiny$\bullet$};
        \node[BurntOrange] at (480.4299, 403.2601) {\tiny$\bullet$};
        \draw(458.3038, 456.3336) .. controls (459.4346, 458.7779) and (460, 469.3333) .. (460, 488);
        \draw(526.4547, 456.8502) .. controls (525.0007, 458.9592) and (524.1825, 469.3424) .. (524, 488);
        \draw(525.9041, 457.9217) .. controls (528.8602, 454.5703) and (521.5957, 443.9747) .. (507.0667, 431.2599) .. controls (497.3807, 429.8472) and (488.5019, 429.8472) .. (480.4303, 431.2599) .. controls (463.4798, 443.9747) and (456.0857, 454.2927) .. (458.8293, 457.561);
        \draw(460, 488) .. controls (460, 484) and (476, 480) .. (492, 480) .. controls (508, 480) and (524, 484) .. (524, 488) .. controls (524, 492) and (508, 496) .. (492, 496) .. controls (476, 496) and (460, 492) .. (460, 488) -- cycle;
        \draw(434.1755, 381.9124) .. controls (432.7252, 383.3041) and (424, 384) .. (408, 384);
        \draw(434.8293, 429.561) .. controls (432.9431, 428.5203) and (424, 428) .. (408, 428);
        \draw[opacity=.2](434.1755, 381.9124) .. controls (432.3428, 384.1739) and (441.3283, 397.6163) .. (449.9569, 408.3664);
        \draw(449.9569, 408.3664) .. controls (452.2047, 411.167) and (454.4284, 413.7848) .. (456.4303, 415.9747) .. controls (449.1658, 422.3321) and (437.9779, 432.5775) .. (434.8293, 429.561) .. controls (432.0857, 426.2927) and (439.4798, 415.9747) .. (456.4303, 403.2599) .. controls (446.7443, 388.426) and (436.1085, 379.8928) .. (434.1755, 381.9124);
        \draw(408, 384) .. controls (412, 384) and (416, 396) .. (416, 407) .. controls (416, 418) and (412, 428) .. (408, 428) .. controls (404, 428) and (400, 418) .. (400, 407) .. controls (400, 396) and (404, 384) .. (408, 384) -- cycle;
        \draw(528.3406, 356.1604) .. controls (528.1135, 353.3868) and (528, 344) .. (528, 328);
        \draw(458.1755, 353.9124) .. controls (459.3918, 352.6375) and (460, 344) .. (460, 328);
        \draw(528, 328) .. controls (528, 324) and (512, 320) .. (495, 320) .. controls (478, 320) and (460, 324) .. (460, 328);
        \draw(458.1755, 353.9124) .. controls (455.8654, 356.7631) and (470.7443, 377.3791) .. (480.4303, 387.9747) .. controls (488.5019, 389.3874) and (497.3807, 389.3874) .. (507.0667, 387.9747) .. controls (516.7527, 375.2599) and (530.1793, 358.1474) .. (528.0944, 355.3168) .. controls (526.4005, 353.3624) and (516.7527, 362.5452) .. (507.0667, 375.2599) .. controls (497.3807, 373.8472) and (488.5019, 373.8472) .. (480.4303, 375.2599) .. controls (470.7443, 360.426) and (460.1085, 351.8928) .. (458.1755, 353.9124) -- cycle;
        \draw(549.2437, 430.3176) .. controls (551.0812, 428.7725) and (561.3333, 428) .. (580, 428);
        \draw(551.751, 383.0962) .. controls (554.5837, 383.6987) and (564, 384) .. (580, 384);
        \draw(580, 428) .. controls (584, 428) and (588, 418) .. (588, 407) .. controls (588, 396) and (584, 384) .. (580, 384);
        \draw(536.8052, 408.4523) .. controls (534.8454, 411.0325) and (532.8898, 413.5815) .. (531.0667, 415.9747) .. controls (540.7527, 426.5703) and (547.5559, 432.1775) .. (549.9041, 429.9217) .. controls (552.8602, 426.5703) and (545.5957, 415.9747) .. (531.0667, 403.2599) .. controls (540.7527, 390.5452) and (550.4005, 381.3624) .. (552.0944, 383.3168);
        \node at (492, 510) {\scriptsize$L_1 = \ell_1+\ell_2+\ell_3+\ell_4$};
    \end{tikzpicture}
    \caption{\textbf{Riemann Surfaces as Metrized Ribbon Graphs:} The Strebel differential foliates any Riemann surface by a unique set of curves, called horizontal trajectories (in red). A measure zero subset, the critical trajectories, assign a unique Strebel graph to the surface (left panel). The moduli are encoded as edge lengths, $\ell_{e}$, providing the basis for the combinatorial description of the moduli space, $\M_{g,n}^{\textup{comb}}$. The sum of these lengths around a face of the Strebel graph must equal the length of the boundary. Geometrically, this decomposes the surface as a collection of semi-infinite flat cylinders, glued to the Strebel graph.}
    \label{fig:StrebelCurves}
\end{figure}

The hyperbolic and combinatorial models, unlike the complex-geometric one, both carry a natural symplectic form, which in turn defines a natural notion of volume: the Weil--Petersson volume and the Kontsevich volume:
\begin{equation}
\begin{aligned}
    V^{\textup{WP}}_{g,n}(L_1,\ldots,L_n)
    &\coloneqq
    \mathrm{Vol}\left(
        \M_{g,n}^{\textup{hyp}}(L_1,\ldots,L_n)
    \right), \\
    V^{\textup{Kon}}_{g,n}(L_1,\ldots,L_n)
    &\coloneqq
    \mathrm{Vol}\left(
        \M_{g,n}^{\textup{comb}}(L_1,\ldots,L_n)
    \right).
\end{aligned}
\end{equation}
We will not review here the specific symplectic forms that must be integrated to define these volumes. However, as in the toy example, it is worth noting that both forms arise naturally from the intrinsic geometry of their respective moduli spaces: hyperbolic in the first case and combinatorial in the second.

Another important point, again following the previous analogy, is that the combinatorial model admits a notion of lattice points. If the boundary lengths are integers, say $(b_1,\ldots,b_n) \in \Z_+^n$, one can count integer Strebel graphs, i.e. Strebel graphs with integral edge-lengths:
\begin{equation}
    N^{\textup{Nor}}_{g,n}(b_1,\ldots,b_n)
    \coloneqq
    \#\,\M_{g,n}^{\textup{comb},\Z}(b_1,\ldots,b_n).
\end{equation}
Here $\M_{g,n}^{\textup{comb},\Z}(b_1,\ldots,b_n)$ denotes the discrete space of Strebel graphs with integral edge-lengths and fixed boundary lengths. As before, the number of lattice points encodes the continuous volume as its leading coefficient (cf. \cref{fig:BMN}):
\begin{equation}
    \lim_{t \to 0^+} \
        2^{2g-2+n} \,
        t^{2(3g-3+n)} \,
        N^{\textup{Nor}}_{g,n}(L_1/t,\ldots,L_n/t)
    =
    2 \cdot V^{\textup{Kon}}_{g,n}(L_1,\ldots,L_n).
\end{equation}
From the purely geometric point of view, the factor of $2$ on the right-hand side is due to the fact that the set of lattice points $\M_{g,n}^{\textup{comb},\Z}(b_1,\ldots,b_n)$ is empty whenever $b_1 + \cdots + b_n$ is odd, since the sum of all edge-lengths is twice the sum of the boundary lengths.

\begin{center}
    \renewcommand{\arraystretch}{1.2}
    \begin{tabular}{>{\centering\arraybackslash}p{.3\textwidth}|>{\centering\arraybackslash}p{.3\textwidth}|>{\centering\arraybackslash}p{.3\textwidth}}
        \toprule
        smooth model & \multicolumn{2}{c}{combinatorial model} \\
        \midrule
        hyperbolic metrics &
        Strebel graphs &
        integer Strebel graphs \\
        $\M_{g,n}^{\textup{hyp}}(L_1,\ldots,L_n)$ &
        $\M_{g,n}^{\textup{comb}}(L_1,\ldots,L_n)$ &
        $\M_{g,n}^{\textup{comb},\Z}(b_1,\ldots,b_n)$ \\
        \midrule
        Weil--Petersson volumes &
        Kontsevich volumes &
        Norbury discrete volumes \\
        $V^{\textup{WP}}_{g,n}(L_1,\ldots,L_n)$ &
        $V^{\textup{Kon}}_{g,n}(L_1,\ldots,L_n)$ &
        $N^{\textup{Nor}}_{g,n}(b_1,\ldots,b_n)$ \\
        \bottomrule
    \end{tabular}
\end{center}

As both models are isomorphic to the moduli of curves, one can also try to express the Weil--Petersson and Kontsevich symplectic forms in terms of complex-algebraic objects on $\Mbar_{g,n}$, the Deligne--Mumford compactification of the moduli space of curves. Under the respective identifications, one finds \cite{Wol85,Mir07b,Kon92}
\begin{equation}\label{eq:vol:kappa:psi}
\begin{aligned}
    V^{\textup{WP}}_{g,n}(L_1,\ldots,L_n)
    &=
    \int_{\Mbar_{g,n}}
        \exp\left(
            2 \pi^2 \kappa_1
            +
            \frac{1}{2} \sum_{i=1}^n L_i^2 \psi_i
        \right),
    \\
    V^{\textup{Kon}}_{g,n}(L_1,\ldots,L_n)
    &=
    \int_{\Mbar_{g,n}}
        \exp\left(
            \frac{1}{2} \sum_{i=1}^n L_i^2 \psi_i
        \right).
\end{aligned}
\end{equation}
Here $\kappa_1$ and $\psi_i$ are natural cohomology classes on the moduli space of curves, whose definition is omitted. It is worth stressing that these formulas are highly non-trivial: the left-hand sides are defined through the intrinsic geometry of the hyperbolic and combinatorial models, while the right-hand sides are purely complex-geometric and, a priori, have nothing to do with the notion of volume. The equalities rely on the uniformization and Strebel theorems, which bridge the hyperbolic and combinatorial worlds to the complex-algebraic one.

\subsection{The recursions}
A natural question is therefore: how can one compute these three notions of volumes? In all three cases, the answer is provided by a \emph{topological recursion formula}, that is, a recursion on the Euler characteristic $2g - 2 + n$. The structure of the recursions for these volumes is entirely parallel: the continuous volumes satisfy integral recursions, while the discrete ones satisfy a discrete recursion; the Weil--Petersson volumes involve recursion kernels built out of hyperbolic functions, whereas the Kontsevich and Norbury volumes involve kernels built out of piecewise linear functions.

We start with the Weil--Petersson volumes \cite{Mir07a}.

\begin{theorem}[Mirzakhani]\label{thm:Mir}
    For $2g-2+n > 1$, the Weil--Petersson volumes satisfy the recursion relation
	\begin{equation}
    \begin{split}
    	V^{\textup{WP}}_{g,n}(L_1,\ldots,L_n)
    	&=
    	\int_{0}^{+\infty}
    		d\ell \,
    		  \ell \, B^{\textup{hyp}}(L_1,L_m,\ell)
    		  V^{\textup{WP}}_{g,n-1}(\ell,L_2,\dots, \widehat{L_m},\dots, L_n) \\
    	& \quad
    	+ \frac{1}{2} \int_{0}^{+\infty} \int_{0}^{+\infty}
    		d\ell \, d\ell' \,
    		\ell \ell' \, C^{\textup{hyp}}(L_1,\ell,\ell') \Bigg(
    			V^{\textup{WP}}_{g-1,n+1}(\ell,\ell',L_2,\dots,L_n) \\
    	& \qquad\qquad\qquad\qquad\qquad
    			+
    			\sum_{\substack{ h + h' = g \\ J \sqcup J' = \{2,\dots,n\} }}^{\textup{stable}}
    				V^{\textup{WP}}_{h,1+|J|}(\ell,L_{J})
    				V^{\textup{WP}}_{h',1+|J'|}(\ell',L_{J'})
    	\Bigg).
    \end{split}
    \end{equation}
    The label ``stable'' means that both $2h-2+(1+|J|)>0$ and $2h'-2+(1+|J'|)>0$. The $B$ and $C$ kernels are defined in terms of $H^{\textup{hyp}}(\ell) \coloneqq 2 \log(1 + e^{\ell/2})$:
    \begin{equation}
    \begin{aligned}
    	B^{\textup{hyp}}(L,L',\ell)
    	& \coloneqq
    	\frac{1}{2L} \Bigl(
    		H^{\textup{hyp}}(L + L' - \ell)
    		-
    		H^{\textup{hyp}}(-L - L' - \ell) \\
    		& \qquad\qquad\qquad
    		+
    		H^{\textup{hyp}}(L - L' - \ell)
    		-
    		H^{\textup{hyp}}(-L + L' - \ell)
    	\Bigr), \\
    	C^{\textup{hyp}}(L,\ell,\ell')
    	& \coloneqq
    	\frac{1}{L} \Bigl(
    		H^{\textup{hyp}}(L - \ell - \ell')
    		-
    		H^{\textup{hyp}}(-L - \ell - \ell')
    	\Bigr).
    \end{aligned}
    \end{equation}
    Together with the initial data $V^{\textup{WP}}_{0,3}(L_1,L_2,L_3) = 1$ and $V^{\textup{WP}}_{1,1}(L_1) = \frac{L_1^2}{48} + \frac{\pi^2}{12}$, this recursion uniquely determines the volumes.
\end{theorem}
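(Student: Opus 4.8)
The plan is to reproduce Mirzakhani's original argument \cite{Mir07a}, which rests on two pillars: a geometric identity valid pointwise on every hyperbolic surface with geodesic boundary, and an ``unfolding'' of the resulting integral over $\mathcal{M}_{g,n}^{\textup{hyp}}$ using the mapping class group action together with Wolpert's formula \cite{Wol85} for the Weil--Petersson symplectic form. The base cases $V^{\textup{WP}}_{0,3}=1$ (since $\mathcal{M}_{0,3}$ is a point) and $V^{\textup{WP}}_{1,1}(L_1)=\tfrac{L_1^2}{48}+\tfrac{\pi^2}{12}$ (a direct computation on the one-holed torus, with the factor $\tfrac12$ coming from the hyperelliptic involution) are inputs rather than consequences, so the real content is the recursion for $2g-2+n>1$.

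First I would set up the Mirzakhani--McShane identity. Fixing the boundary $\beta_1$ of length $L_1$, consider unit-speed geodesic rays leaving $\beta_1$ orthogonally. The set of starting points whose ray is either non-simple or spirals onto a closed geodesic has zero measure (the Birman--Series phenomenon); the complement is a countable union of open arcs, each associated bijectively to an embedded pair of pants $P$ having $\beta_1$ as one cuff. Two topological types occur: either the other two cuffs are interior simple closed geodesics $\gamma,\gamma'$, or one further cuff is a boundary $\beta_j$ with $2\le j\le n$ and the remaining cuff is an interior geodesic $\gamma$. Measuring the total length of the arcs attached to each $P$ by hyperbolic trigonometry in a pair of pants produces elementary closed-form functions $\mathcal{D}(L_1,\ell_\gamma,\ell_{\gamma'})$ and $\mathcal{R}(L_1,L_j,\ell_\gamma)$, and summing over all embedded pairs of pants yields $L_1=\sum_{\{\gamma,\gamma'\}}\mathcal{D}+\sum_{j=2}^n\sum_{\gamma}\mathcal{R}$.

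Next I would integrate this identity (after dividing by $L_1$, or equivalently after differentiating in $L_1$) over $\mathcal{M}_{g,n}^{\textup{hyp}}(L_1,\dots,L_n)$ against the Weil--Petersson volume form. Each sum runs over a single mapping-class-group orbit of (multi)curves, so the standard covering argument replaces $\int_{\mathcal{M}_{g,n}}\sum_{\textup{orbit}}f$ by an integral over the moduli space of the surface cut along the cuffs of $P$. Wolpert's formula $\omega_{\textup{WP}}=\sum_i d\ell_i\wedge d\tau_i$ in Fenchel--Nielsen coordinates shows that reintroducing a cutting geodesic of length $\ell$ contributes a factor $\ell\,d\ell$ (the twist $\tau$ sweeping $[0,\ell)$), so the cut integral factorizes into $\int_0^\infty \ell\,d\ell$ (or $\tfrac12\int_0^\infty\int_0^\infty\ell\ell'\,d\ell\,d\ell'$, with $\tfrac12$ for the unordered pair and the sum over stable genus/marking splittings) times a Weil--Petersson volume of strictly smaller $2g-2+n$: namely $V^{\textup{WP}}_{g,n-1}$, $V^{\textup{WP}}_{g-1,n+1}$, or a product $V^{\textup{WP}}_{h,1+|J|}V^{\textup{WP}}_{h',1+|J'|}$. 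The kernels then emerge as $B^{\textup{hyp}}=\tfrac1{L_1}$ times an integral transform of $\mathcal{R}$ and $C^{\textup{hyp}}=\tfrac1{L_1}$ times an integral transform of $\mathcal{D}$; an elementary computation identifies $H^{\textup{hyp}}(\ell)=2\log(1+e^{\ell/2})$ as the antiderivative of the logistic-type functions appearing in $\mathcal{D},\mathcal{R}$, reproducing exactly the stated $B^{\textup{hyp}},C^{\textup{hyp}}$.

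The main obstacle is Step~1: proving the identity with full rigor requires controlling the global behaviour of simple geodesics on a hyperbolic surface --- the measure-zero estimate for non-simple and spiralling directions, and the assertion that the complementary arcs are \emph{exactly} indexed by embedded pairs of pants --- together with the pair-of-pants trigonometry that yields the closed forms for $\mathcal{D}$ and $\mathcal{R}$. A secondary but non-trivial point is justifying the interchange of summation and integration and the validity of the unfolding, which rests on Mirzakhani's convergence estimates for the relevant series and the finiteness of Weil--Petersson volumes. Once these analytic inputs are in place, the remainder is bookkeeping of Fenchel--Nielsen Jacobians and a short integral evaluation.
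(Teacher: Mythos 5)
Your proposal is a correct outline of Mirzakhani's original argument, which is precisely what the paper relies on: Theorem~\ref{thm:Mir} is quoted from \cite{Mir07a}, and the paper's own discussion in \cref{sec:geomrorigins} sketches the same mechanism you describe (orthogeodesics from $\partial_1\Sigma$ selecting $B$- and $C$-type pairs of pants, the McShane--Mirzakhani identity, unfolding over mapping class group orbits, and Wolpert's $\sum_i d\ell_i\wedge d\tau_i$ producing the $\ell\,d\ell$ factors and the kernels as integral transforms with antiderivative $H^{\textup{hyp}}$). So the approach matches the paper's; the analytic inputs you flag (Birman--Series, convergence of the series, finiteness of the volumes) are exactly the nontrivial ingredients supplied in \cite{Mir07a}.
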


The Kontsevich volumes satisfy exactly the same recursion, but with different kernels and initial data. To the best of our knowledge, the recursion relation for the Kontsevich volumes first appeared in this form in \cite{BCSW12}; its proof paralleling Mirzakhani's argument was later given in \cite{ABCGLW}. It is equivalent to the Virasoro constraints of Dijkgraaf--Verlinde--Verlinde \cite{DVV91}, which in turn follow from the fact that the associated partition function is a solution of the KdV hierarchy, as conjectured by Witten and proved by Kontsevich \cite{Wit91,Kon92}.

\begin{theorem}[Kontsevich et al.]\label{thm:Kon}
    For $2g-2+n > 1$, the Kontsevich volumes satisfy the recursion relation
	\begin{equation}
    \begin{split}
    	V^{\textup{Kon}}_{g,n}(L_1,\ldots,L_n)
    	&=
    	\int_{0}^{+\infty}
    		d\ell \,
    		  \ell \, B^{\textup{comb}}(L_1,L_m,\ell)
    		  V^{\textup{Kon}}_{g,n-1}(\ell,L_2,\dots, \widehat{L_m},\dots, L_n) \\
    	& \quad
    	+ \frac{1}{2} \int_{0}^{+\infty} \int_{0}^{+\infty}
    		d\ell \, d\ell' \,
    		\ell \ell' \, C^{\textup{comb}}(L_1,\ell,\ell') \Bigg(
    			V^{\textup{Kon}}_{g-1,n+1}(\ell,\ell',L_2,\dots,L_n) \\
    	& \qquad\qquad\qquad\qquad\qquad
    			+
    			\sum_{\substack{ h + h' = g \\ J \sqcup J' = \{2,\dots,n\} }}^{\textup{stable}}
    				V^{\textup{Kon}}_{h,1+|J|}(\ell,L_{J})
    				V^{\textup{Kon}}_{h',1+|J'|}(\ell',L_{J'})
    	\Bigg),
    \end{split}
    \end{equation}
	where the $B$ and $C$ kernels are defined in terms of $H^{\textup{comb}}(\ell) \coloneqq \ell \theta(\ell)$, the ramp function:
    \begin{equation}\label{eq:comb:kernels}
    \begin{aligned}
    	B^{\textup{comb}}(L,L',\ell)
    	& \coloneqq
    	\frac{1}{2L} \Bigl(
    		H^{\textup{comb}}(L + L' - \ell)
    		-
    		H^{\textup{comb}}(-L - L' - \ell) \\
    		& \qquad\qquad\qquad
    		+
    		H^{\textup{comb}}(L - L' - \ell)
    		-
    		H^{\textup{comb}}(-L + L' - \ell)
    	\Bigr), \\
    	C^{\textup{comb}}(L,\ell,\ell')
    	& \coloneqq
    	\frac{1}{L} \Bigl(
    		H^{\textup{comb}}(L - \ell - \ell')
    		-
    		H^{\textup{comb}}(-L - \ell - \ell')
    	\Bigr).
    \end{aligned}
    \end{equation}
    Together with the initial data $V^{\textup{Kon}}_{0,3}(L_1,L_2,L_3) = 1$ and $V^{\textup{Kon}}_{1,1}(L_1) = \frac{L_1^2}{48}$, this recursion uniquely determines the volumes.
\end{theorem}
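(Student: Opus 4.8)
The plan is to deduce the recursion from the Virasoro/DVV recursion for $\psi$-class intersection numbers \cite{DVV91,Wit91,Kon92} by a Laplace-type transform, and then to verify the two base cases. \emph{Step 1 (reduction to intersection numbers).} Expand the exponential in $V^{\textup{Kon}}_{g,n}(L_1,\dots,L_n)=\int_{\Mbar_{g,n}}\exp\bigl(\tfrac12\sum_i L_i^2\psi_i\bigr)$. Since $\dim_{\C}\Mbar_{g,n}=3g-3+n$, only the top-degree monomial survives and
\[
    V^{\textup{Kon}}_{g,n}(L_1,\dots,L_n)
    =
    \sum_{d_1+\cdots+d_n=3g-3+n}
        \langle \tau_{d_1}\cdots\tau_{d_n}\rangle_g
        \prod_{i=1}^n \frac{(L_i^2/2)^{d_i}}{d_i!},
\]
where $\langle\tau_{d_1}\cdots\tau_{d_n}\rangle_g=\int_{\Mbar_{g,n}}\psi_1^{d_1}\cdots\psi_n^{d_n}$. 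Hence it is enough to show, coefficient by coefficient in the $L_i$, that the stated recursion is implied by the DVV recursion, which expresses $(2d_1+1)!!\,\langle\tau_{d_1}\prod_{i\ge2}\tau_{d_i}\rangle_g$ as a sum of a ``merging'' term $\sum_{m\ge2}\tfrac{(2d_1+2d_m-1)!!}{(2d_m-1)!!}\langle\tau_{d_1+d_m-1}\prod_{i\ne1,m}\tau_{d_i}\rangle_g$ and ``splitting'' terms $\tfrac12\sum_{a+b=d_1-2}(2a+1)!!(2b+1)!!\bigl(\langle\tau_a\tau_b\prod_{i\ge2}\tau_{d_i}\rangle_{g-1}+\sum^{\textup{stable}}\langle\tau_a\cdots\rangle_h\langle\tau_b\cdots\rangle_{h'}\bigr)$. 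Only this implication is needed (not the converse), and note that its $d_1=0$ case is precisely the string equation, so no further input is required.

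\emph{Step 2 (the transform).} Multiply the DVV recursion by $\prod_{i=1}^n\tfrac{(L_i^2/2)^{d_i}}{d_i!}$ and sum over all $d_i\ge0$. The left-hand side assembles into $V^{\textup{Kon}}_{g,n}(L)$ (up to the weight $(2d_1+1)!!/d_1!$ attached to the first leg), while on the right the merging and splitting sums turn into the $B^{\textup{comb}}$- and $C^{\textup{comb}}$-integrals of the statement. The mechanism is a family of explicit, elementary integral identities: because $H^{\textup{comb}}(\ell)=\ell\theta(\ell)$ vanishes for $\ell<0$, the kernel $B^{\textup{comb}}(L_1,L_m,\ell)$ is a compactly supported, piecewise-polynomial function of $\ell$ (supported on $0\le\ell\le L_1+L_m$), and $C^{\textup{comb}}(L_1,\ell,\ell')$ is supported on $\ell+\ell'\le L_1$. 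Consequently each integral $\int_0^\infty\ell^{2d+1}\,B^{\textup{comb}}(L_1,L_m,\ell)\,d\ell$ is a finite sum of monomials $L_1^{2a}L_m^{2b}$, and one checks term by term that its coefficients reproduce the double-factorial weights of DVV once the normalizations $(L_i^2/2)^{d_i}/d_i!$ are accounted for; the analogous computation with $\ell\ell'\,C^{\textup{comb}}(L_1,\ell,\ell')$ matches the splitting weights $(2a+1)!!(2b+1)!!/(2d_1+1)!!$. The prefactors $\tfrac1{2L_1}$, $\tfrac1{L_1}$ in the kernels together with the extra factors of $\ell$ (and $\ell\ell'$) in the measures are exactly what is needed to convert $(2d_1+1)!!/d_1!$ on the left into the clean generating-function coefficient on both sides.

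\emph{Step 3 (base cases and the main obstacle).} Evaluate the initial data directly: $V^{\textup{Kon}}_{0,3}=\int_{\Mbar_{0,3}}1=1$, and $V^{\textup{Kon}}_{1,1}(L_1)=\tfrac12 L_1^2\,\langle\tau_1\rangle_1=\tfrac{L_1^2}{48}$ using $\langle\tau_1\rangle_1=\tfrac1{24}$. Since each term produced by the recursion strictly decreases $2g-2+n$ and never leaves the stable range, these two values determine all $V^{\textup{Kon}}_{g,n}$ with $2g-2+n>1$. I expect the crux to be Step 2: the bookkeeping that matches the piecewise-polynomial integrals of $B^{\textup{comb}}$ and $C^{\textup{comb}}$ against the double-factorial coefficients of DVV, where the $\tfrac1{2L_1}$ prefactors, the ramp function, and the $(2k+1)!!$ normalizations must conspire exactly. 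An alternative route that bypasses DVV is to run Mirzakhani's geometric argument directly in the combinatorial model --- integrate a combinatorial analog of the McShane identity over $\M^{\textup{comb}}_{g,n}(L_1,\dots,L_n)$ and unfold it along the mapping-class-group orbits of simple closed curves, as in \cite{ABCGLW} --- in which case the difficulty migrates to establishing the combinatorial McShane identity and justifying the unfolding.
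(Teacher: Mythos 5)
The paper itself does not prove this theorem: it is quoted as background, with the recursion attributed to \cite{BCSW12}, the Mirzakhani-style geometric proof to \cite{ABCGLW}, and its equivalence to the DVV/Virasoro constraints only asserted in passing. Your proposal fills in that second route, and it does work; the only place where you wave your hands ("one checks term by term") closes with two elementary moment identities. Since $H^{\textup{comb}}(\ell)=\ell\,\theta(\ell)$, one has $\int_0^{+\infty}\ell^{2k+1}(a-\ell)\theta(a-\ell)\,d\ell=\frac{a^{2k+3}}{(2k+2)(2k+3)}$, hence
\[
\int_0^{+\infty}\!\ell^{2k+1}\,B^{\textup{comb}}(L_1,L_m,\ell)\,d\ell
=\frac{(L_1+L_m)^{2k+3}+(L_1-L_m)^{2k+3}}{2L_1\,(2k+2)(2k+3)},
\qquad
\int_0^{+\infty}\!\!\int_0^{+\infty}\!\ell^{2a+1}\ell'^{2b+1}\,C^{\textup{comb}}(L_1,\ell,\ell')\,d\ell\,d\ell'
=L_1^{2a+2b+4}\,\frac{(2a+1)!\,(2b+1)!}{(2a+2b+5)!}.
\]
Expanding the first identity binomially with $k=d_1+d_m-1$ and dividing by the generating-function weights $2^{d_1}d_1!\,2^{d_m}d_m!$ reproduces exactly the DVV merging weight $(2d_1+2d_m-1)!!/\bigl((2d_1+1)!!(2d_m-1)!!\bigr)$, while $(2a+1)!/(2^a a!)=(2a+1)!!$ together with $2^{d_1}d_1!\,(2d_1+1)!!=(2d_1+1)!$ (for $d_1=a+b+2$) turns the second identity into the splitting weight $\tfrac12(2a+1)!!(2b+1)!!/(2d_1+1)!!$; low cases confirm the normalization, e.g.\ $\int_0^{+\infty}\ell\,B^{\textup{comb}}(L,L',\ell)\,d\ell=\tfrac{L^2}{6}+\tfrac{L'^2}{2}$, which summed over $m$ gives $V^{\textup{Kon}}_{0,4}=\tfrac12\sum_i L_i^2$, and the $(1,2)$ case balances only after adding the $C$-term $L_1^4/240$. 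Two remarks: the theorem as printed omits the sum over $m=2,\dots,n$ in the $B$-term (it is explicit in the discrete recursion of Theorem A), and your bookkeeping correctly restores it; and the geometric alternative you sketch at the end is precisely the \cite{ABCGLW} route the paper cites, which is what gives the kernels their probabilistic meaning in the combinatorial model, but it is not needed once the DVV matching above is in place.
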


Finally, Norbury's discrete volumes satisfy an identical recursion, with the only difference that integrals are replaced by sums \cite{Nor10}.

\begin{theorem}[Norbury]\label{thm:Nor}
    For $2g-2+n > 1$ and $b_1 + \cdots + b_n$ even, the discrete Norbury volumes satisfy the recursion relation
	\begin{equation}
    \begin{split}
    	N^{\textup{Nor}}_{g,n}(b_1,\ldots,b_n)
    	&=
    	\sum_{\beta > 0}
    		  \beta \, B^{\textup{comb}}(b_1,b_m,\beta)
    		  N^{\textup{Nor}}_{g,n-1}(\beta,b_2,\dots, \widehat{b_m},\dots, b_n) \\
    	& \quad
    	+ \frac{1}{2} \sum_{\beta,\beta' > 0}
    		\beta \beta' \, C^{\textup{comb}}(b_1,\beta,\beta') \Bigg(
    			N^{\textup{Nor}}_{g-1,n+1}(\beta,\beta',b_2,\dots,b_n) \\
    	& \qquad\qquad\qquad\qquad\qquad
    			+
    			\sum_{\substack{ h + h' = g \\ J \sqcup J' = \{2,\dots,n\} }}^{\textup{stable}}
    				N^{\textup{Nor}}_{h,1+|J|}(\beta,b_{J})
    				N^{\textup{Nor}}_{h',1+|J'|}(\beta',b_{J'})
    	\Bigg),
    \end{split}
    \end{equation}
	where the $B$ and $C$ kernels are as in \cref{eq:comb:kernels}.
    Together with the initial data $N^{\textup{Nor}}_{0,3}(b_1,b_2,b_3) = \frac{1 + (-1)^{b_1+b_2+b_3}}{2}$ and $N^{\textup{Nor}}_{1,1}(b_1) = \frac{1 + (-1)^{b_1}}{2}\frac{b_1^2 - 4}{48}$, this recursion uniquely determines the discrete volumes.
\end{theorem}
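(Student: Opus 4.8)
The plan is to prove Norbury's recursion by transporting Mirzakhani's proof of \cref{thm:Mir} into the combinatorial (Strebel) world, with every integral over moduli space replaced by a sum over integer edge lengths, following \cite{Nor10}. First I would fix conventions: $N^{\textup{Nor}}_{g,n}(b_1,\ldots,b_n)$ is the $1/|\Aut|$-weighted number of ribbon graphs of type $(g,n)$ all of whose edges carry a positive integer length and whose $n$ faces have perimeters $b_1,\ldots,b_n$; equivalently it is the number of lattice points of $\M^{\textup{comb}}_{g,n}(b)$. I would then work on the combinatorial Teichm\"uller cover $\mathcal{T}^{\textup{comb}}_{g,n}(b)$, on which the mapping class group acts with quotient $\M^{\textup{comb}}_{g,n}(b)$, so that counting lattice points of $\M^{\textup{comb}}_{g,n}(b)$ amounts to counting mapping-class-group orbits of lattice points of $\mathcal{T}^{\textup{comb}}_{g,n}(b)$.

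The core input is a combinatorial analog of the McShane--Mirzakhani identity. Distinguishing the first face $f_1$ of perimeter $b_1$, its boundary walk is partitioned into finitely many arcs, each determining a \emph{unique} embedded combinatorial pair of pants having $f_1$ as one boundary; the other two boundaries are either a second face $f_m$ together with an interior simple closed curve, or a pair of interior curves $\gamma,\gamma'$ whose union is non-separating or separating. The arc lengths sum to $b_1$ --- the identity we need. Summing over all lattice points of $\M^{\textup{comb}}_{g,n}(b)$ and inserting this identity, I would \emph{unfold} the resulting sum over mapping-class-group orbits of the distinguished pants: the $f_m$-type configurations reassemble into $N^{\textup{Nor}}_{g,n-1}$, and the $(\gamma,\gamma')$-type configurations reassemble into $N^{\textup{Nor}}_{g-1,n+1}$ and into products $N^{\textup{Nor}}_{h,1+|J|}N^{\textup{Nor}}_{h',1+|J'|}$ according to whether the curves are non-separating or separating. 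The length(s) of the interior curve(s) become the summation variable(s) $\beta$ (resp.\ $\beta,\beta'$), and the factors $\beta$ (resp.\ $\beta\beta'$) arise from counting the integer twist parameters along them when regluing, exactly as in the hyperbolic case.

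Each unfolded term is weighted by the number of lattice points of the relevant combinatorial pair of pants $\M^{\textup{comb}}_{0,3}$ compatible with the induced subdivision of $\partial f_1$ --- i.e.\ by the number of integer solutions of the linear ``combinatorial triangle'' relations between $b_1$ and the two other perimeters. A direct enumeration shows this weight is the piecewise-linear function obtained by assembling the ramp $H^{\textup{comb}}(\ell)=\ell\theta(\ell)$ into the kernels $B^{\textup{comb}}$ and $C^{\textup{comb}}$ of \cref{eq:comb:kernels}; this is the lattice-point shadow of the computation that produces Mirzakhani's kernels, and the ``stable'' restriction on the splitting sum drops precisely the arcs that would cut off a disk or an annulus, which do not occur. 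The base cases $N^{\textup{Nor}}_{0,3}(b_1,b_2,b_3)=\tfrac{1+(-1)^{b_1+b_2+b_3}}{2}$ and $N^{\textup{Nor}}_{1,1}(b_1)=\tfrac{1+(-1)^{b_1}}{2}\,\tfrac{b_1^2-4}{48}$ follow from directly enumerating integer Strebel graphs on the thrice-punctured sphere and the once-punctured torus, the parity factors reflecting that the sum of the $b_i$ must be even; the recursion then pins down everything.

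An alternative route, and one in the spirit of this paper: the discrete Norbury volumes coincide, up to normalization, with the pruned correlators of the Gaussian Unitary Ensemble --- a one-cut matrix model --- so \cref{thm:Nor} is the special case of \cref{thm:A} in which the building-block function degenerates to the ramp $H^{\textup{comb}}(\ell)=\ell\theta(\ell)$ (equivalently, the case of Eynard--Orantin topological recursion on the corresponding spectral curve). Either way, I expect the main obstacle to be the same: formulating and proving the combinatorial McShane identity with the correct bookkeeping of which embedded pants occur and with what multiplicity, and then verifying that the lattice-point count of the combinatorial pair of pants collapses cleanly onto the ramp-function kernels with no spurious parity or boundary-term corrections. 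Pinning down the twist-direction factors $\beta$ and $\beta\beta'$, and handling ribbon graphs with nontrivial automorphisms so that the unfolding is a genuine bijection of $1/|\Aut|$-weighted sets, are the steps that will demand the most care.
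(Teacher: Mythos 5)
The paper does not actually prove \cref{thm:Nor}: it is quoted from \cite{Nor10}, and the paper's own route to the same identity is the matrix-model one, namely the identification of the GUE pruned correlators with Norbury's lattice counts (\cref{eq:GUE:Nor}) together with the GUE specialization of \cref{thm:A} (for $Q=1$ one has $\sigma=1$, $\tau=0$, no $G$-contribution, and $H(\ell)=\ell\,\theta(\ell)$, reproducing exactly the kernels of \cref{eq:comb:kernels} and the stated initial data). Your primary route is genuinely different from that machinery: you transport Mirzakhani's argument to the combinatorial side --- a combinatorial McShane identity on combinatorial Teichm\"uller space, unfolding over the mapping class group, integer twist counting producing the factors $\beta$ and $\beta\beta'$, and the pants lattice count collapsing onto the ramp kernels. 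This is precisely the geometric origin the paper reviews in \cref{sec:geomrorigins} following \cite{Mir07a,ABCGLW} (``integrate the recursively computed constant function $1$ against the Dirac delta measure on lattice points''), and it is worth noting that Norbury's own proof in \cite{Nor10} is yet a third argument, a direct cell-by-cell combinatorial recursion, while the paper's derivation (following \cite{Nor13}) runs through the Eynard--Orantin loop equations on the spectral curve \labelcref{eq:SC} and the discrete Laplace transform. The trade-off is clear: your geometric route makes the kernels manifestly lattice counts of embedded pants and explains the twist factors and parity constraints, whereas the analytic route is what generalizes with no extra effort to arbitrary one-cut potentials, which is the whole point of \cref{thm:A}.

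Two caveats on your sketch. First, its entire technical content --- the combinatorial McShane identity with correct multiplicities and the verification that the pants lattice count assembles into $B^{\textup{comb}}$ and $C^{\textup{comb}}$ --- is asserted rather than proven; you flag this yourself, but as written the proposal is a plan, not a proof. Second, a point specific to the discrete setting: for the Kontsevich volumes the non-trivalent ribbon-graph strata are measure zero and can be ignored, but lattice points do sit on those lower-dimensional cells and carry nonzero weight (e.g.\ the four-valent graph is essential to obtaining $N^{\textup{Nor}}_{1,1}(b_1)=\frac{1+(-1)^{b_1}}{2}\frac{b_1^2-4}{48}$), so the McShane identity and the unfolding bijection must be checked on these degenerate strata as well, with the $1/|\Aut|$ weights; together with the parity bookkeeping (the sum of perimeters must be even, whence the factor $\frac{1+(-1)^{b_1+b_2+b_3}}{2}$), this is where the real work of your route lies.
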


\subsection{The geometric origin}
\label{sec:geomrorigins}
We conclude with a review of the geometric origin of the recursion relations above, following \cite{Mir07a} and \cite{ABCGLW}. In both hyperbolic and combinatorial context, the recursions arise from a recursive computation of the constant function $1$ on the respective models of the moduli space, which is then integrated against the Weil--Petersson volume form, the Kontsevich volume form, or the Dirac delta measure supported on the lattice points, respectively. Because the recursion is independent of the chosen measure, this also explains why the kernels for the Kontsevich and Norbury volumes coincide: the only difference lies in the measure, which merely converts integrals into sums.

A key ingredient in the integration process is the compatibility of all three measures with respect to cutting and gluing operations. In the continuous recursions, the integral $\int_0^{+\infty} d\ell \, \ell (\cdot) = \int_{0}^{+\infty} d\ell \int_{0}^{\ell} d\tau (\cdot)$ is interpreted as an integration over all possible hyperbolic or combinatorial Fenchel--Nielsen length and twist coordinates of the internal curve. An analogous geometric interpretation holds in the discrete setting, where only integer lengths and twists are allowed.
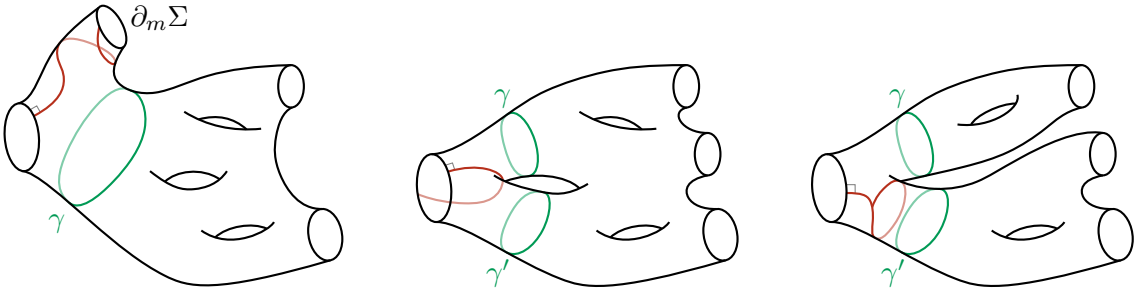
\begin{figure}[b]
  \centering
  {
  \tikzset{every picture/.style=thick}
  \begin{tikzpicture}[x=1pt,y=1pt,scale=.4]
    \draw[thin,opacity=.5] (183.7593, 729.9248) -- (189.9928, 733.5251) -- (193.3305, 727.5382);
    \draw[BrickRed, opacity=.5](211.7511, 793.1645) .. controls (216.627, 798.7579) and (226.5504, 797.5348) .. (238.0775, 793.4391) .. controls (249.6046, 789.3434) and (262.7355, 782.375) .. (263.8057, 774.7405);
    \draw[BrickRed](187.257, 723.9633) .. controls (211.8857, 737.2462) and (215.3571, 749.7916) .. (215.348, 758.2338) .. controls (215.3389, 766.676) and (211.8493, 771.0151) .. (210.1373, 776.5968) .. controls (208.4252, 782.1785) and (208.4908, 789.0028) .. (211.7511, 793.1645);
    \draw[BrickRed](263.8057, 774.7405) .. controls (262.57, 770.2811) and (256.0757, 774.9102) .. (251.7821, 780.9872) .. controls (247.4885, 787.0642) and (245.3954, 794.589) .. (248.5761, 808.2784);
    \draw[ForestGreen](220.133, 642.413) .. controls (230.051, 633.704) and (247.0255, 646.852) .. (261.5128, 662.0927) .. controls (276, 677.3333) and (288, 694.6667) .. (290.9638, 711.7898) .. controls (293.9275, 728.913) and (287.855, 745.826) .. (277.369, 748.285);
    \draw[ForestGreen, opacity=.5](277.369, 748.285) .. controls (267.726, 753.964) and (255.863, 744.982) .. (245.2648, 733.1577) .. controls (234.6667, 721.3333) and (225.3333, 706.6667) .. (218.0524, 690.0133) .. controls (210.7715, 673.36) and (205.543, 654.72) .. (220.133, 642.413);
    \draw(176, 672) .. controls (188, 672) and (216, 646) .. (240.6667, 624.3333) .. controls (265.3333, 602.6667) and (286.6667, 585.3333) .. (309.3333, 573.6667) .. controls (332, 562) and (356, 556) .. (464.699, 587.884);
    \draw(460, 612) circle[radius=1, cm={16,-4,0,24,(0,0)}];
    \draw(176, 704) circle[radius=1, cm={16,0,-2,32,(0,0)}];
    \draw(428, 752) ellipse[x radius=12, y radius=20];
    \draw(328, 728) .. controls (349.3333, 714.6667) and (373.3333, 709.3333) .. (400, 712);
    \draw(341.458, 720.808) .. controls (359.1527, 725.6027) and (374.1273, 722.4447) .. (386.382, 711.334);
    \draw(344, 616) .. controls (372, 596) and (408, 616) .. (412, 624);
    \draw(357.035, 609.515) .. controls (368, 620) and (392, 624) .. (405.362, 617.681);
    \draw(260, 808) circle[radius=1, cm={11.3137,0,-8,20,(0,0)}];
    \draw(172.845, 735.916) .. controls (188, 740) and (192, 764) .. (205, 784) .. controls (218, 804) and (240, 820) .. (247.936, 826.087);
    \draw(271.457, 789.268) .. controls (260, 772) and (260, 762) .. (266.6667, 755) .. controls (273.3333, 748) and (286.6667, 744) .. (300.9975, 746.049) .. controls (315.3283, 748.098) and (330.6567, 756.196) .. (350.3601, 761.9667) .. controls (370.0635, 767.7375) and (394.142, 771.181) .. (428.052, 772);
    \draw(456.951, 636.322) .. controls (436, 636) and (424, 654) .. (418, 668.3333) .. controls (412, 682.6667) and (412, 693.3333) .. (414, 704.6667) .. controls (416, 716) and (420, 728) .. (428.934, 732.061);
    \draw(296, 672) .. controls (316, 656) and (326, 654) .. (337, 655) .. controls (348, 656) and (360, 660) .. (368, 672);
    \draw(309.972, 662.04) .. controls (320, 672) and (344, 676) .. (361.221, 664.33);
    \node at (305, 816.5) {$\de_m \Sigma$};
    \node[ForestGreen] at (208.708, 622.152) {$\gamma$};

    \begin{scope}[xshift=13cm]
      \draw[thin,opacity=.5](202.4667, 677.5777) -- (209.1964, 679.7214) -- (211.3373, 673.2011);
      \draw[BrickRed, opacity=.5](256.336, 663.317) .. controls (256, 656) and (252, 650) .. (246, 646.3333) .. controls (240, 642.6667) and (232, 641.3333) .. (220, 641.6667) .. controls (208, 642) and (192, 644) .. (176.61, 650.314);
      \draw[BrickRed](205.017, 671.535) .. controls (220, 676) and (232, 676) .. (241, 674) .. controls (250, 672) and (256, 668) .. (256.336, 663.317);
      \draw[ForestGreen](279.374, 667.219) .. controls (290.525, 667.722) and (291.2625, 685.861) .. (285.9773, 701.866) .. controls (280.692, 717.871) and (269.384, 731.742) .. (260.42, 725.438);
      \draw[ForestGreen, opacity=.5](260.42, 725.438) .. controls (252.539, 719.814) and (254.2695, 707.907) .. (258.108, 694.265) .. controls (261.9465, 680.623) and (267.893, 665.246) .. (279.374, 667.219);
      \draw[ForestGreen](294.233, 652.127) .. controls (302.983, 650.866) and (301.4915, 631.433) .. (293.465, 616.0387) .. controls (285.4385, 600.6445) and (270.877, 589.289) .. (259.472, 595.569);
      \draw[ForestGreen, opacity=.5](259.472, 595.569) .. controls (250.614, 600.51) and (255.307, 616.255) .. (262.4798, 629.9933) .. controls (269.6525, 643.7315) and (279.305, 655.463) .. (294.233, 652.127);
      \draw(189.688, 687.994) .. controls (208, 688) and (236, 708) .. (254.6667, 721.3333) .. controls (273.3333, 734.6667) and (282.6667, 741.3333) .. (297.3333, 748) .. controls (312, 754.6667) and (332, 761.3333) .. (352, 765.6667) .. controls (372, 770) and (392, 772) .. (428.052, 772);
      \draw(194.593, 624.022) .. controls (208, 624) and (230, 612) .. (250.3333, 600.6667) .. controls (270.6667, 589.3333) and (289.3333, 578.6667) .. (310.6667, 570.3333) .. controls (332, 562) and (356, 556) .. (464.699, 587.884);
      \draw(460, 612) circle[radius=1, cm={16,-4,0,24,(0,0)}];
      \draw(192, 656) circle[radius=1, cm={16,0,-2,32,(0,0)}];
      \draw(448, 688) ellipse[x radius=12, y radius=20];
      \draw(428, 752) ellipse[x radius=12, y radius=20];
      \draw(428.493, 732.017) .. controls (412, 728) and (424, 708) .. (447.774, 707.996);
      \draw(447.984, 668) .. controls (428, 664) and (426, 654) .. (430, 647) .. controls (434, 640) and (444, 636) .. (456.329, 636.278);
      \draw(248, 668) .. controls (260, 660) and (276, 656) .. (290, 653) .. controls (304, 650) and (316, 648) .. (336, 660);
      \draw(258.448, 662.348) .. controls (284, 672) and (316, 668) .. (328.024, 655.739);
      \draw(328, 728) .. controls (349.3333, 714.6667) and (373.3333, 709.3333) .. (400, 712);
      \draw(341.458, 720.808) .. controls (359.1527, 725.6027) and (374.1273, 722.4447) .. (386.382, 711.334);
      \draw(344, 616) .. controls (372, 596) and (408, 616) .. (412, 624);
      \draw(357.035, 609.515) .. controls (368, 620) and (392, 624) .. (405.362, 617.681);
      \node[ForestGreen] at (256, 740) {$\gamma$};
      \node[ForestGreen] at (252, 576) {$\gamma'$};
    \end{scope}

    \begin{scope}[xshift=26cm]
      \draw[thin,opacity=.5](207.7156, 659.2374) -- (216.1697, 659.2375) -- (216.2145, 651.1117);;
      \draw[BrickRed, opacity=.5](234.863, 609.1302) .. controls (239.9432, 606.4022) and (244.9454, 611.3676) .. (249.6765, 617.4074) .. controls (254.4076, 623.4472) and (258.8675, 630.5614) .. (261.4032, 639.7206) .. controls (263.9389, 648.8798) and (264.5504, 660.0841) .. (258.448, 662.348);
      \draw[BrickRed](208.1211, 651.3745) .. controls (222.1133, 651.443) and (227.1733, 649.4002) .. (229.9307, 644.5661) .. controls (232.688, 639.7321) and (233.1427, 632.1069) .. (232.6948, 624.9749) .. controls (232.247, 617.8428) and (230.8965, 611.2038) .. (234.863, 609.1302);
      \draw[BrickRed](258.448, 662.348) .. controls (253.6513, 664.6681) and (249.0122, 661.1917) .. (244.4677, 656.9486) .. controls (239.9231, 652.7056) and (235.4732, 647.6958) .. (232.16, 637.9908);
      \draw[ForestGreen](279.374, 667.219) .. controls (290.525, 667.722) and (291.2625, 685.861) .. (285.9773, 701.866) .. controls (280.692, 717.871) and (269.384, 731.742) .. (260.42, 725.438);
      \draw[ForestGreen, opacity=.5](260.42, 725.438) .. controls (252.539, 719.814) and (254.2695, 707.907) .. (258.108, 694.265) .. controls (261.9465, 680.623) and (267.893, 665.246) .. (279.374, 667.219);
      \draw[ForestGreen](294.245, 655.236) .. controls (306.214, 655.877) and (305.107, 633.9385) .. (296.2727, 617.2915) .. controls (287.4385, 600.6445) and (270.877, 589.289) .. (259.472, 595.569);
      \draw[ForestGreen, opacity=.5](259.472, 595.569) .. controls (250.614, 600.51) and (255.307, 616.255) .. (262.9025, 630.1885) .. controls (270.498, 644.122) and (280.996, 656.244) .. (294.245, 655.236);
      \draw(189.688, 687.994) .. controls (208, 688) and (236, 708) .. (254.6667, 721.3333) .. controls (273.3333, 734.6667) and (282.6667, 741.3333) .. (297.3333, 748) .. controls (312, 754.6667) and (332, 761.3333) .. (352, 765.6667) .. controls (372, 770) and (392, 772) .. (428.052, 772);
      \draw(460, 612) circle[radius=1, cm={16,-4,0,24,(0,0)}];
      \draw(448, 688) ellipse[x radius=12, y radius=20];
      \draw(428, 752) ellipse[x radius=12, y radius=20];
      \draw(447.984, 668) .. controls (428, 664) and (426, 654) .. (430, 647) .. controls (434, 640) and (444, 636) .. (456.329, 636.278);
      \draw(258.448, 662.348) .. controls (284, 668) and (302, 672) .. (317.6667, 676) .. controls (333.3333, 680) and (346.6667, 684) .. (359.3333, 690) .. controls (372, 696) and (384, 704) .. (394, 712) .. controls (404, 720) and (412, 728) .. (428, 732);
      \draw(344, 616) .. controls (372, 596) and (408, 616) .. (412, 624);
      \draw(357.035, 609.515) .. controls (368, 620) and (392, 624) .. (405.362, 617.681);
      \draw(312, 716) .. controls (344, 712) and (372, 732) .. (368, 744);
      \draw(324.731, 715.617) .. controls (328, 724) and (352, 736) .. (366.568, 735.717);
      \draw(248, 668) .. controls (260, 660) and (276, 656) .. (290.6667, 655.3333) .. controls (305.3333, 654.6667) and (318.6667, 657.3333) .. (330, 662) .. controls (341.3333, 666.6667) and (350.6667, 673.3333) .. (371.3333, 684.6667) .. controls (392, 696) and (424, 712) .. (448, 708);
      \draw(194.593, 624.022) .. controls (208, 624) and (230, 612) .. (250.3333, 600.6667) .. controls (270.6667, 589.3333) and (289.3333, 578.6667) .. (310.6667, 570.3333) .. controls (332, 562) and (356, 556) .. (464.699, 587.884);
      \draw(192, 656) circle[radius=1, cm={16,0,-2,32,(0,0)}];
      \node[ForestGreen] at (256, 740) {$\gamma$};
      \node[ForestGreen] at (252, 576) {$\gamma'$};
    \end{scope}
  \end{tikzpicture}
  }
  \caption{
    \textbf{Geometric Origin of the Kernels:} By shooting an orthogeodesic (in red) from the first boundary component of the surface $\Sigma$, one determines one or two simple closed curves (in green). Different behaviors can arise: on the left, the orthogeodesic intersects the boundary component $\partial_m \Sigma$ ($B_m$-type), determining a single internal geodesic $\gamma$. In the two other cases, the orthogeodesic intersects $\partial_1 \Sigma$ or itself ($C$-type), determining two internal geodesics $\gamma$ and $\gamma'$. The kernels compute the probability of these different behaviors occurring.
    }
  \label{fig:orthogeodesic:behaviour}
\end{figure}

The geometric origin of the recursion kernels is also parallel in the two models. One picks a random point on the first boundary component $\partial_1 \Sigma$ of the underlying surface $\Sigma$, where “random” means distributed according to the probability measure induced by the hyperbolic or Strebel metric. From this random point, one shoots an orthogeodesic. This orthogeodesic determines a unique pair of pants, and topologically there are only two possible configurations (see \cref{fig:orthogeodesic:behaviour}):
\begin{description}
    \item[\boldmath$B_m$-type:] The pair of pants bounds two external boundary components $\partial_1 \Sigma$ and $\partial_m \Sigma$, together with an internal geodesic $\gamma$.
    \item[\boldmath$C$-type:] The pair of pants bounds the first external boundary component $\partial_1 \Sigma$ together with two internal geodesics $\gamma$ and $\gamma'$.
\end{description}

The hyperbolic and combinatorial $B$ and $C$ kernels are thus the probabilities, with respect to the hyperbolic or Strebel metric, that the pair of pants associated with a random point on the first boundary component is of $B$- or $C$-type:
\begin{equation}
\begin{aligned}
    B^{\textup{hyp}}(L_1,L_m,\ell)
    &=
    \mathrm{Prob}^{\textup{hyp}}\!\left(
        \substack{
        \text{point in $\partial_1\Sigma$} \\
        \text{determines a $B_m$-type pair of pants} \\
        \text{with boundary lengths $(L_1,L_m,\ell)$}
        }
    \right), \\
    C^{\textup{hyp}}(L_1,\ell,\ell')
    &=
    \mathrm{Prob}^{\textup{hyp}}\!\left(
        \substack{
        \text{point in $\partial_1\Sigma$} \\
        \text{determines a $C$-type pair of pants} \\
        \text{with boundary lengths $(L_1,\ell,\ell')$}
        }
    \right).
\end{aligned}
\end{equation}
The exact same interpretation applies to the combinatorial kernels, with the notion of probability defined using the Strebel metric instead of the hyperbolic one. This explains why the kernels take value in $[0,1]$. The hyperbolic probabilities are computed by Mirzakhani in \cite{Mir07a}, while the combinatorial ones are computed in \cite{ABCGLW}. A general theory producing topological recursion relations from functions on moduli spaces, called geometric recursion, was developed in \cite{ABO17}. Applications of geometric recursion to other volumes on moduli space include Masur--Veech volumes \cite{Andersen:MV}, whose JT gravity interpretation was found in \cite{Fuji:MV}.

\section{Pruned matrix correlators as discrete volumes}
\label{sec:matrix}
We now move to random matrix theory. The main point of this section is that certain matrix model correlators, called \emph{pruned traces}, define in a precise sense some \emph{discrete volumes} of moduli space, which we denote as
\begin{equation} \label{eq:mdiscvols}
    N_{g,n}(b_1,\ldots,b_n)
    \coloneqq
    \left\langle \prod_{i=1}^n \frac{1}{b_{i}} \NTr{M^{b_i}} \right\rangle_{\!\!g,\cc} .
\end{equation}
Such pruned traces are defined from the matrix integral, either diagrammatically or via topological recursion on the associated spectral curve. In the special case where the matrix integral is purely Gaussian, they admit an independent definition through the combinatorial description of the moduli space: they coincide with Norbury's lattice point counts $N^{\textup{Nor}}_{g,n}$ on $\M_{g,n}$, which enumerate integer Strebel graphs as reviewed in the previous section. The discreteness of the volumes is fundamentally tied to the fact that we study matrix integrals in a standard 't~Hooft limit rather than the double-scaling limit. A discrete analog of the Kontsevich model \cite{Kon92} had been presciently discussed by Chekhov \cite{ChekhovDisc} using a matrix integral introduced in \cite{ChekhovMakeenkoKP}.

In what follows, we first explain how to define pruned correlators in a generic one-cut matrix model. We then show how, in the GUE case, they reproduce Norbury's discrete counting of lattice points on the moduli space of curves. This construction follows the approach of \cite{Gop04a,Gop04b,Gop05} developed in the context of gauge/string duality. In essence, in the Feynman diagram expansion of the matrix correlator, each graph can be naturally identified with a point on the moduli space. Finally, we extend the discussion to interacting matrix models and demonstrate that a similar notion of discreteness persists to all orders in perturbation theory. Even non-perturbatively, a remnant of this discreteness remains: the parameters $b_i$, representing the boundary lengths of the dual Riemann surfaces, take integer values, consistent with their origin as matrix powers.

\subsection{Traces: standard vs. pruned}
Consider the following large $\ms{N}$ Hermitian matrix model with a single-trace potential:
\begin{equation} \label{eq:MM}
    Z_{\ms{N}}
    \coloneqq
    \int_{\mathcal{H}_{\ms{N} \times \ms{N}}} dM \ e^{-\ms{N} \Tr(V(M))},
    \qquad
    dM = \frac{2^{\frac{N(N-1)}{2}}}{\mathrm{Vol}( \mathrm{U}(\ms{N}))}
    \prod_i dM_{ii} \prod_{i<j} d\Re M_{ij}\, d\Im M_{ij}
\end{equation}
with \smash{$\mathrm{Vol}(\mathrm{U}(\ms{N})) = (2\pi)^{\frac{N(N+1)}{2}}/\prod_{k=1}^{N-1}k!$} and $V(M)$ is an arbitrary potential. We assume that the eigenvalue distribution of $M$ is supported on a single interval $[a_-, a_+]$, in which case the model is said to be in the \emph{one-cut phase}.

The $n$-point functions of \emph{standard traces} are defined by
\begin{equation}
    \left\langle \prod_{i=1}^n \Tr{M^{b_i}} \right\rangle
    \coloneqq
    \frac{1}{Z_{\ms{N},0}}
    \int_{\mathcal{H}_{\ms{N} \times \ms{N}}}
        dM \ e^{-\ms{N} \Tr(V(M))} \prod_{i=1}^n \Tr{M^{b_i}},
\end{equation}
where $Z_{\ms{N},0}$ is the partition function with purely Gaussian potential, i.e. $V(M) = \frac{M^2}{2}$. Their connected version, denoted by the subscript $\cc$, admits a natural $1/\ms{N}$ expansion:
\begin{equation}
     \left\langle \prod_{i=1}^n \Tr{M^{b_i}} \right\rangle_{\!\!\cc}
     =
     \sum_{g \ge 0} \ms{N}^{2-2g-n}
     \left\langle \prod_{i=1}^n \Tr{M^{b_i}} \right\rangle_{\!\!g,\cc}.
\end{equation}
These standard traces are conveniently encoded in a genus-$g$, $n$-point differential:
\begin{equation}\label{eq:Wgn}
\begin{split}
    \omega_{g,n}(z_1,\ldots,z_n)
    &\coloneqq
    \left\langle \prod_{i=1}^n \Tr{ \frac{dx_i(z)}{x_i(z) - M} } \right\rangle_{\!\!g,\cc} \\
    &=
    \sum_{b_{1},\ldots,b_n = 1}^{\infty}
        \left\langle \prod_{i=1}^n \Tr{M^{b_i}} \right\rangle_{\!\!g,\cc} \,
        \prod_{i=1}^n x_i(z)^{b_i-1} \, dx_{i}(z),
\end{split}
\end{equation}
where
\begin{equation}\label{eq:Joukowsky}
    x(z) = \gamma \left(z + \frac{1}{z}\right) + \delta,
    \qquad
    \gamma = \frac{a_+ - a_-}{4},
    \qquad
    \delta = \frac{a_+ + a_-}{2},
\end{equation}
is the Joukowsky variable (cf.~\cref{sec:discrete:rec}).

The matrix correlators relevant to the discrete volumes are not those of standard traces, but rather those of \emph{pruned traces}, denoted by $\NTr{M^b}$. As mentioned in the introduction, pruning can be viewed as a genus-zero analog of normal ordering, hence the notation, in the sense that the planar one-point function vanishes, $\langle \NTr{ M^{b} } \rangle_{g=0} = 0$, though higher-genus contributions may not. Diagrammatically, pruning corresponds to removing all petals from Feynman diagrams, where petals represent planar Wick contractions between neighboring edges attached to the same vertex. This interpretation is encoded in the fact that $x(z)$ is essentially the generating function of the Catalan numbers counting such petals.

Concretely, connected correlators of pruned traces are neatly related to those of standard traces as
\begin{equation}\label{eq:corr:TR}
     \omega_{g,n}(z_1,\ldots,z_n)
     =
     \sum_{b_{1},\ldots,b_n=1}^{\infty}
        \underbrace{\left\langle \prod_{i=1}^n \frac{1}{b_i} \NTr{ M^{b_i} } \right\rangle_{\!\!g,\cc}}_{\eqqcolon N_{g,n}(b_1,\ldots,b_n)} \
        \prod_{i=1}^n b_i z_i^{b_i-1} \, dz_{i} .
\end{equation}
The above equation defines the connected correlators of pruned traces. Such quantities have been considered in the mathematical literature by Norbury and Scott in \cite{NS13}, purely from the perspective of abstract topological recursion. For instance, they prove the quasi-polynomiality\footnote{
	A function $N(b_1,\dots,b_n)$ is called a \emph{quasi-polynomial} if it restricts to an honest polynomial on each coset of the sublattice $2\Z^n \subset \Z^n$. Equivalently, $N$ can be expressed as a polynomial in the variables $b_1,\dots,b_n$ and in the parity indicators $(-1)^{b_1},\dots,(-1)^{b_n}$.
} of these quantities, a property that is far from transparent from the matrix model perspective. We also mention that the above relation between the pruned correlators and the correlation differentials is nothing but a \emph{discrete Laplace transform} (also known as the Z-transform in signal processing theory). Connections between the Eynard--Orantin topological recursion and the Laplace transform have been extensively studied in the literature, especially in the context of mirror symmetry. One novelty here, in accordance with the motto of the paper, is its discrete flavor.

As pointed out to us by A.~Levine, one can use the Joukowsky map to summarize the relation between pruned and standard traces succinctly in terms of Chebyshev polynomials of the first kind, defined by $T_b(\cos\theta) = \cos(b\theta)$:
\begin{equation} \label{eq:prune:Cheby}
    \frac{1}{b}\NTr{ M^{b} }
    \;\longleftrightarrow\;
    \Tr 2\,T_{b}\Bigl( \frac{M - \delta}{2\gamma} \Bigr).
\end{equation}
The correspondence should be understood as an identity holding inside any correlator.

\subsection{From GUE to lattice points on \texorpdfstring{$\M_{g,n}$}{the moduli of curves}} 
\label{subsec:GUE:correl}
In this subsection, we explain how the correlators of pruned traces in the purely Gaussian case are connected with the lattice point count on the moduli space of curves discussed in the previous section:
\begin{equation} \label{eq:GUE:Nor}
    N^{\textup{Nor}}_{g,n}(b_1,\ldots,b_n)
    =
    \left\langle \prod_{i=1}^n \frac{1}{b_{i}} \NTr{M^{b_i}} \right\rangle_{\!\!g,\cc}^{\!\!\textup{GUE}} .
\end{equation}
In this case, the eigenvalues distribution is supported on $[-2,2]$, hence $\gamma = 1$ and $\delta = 0$ in the Joukowsky map \labelcref{eq:Joukowsky}.

This correspondence admits a diagrammatic interpretation, first articulated in \cite{Gop04a,Gop04b,Gop05}. In the Gaussian matrix model, the observables on the right-hand side of \cref{eq:GUE:Nor} can be computed via free-field Wick contractions. Rephrased diagrammatically, one computes the correlators by summing over all (topologically nonequivalent) Feynman diagrams with only external vertices. As explained above, pruning corresponds to removing petals, i.e. planar Wick contractions between adjacent edges attached to the same vertex. The valence of an external vertex equals the power of the corresponding trace insertion in the expectation value. Fixing the power of $\ms{N}$, the size of the matrix, selects the genus $g$ of the diagram.

However, these Feynman diagrams cannot be directly identified with integer Strebel graphs for two reasons. First, Strebel graphs have as many \emph{faces} as boundaries, whereas our correlators generate Feynman diagrams with as many \emph{vertices} as boundaries, corresponding to the number of single-trace operators. This suggests they are graph dual to one another. However, Strebel graphs are required to have vertices of valency at least three, while matrix model Feynman diagrams can have two-sided faces. Second, Strebel graphs are \emph{metrized} ribbon graphs, whereas matrix model Feynman diagrams do not naturally carry a notion of edge-length.

To resolve these mismatches, we construct the associated Strebel graph starting from the pruned Feynman diagram in two steps, illustrated in the left panel of \cref{fig:FDtoMgn}:
\begin{enumerate}[label=\roman*)]
    \item First, assign length~$1$ to each edge of the Feynman diagram. Then identify homotopic edges, namely those that bound two-sided faces, and collapse all such homotopic edges into a single effective edge carrying a length equal to the number of collapsed edges. The resulting diagram is called the \emph{skeleton graph} of the original Feynman diagram; it has no petals and no two-sided faces, but edges carry integer edge-lengths.
    
    \item Second, take the \emph{graph dual} of the skeleton graph. This exchanges vertices and faces: since the skeleton graph has no petals and no two-sided faces, its dual automatically has vertices of valency three or higher, as required for Strebel graphs. The duality map preserves edge adjacencies, and the integer edge lengths carry over to the dual graph. The resulting graph is the sought integer Strebel graph.
\end{enumerate}

\begin{figure}[t]
    \centering
    \tikzset{every picture/.style=thick}
    \begin{tikzpicture}[x=1pt,y=1pt,scale=.7]
        \draw[line width=5pt] (56, 440) .. controls (93.3333, 477.3333) and (133.3333, 477.3333) .. (176, 440);
        \draw[white, line width=3.5pt] (56, 440) .. controls (93.3333, 477.3333) and (133.3333, 477.3333) .. (176, 440);
        \draw[line width=5pt] (56, 364) rectangle (176, 440);
        \draw[white, line width=3.5pt] (56, 364) rectangle (176, 440);
        \filldraw[fill=white](176, 364) circle[radius=5.6569];
        \filldraw[fill=white](56, 364) circle[radius=5.6569];
        \filldraw[fill=white](56, 440) circle[radius=5.6569];
        \filldraw[fill=white](176, 440) circle[radius=5.6569];
        \draw[BurntOrange,line width=5pt](116, 400) .. controls (76, 394.6667) and (42.6667, 402.6667) .. (16, 424);
        \draw[white, line width=3.5pt](116, 400) .. controls (76, 394.6667) and (42.6667, 402.6667) .. (16, 424);
        \draw[BurntOrange,line width=5pt](116, 400) .. controls (124, 456) and (64, 516) .. (16, 424);
        \draw[white, line width=3.5pt](116, 400) .. controls (124, 456) and (64, 516) .. (16, 424);
        \draw[BurntOrange,line width=5pt](116, 400) .. controls (224, 400) and (212, 440) .. (188.6667, 464) .. controls (165.3333, 488) and (130.6667, 496) .. (91.3333, 491) .. controls (52, 486) and (8, 468) .. (16, 424);
        \draw[white, line width=3.5pt](116, 400) .. controls (224, 400) and (212, 440) .. (188.6667, 464) .. controls (165.3333, 488) and (130.6667, 496) .. (91.3333, 491) .. controls (52, 486) and (8, 468) .. (16, 424);
        \draw[BurntOrange,line width=5pt](16, 424) .. controls (4, 336) and (100, 312) .. (116, 400);
        \draw[white, line width=3.5pt](16, 424) .. controls (4, 336) and (100, 312) .. (116, 400);
        \filldraw[BurntOrange,fill=white] (16, 424) circle[radius=5.6569];
        \filldraw[BurntOrange,fill=white] (116, 400) circle[radius=5.6569];
        \draw(440, 360) .. controls (472, 381.3333) and (506.6667, 381.3333) .. (544, 360) .. controls (533.3333, 392) and (546.6667, 402.6667) .. (584, 392) .. controls (562.6667, 418.6667) and (565.3333, 440) .. (592, 456) .. controls (560, 445.3333) and (530.6667, 461.3333) .. (504, 504) .. controls (472, 456) and (445.3333, 440) .. (424, 456) .. controls (466.6667, 418.6667) and (472, 386.6667) .. (440, 360) -- cycle;
        \node[BurntOrange] at (472, 384) {\scriptsize$\bullet$};
        \node[BurntOrange] at (472, 416) {\scriptsize$\bullet$};
        \node[BurntOrange] at (452, 444) {\scriptsize$\bullet$};
        \node[BurntOrange] at (492, 464) {\scriptsize$\bullet$};
        \node[BurntOrange] at (500, 444) {\scriptsize$\bullet$};
        \node[BurntOrange] at (528, 436) {\scriptsize$\bullet$};
        \node[BurntOrange] at (524, 460) {\scriptsize$\bullet$};
        \node[BurntOrange] at (560, 428) {\scriptsize$\bullet$};
        \node[BurntOrange] at (552, 416) {\scriptsize$\bullet$};
        \node[BurntOrange] at (532, 400) {\scriptsize$\bullet$};
        \node[BurntOrange] at (520, 380) {\scriptsize$\bullet$};
        \node[BurntOrange] at (516, 416) {\scriptsize$\bullet$};
        \node[BurntOrange] at (500, 404) {\scriptsize$\bullet$};
        \node at (105, 454) [right] {\scriptsize$2$};
        \node at (148, 488) [above] {\scriptsize$1$};
        \node at (64, 344) [below] {\scriptsize$1$};
        \node at (80, 400) [above] {\scriptsize$1$};
        \draw[<-] (80, 333).. controls (108.6243, 329.4701) and (110.908, 339.7071) .. (127.7074, 340.0633);
        \node at (156, 345) {\scriptsize{integral}};
        \node at (156, 330) {\scriptsize{Strebel}};
        \node at (156, 315) {\scriptsize{coordinates}};
        \node at (310, 478) {\small{Feynman diagram}};
        \node at (310, 420) {\small{integer Strebel graph}};
        \node at (310, 362) {\small{lattice point on moduli space}};
        \draw[<->](310, 436) -- (310, 462);
        \draw[<->](310, 378) -- (310, 404);
        \node at (570, 360) {$\mathcal{M}_{g,n}$};
    \end{tikzpicture}
    \caption{
    \textbf{Matrix correlators as lattice point counts on $\M_{g,n}$.}
    Expanding matrix model correlators in terms of Feynman diagrams allows one to reinterpret them as a count of discrete lattice points on the moduli space. Collapsing homotopic edges of a Feynman diagram yields a skeleton graph with integer edge lengths. By taking its graph dual, one obtains an integer Strebel graph parameterizing the corresponding point on moduli space.
    }
    \label{fig:FDtoMgn}
\end{figure}
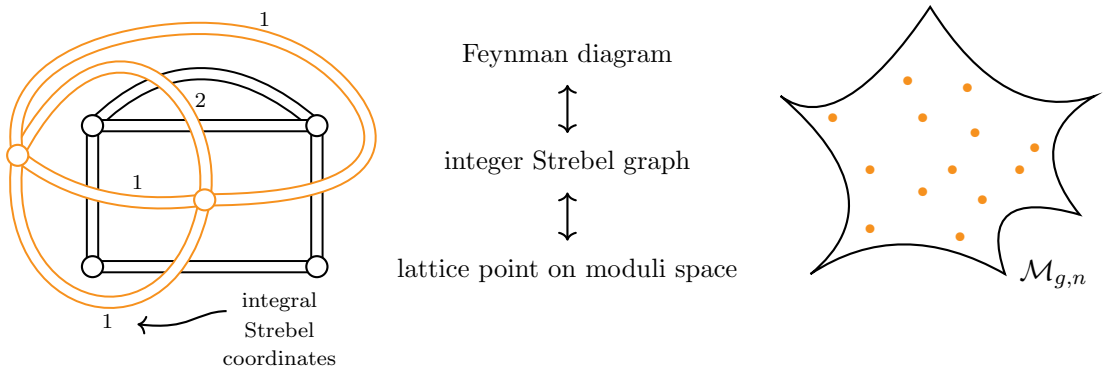

This construction establishes a one-to-one correspondence between each set of Wick contractions (equivalently, each Feynman diagram) and a point in the combinatorial moduli space. The edge lengths of the resulting Strebel graph serve as coordinates on this space; their integrality produces a discrete subset of points. Since the combinatorial moduli space is isomorphic to $\M_{g,n}$ by Strebel's theorem, this discrete subset corresponds precisely to the lattice points of $\M_{g,n}$ (see the right panel of \cref{fig:FDtoMgn}).

It is worth noting that these Riemann surfaces are special: by a theorem of G.~V.~Belyĭ, they correspond to the arithmetic points on $\M_{g,n}$. Arithmetic surfaces are defined as the zero-locus of complex polynomials with coefficients in the algebraic numbers $\overline{\mathbb{Q}}$. They play a central role in Grothendieck's theory of dessins d'enfants (``children's drawings'') and exhibit deep number-theoretic properties.

We now illustrate this construction for the simple GUE-observable $\langle \frac{1}{6} \NTr{M^6} \rangle_{g=1,\cc}$. There are two topologically nonequivalent pruned diagrams: a first diagram without homotopic edges, and a second one with two homotopic edges. In collapsing the two homotopic edges in the second diagram, we obtain one edge of length $2$. The skeleton graphs are then dual to integer Strebel graphs, drawn in orange.
{\allowdisplaybreaks
\begin{align}
    \notag
    \left\langle \frac{1}{6} \NTr{M^6} \right\rangle_{\!\!g=1,\cc}^{\!\!\textup{GUE}}
    &\!\!\!\overset{\hphantom{\text{skeleton}}}{=}\!\!\!
    \begin{tikzpicture}[x=1pt,y=1pt,scale=.4,baseline={(0, 10cm)}]
        \draw[line width=5pt](128, 704) .. controls (96, 736) and (112, 760) .. (133.3333, 758.6667) .. controls (154.6667, 757.3333) and (181.3333, 730.6667) .. (182.6667, 709.3333) .. controls (184, 688) and (160, 672) .. (128, 704);
        \draw[white, line width=3.5pt](128, 704) .. controls (96, 736) and (112, 760) .. (133.3333, 758.6667) .. controls (154.6667, 757.3333) and (181.3333, 730.6667) .. (182.6667, 709.3333) .. controls (184, 688) and (160, 672) .. (128, 704);
        \draw[line width=5pt](128, 704) .. controls (192, 704) and (192, 736) .. (181.3333, 757.3333) .. controls (170.6667, 778.6667) and (149.3333, 789.3333) .. (128, 789.3333) .. controls (106.6667, 789.3333) and (85.3333, 778.6667) .. (74.6667, 757.3333) .. controls (64, 736) and (64, 704) .. (128, 704);
        \draw[white, line width=3.5pt](128, 704) .. controls (192, 704) and (192, 736) .. (181.3333, 757.3333) .. controls (170.6667, 778.6667) and (149.3333, 789.3333) .. (128, 789.3333) .. controls (106.6667, 789.3333) and (85.3333, 778.6667) .. (74.6667, 757.3333) .. controls (64, 736) and (64, 704) .. (128, 704);
        \draw[line width=5pt](128, 704) .. controls (176, 752) and (144, 768) .. (117.3333, 765.3333) .. controls (90.6667, 762.6667) and (69.3333, 741.3333) .. (66.6667, 714.6667) .. controls (64, 688) and (80, 656) .. (128, 704);
        \draw[white, line width=3.5pt](128, 704) .. controls (176, 752) and (144, 768) .. (117.3333, 765.3333) .. controls (90.6667, 762.6667) and (69.3333, 741.3333) .. (66.6667, 714.6667) .. controls (64, 688) and (80, 656) .. (128, 704);
        \node at (128, 792) [above] {\tiny$1$};
        \node at (64, 696) [left]{\tiny$1$};
        \node at (184, 696) [right]{\tiny$1$};
        \filldraw[thick, fill=white](128, 704) circle[radius=8];
    \end{tikzpicture}
    +
    \begin{tikzpicture}[x=1pt,y=1pt,scale=.4,baseline={(0, 10cm)}]
        \draw[line width=5pt](128, 704) .. controls (96, 752) and (80, 728) .. (80, 704) .. controls (80, 680) and (96, 656) .. (128, 704);
        \draw[white, line width=3.5pt](128, 704) .. controls (96, 752) and (80, 728) .. (80, 704) .. controls (80, 680) and (96, 656) .. (128, 704);
        \draw[line width=5pt](128, 704) .. controls (176, 736) and (120, 752) .. (86.6667, 749.3333) .. controls (53.3333, 746.6667) and (42.6667, 725.3333) .. (42.6667, 704) .. controls (42.6667, 682.6667) and (53.3333, 661.3333) .. (86.6667, 658.6667) .. controls (120, 656) and (176, 672) .. (128, 704);
        \draw[white, line width=3.5pt](128, 704) .. controls (176, 736) and (120, 752) .. (86.6667, 749.3333) .. controls (53.3333, 746.6667) and (42.6667, 725.3333) .. (42.6667, 704) .. controls (42.6667, 682.6667) and (53.3333, 661.3333) .. (86.6667, 658.6667) .. controls (120, 656) and (176, 672) .. (128, 704);
        \draw[line width=5pt](128, 704) .. controls (192, 704) and (184, 744) .. (164, 764) .. controls (144, 784) and (112, 784) .. (92, 764) .. controls (72, 744) and (64, 704) .. (128, 704);
        \draw[white, line width=3.5pt](128, 704) .. controls (192, 704) and (184, 744) .. (164, 764) .. controls (144, 784) and (112, 784) .. (92, 764) .. controls (72, 744) and (64, 704) .. (128, 704);
        \node at (128, 784) [above] {\tiny$1$};
        \node at (82, 690) [left] {\tiny$1$};
        \node at (50, 740) [left] {\tiny$1$};
        \filldraw[thick,fill=white](128, 704) circle[radius=8];
    \end{tikzpicture}
    \\[-1ex]
    &\!\!\!\overset{\text{skeleton}}{=}\!\!\!
    \begin{tikzpicture}[x=1pt,y=1pt,scale=.4,baseline={(0, 10cm)}]
        \draw[line width=5pt](128, 704) .. controls (96, 736) and (112, 760) .. (133.3333, 758.6667) .. controls (154.6667, 757.3333) and (181.3333, 730.6667) .. (182.6667, 709.3333) .. controls (184, 688) and (160, 672) .. (128, 704);
        \draw[white, line width=3.5pt](128, 704) .. controls (96, 736) and (112, 760) .. (133.3333, 758.6667) .. controls (154.6667, 757.3333) and (181.3333, 730.6667) .. (182.6667, 709.3333) .. controls (184, 688) and (160, 672) .. (128, 704);
        \draw[line width=5pt](128, 704) .. controls (192, 704) and (192, 736) .. (181.3333, 757.3333) .. controls (170.6667, 778.6667) and (149.3333, 789.3333) .. (128, 789.3333) .. controls (106.6667, 789.3333) and (85.3333, 778.6667) .. (74.6667, 757.3333) .. controls (64, 736) and (64, 704) .. (128, 704);
        \draw[white, line width=3.5pt](128, 704) .. controls (192, 704) and (192, 736) .. (181.3333, 757.3333) .. controls (170.6667, 778.6667) and (149.3333, 789.3333) .. (128, 789.3333) .. controls (106.6667, 789.3333) and (85.3333, 778.6667) .. (74.6667, 757.3333) .. controls (64, 736) and (64, 704) .. (128, 704);
        \draw[line width=5pt](128, 704) .. controls (176, 752) and (144, 768) .. (117.3333, 765.3333) .. controls (90.6667, 762.6667) and (69.3333, 741.3333) .. (66.6667, 714.6667) .. controls (64, 688) and (80, 656) .. (128, 704);
        \draw[white, line width=3.5pt](128, 704) .. controls (176, 752) and (144, 768) .. (117.3333, 765.3333) .. controls (90.6667, 762.6667) and (69.3333, 741.3333) .. (66.6667, 714.6667) .. controls (64, 688) and (80, 656) .. (128, 704);
        \node at (128, 792) [above] {\tiny$1$};
        \node at (64, 696) [left]{\tiny$1$};
        \node at (184, 696) [right]{\tiny$1$};
        \filldraw[thick, fill=white](128, 704) circle[radius=8];
    \end{tikzpicture}
    +
    \begin{tikzpicture}[x=1pt,y=1pt,scale=.4,baseline={(0, 10cm)}]
        \draw[line width=5pt](128, 704) .. controls (96, 768) and (64, 736) .. (64, 704) .. controls (64, 672) and (96, 640) .. (128, 704);
        \draw[white, line width=3.5pt](128, 704) .. controls (96, 768) and (64, 736) .. (64, 704) .. controls (64, 672) and (96, 640) .. (128, 704);
        \draw[line width=5pt](128, 704) .. controls (192, 704) and (184, 744) .. (164, 764) .. controls (144, 784) and (112, 784) .. (92, 764) .. controls (72, 744) and (64, 704) .. (128, 704);
        \draw[white, line width=3.5pt](128, 704) .. controls (192, 704) and (184, 744) .. (164, 764) .. controls (144, 784) and (112, 784) .. (92, 764) .. controls (72, 744) and (64, 704) .. (128, 704);
        \node at (128, 782) [above] {\tiny$1$};
        \node at (62, 690) [left] {\tiny$2$};
        \filldraw[thick,fill=white](128, 704) circle[radius=8];
    \end{tikzpicture}
    \\[-2ex]
    \notag
    &\!\!\!\overunderset{\text{dual}}{\hphantom{\text{skeleton}}}{=}\!\!\!
    \begin{tikzpicture}[x=1pt,y=1pt,scale=.4,baseline={(0, 10cm)}]
        \draw[BurntOrange,line width=5pt](128, 704) -- (128, 752);
        \draw[white, line width=3.5pt](128, 704) -- (128, 752);  
        \draw[BurntOrange,line width=5pt](128, 752) .. controls (128, 784) and (104, 784) .. (92, 765.3333) .. controls (80, 746.6667) and (80, 709.3333) .. (92, 690.6667) .. controls (104, 672) and (128, 672) .. (128, 704);
        \draw[white, line width=3.5pt](128, 752) .. controls (128, 784) and (104, 784) .. (92, 765.3333) .. controls (80, 746.6667) and (80, 709.3333) .. (92, 690.6667) .. controls (104, 672) and (128, 672) .. (128, 704);
        \draw[BurntOrange,line width=5pt](128, 752) .. controls (160, 752) and (160, 776) .. (152, 790.6667) .. controls (144, 805.3333) and (128, 810.6667) .. (112, 808) .. controls (96, 805.3333) and (80, 794.6667) .. (72, 769.3333) .. controls (64, 744) and (64, 704) .. (128, 704);
        \draw[white, line width=3.5pt](128, 752) .. controls (160, 752) and (160, 776) .. (152, 790.6667) .. controls (144, 805.3333) and (128, 810.6667) .. (112, 808) .. controls (96, 805.3333) and (80, 794.6667) .. (72, 769.3333) .. controls (64, 744) and (64, 704) .. (128, 704);
        \filldraw[thick,BurntOrange,fill=white](128, 752) circle[radius=8];
        \filldraw[thick,BurntOrange,fill=white](128, 704) circle[radius=8];
        \node at (126, 728) [right] {\tiny$1$};
        \node at (122, 790) {\tiny$1$};
        \node at (76, 784) [left] {\tiny$1$};
    \end{tikzpicture}
    +
    \begin{tikzpicture}[x=1pt,y=1pt,scale=.4,baseline={(0, 10cm)}]
        \draw[BurntOrange,line width=5pt](128, 720) .. controls (128, 768) and (96, 768) .. (80, 752) .. controls (64, 736) and (64, 704) .. (80, 688) .. controls (96, 672) and (128, 672) .. (128, 720);
        \draw[white,line width=3.5pt](128, 720) .. controls (128, 768) and (96, 768) .. (80, 752) .. controls (64, 736) and (64, 704) .. (80, 688) .. controls (96, 672) and (128, 672) .. (128, 720);
        \draw[BurntOrange,line width=5pt](128, 720) .. controls (176, 720) and (176, 752) .. (160, 768) .. controls (144, 784) and (112, 784) .. (96, 768) .. controls (80, 752) and (80, 720) .. (128, 720);
        \draw[white,line width=3.5pt](128, 720) .. controls (176, 720) and (176, 752) .. (160, 768) .. controls (144, 784) and (112, 784) .. (96, 768) .. controls (80, 752) and (80, 720) .. (128, 720);
        \filldraw[thick,BurntOrange,fill=white](128, 720) circle[radius=8];
        \node at (67, 700) [left] {\tiny$2$};
        \node at (129, 784) [above] {\tiny$1$};
    \end{tikzpicture}
    =
    N_{1,1}^{\textup{Nor}}(6).
\end{align}}

Each such graph is weighted by the inverse of the order of its automorphism group, i.e. we divide by its symmetry factor, giving
\begin{equation}
    \left\langle \frac{1}{6} \NTr{M^6} \right\rangle_{\!\!g=1,\cc}^{\!\!\textup{GUE}}
    =
    N_{1,1}^{\textup{Nor}}(6)
    =
    \frac{1}{6} \cdot 1 + \frac{1}{2} \cdot 1 = \frac{2}{3}.
\end{equation}
This example illustrates how the count of integer Strebel graphs is, by construction, manifestly equal to the original GUE correlator, thereby showing \labelcref{eq:GUE:Nor}.

\subsection{Perturbative discreteness beyond GUE}
So far, our discussion of matrix correlators as discrete volumes of moduli space has been restricted to the Gaussian case. We now wish to understand in what sense this picture continues to hold once interactions are turned on. The punchline will be that the picture of discrete points on $\M_{g,n}$ persists at each order in perturbation theory in the 't~Hooft coupling(s), although these points are no longer necessarily labeled by integer Strebel graphs. Consider the following quartic deformation of the Gaussian model:
\begin{equation}
    Z_{\ms{N}}(t_4)
    =
    \int_{\mathcal{H}_{\ms{N} \times \ms{N}}}
        dM \ e^{-\ms{N} \Tr \left( \frac{1}{2} M^2 - t_{4} \, 2T_{4}(M/2)\right) } ,
\end{equation}
where $t_4$ plays the role of the 't~Hooft coupling and is kept fixed in the large $\ms{N}$ limit. The unusual form of the perturbation stems from the relation between pruned and standard traces in \cref{eq:prune:Cheby}, namely $\frac{1}{4}\NTr{M^4} \; \leftrightarrow \; \Tr 2T_{4}(M/2)$. In \cref{sec:DSSYK:rec}, we will see that this structure persists in the potential of the DSSYK matrix integral, suggesting a geometric origin for the appearance of Chebyshev polynomials first identified in \cite{JKMS23}. 

In the quartic case, perturbation theory in $t_{4}$ expresses the connected matrix correlators in terms of those of the free theory:
\begin{equation}
\begin{split}
     \left\langle \prod_{i=1}^n \frac{1}{b_{i}} \NTr{M^{b_{i}}} \right\rangle_{\!\!g,\cc}^{\!\!\textup{quartic}}
     &=
     \sum_{m=0}^{\infty}
        \frac{t_{4}^{m}}{m!}
        \left\langle
            \prod_{i=1}^n \frac{1}{b_{i}} \NTr{M^{b_{i}}}
            \left( \frac{1}{4} \NTr{M^4} \right)^m
        \right\rangle_{\!\!g,\cc}^{\!\!\textup{GUE}} \\
     &=
     \sum_{m=0}^{\infty} \frac{t_{4}^{m}}{m!} \,
        N^{\textup{Nor}}_{g,n+m}(b_1,\ldots,b_{n},\underbrace{4,\ldots,4}_{m\text{ times}}) \, .
\end{split}
\end{equation}
At the level of Feynman diagrams, $m$ denotes the number of internal vertices, each of valence four. The pruning procedure disallows any petals on these internal vertices as well. This simple perturbative expansion therefore rewrites the interacting correlators as a weighted sum of the $N^{\textup{Nor}}_{g,n+m}$ computed in the GUE. In that sense, all matrix model correlators remain trivially related to the lattice point counts of the moduli space. However, this expression involves a sequence of moduli spaces $\M_{g,n+m}$ and does not yet establish discreteness directly on $\M_{g,n}$. We need something sharper.

\begin{figure}[t]
    \centering
    \tikzset{every picture/.style=thick}
    \begin{tikzpicture}[x=1pt,y=1pt,scale=.7]
        \draw(88, 392) .. controls (120, 413.3333) and (154.6667, 413.3333) .. (192, 392) .. controls (181.3333, 424) and (194.6667, 434.6667) .. (232, 424) .. controls (210.6667, 450.6667) and (213.3333, 472) .. (240, 488) .. controls (208, 477.3333) and (178.6667, 493.3333) .. (152, 536) .. controls (120, 488) and (93.3333, 472) .. (72, 488) .. controls (114.6667, 450.6667) and (120, 418.6667) .. (88, 392) -- cycle;
        \draw(344, 392) .. controls (376, 413.3333) and (410.6667, 413.3333) .. (448, 392) .. controls (437.3333, 424) and (450.6667, 434.6667) .. (488, 424) .. controls (466.6667, 450.6667) and (469.3333, 472) .. (496, 488) .. controls (464, 477.3333) and (434.6667, 493.3333) .. (408, 536) .. controls (376, 488) and (349.3333, 472) .. (328, 488) .. controls (370.6667, 450.6667) and (376, 418.6667) .. (344, 392) -- cycle;
        \node at (226, 384) {$\mathcal{M}_{g,n+m}$};
        \node at (474, 384) {$\mathcal{M}_{g,n}$};
        \node[BurntOrange] at (120, 416) {\scriptsize$\bullet$};
        \node[BurntOrange] at (120, 448) {\scriptsize$\bullet$};
        \node[BurntOrange] at (100, 476) {\scriptsize$\bullet$};
        \node[BurntOrange] at (140, 496) {\scriptsize$\bullet$};
        \node[BurntOrange] at (148, 476) {\scriptsize$\bullet$};
        \node[BurntOrange] at (176, 468) {\scriptsize$\bullet$};
        \node[BurntOrange] at (172, 492) {\scriptsize$\bullet$};
        \node[BurntOrange] at (208, 460) {\scriptsize$\bullet$};
        \node[BurntOrange] at (200, 448) {\scriptsize$\bullet$};
        \node[BurntOrange] at (180, 432) {\scriptsize$\bullet$};
        \node[BurntOrange] at (168, 412) {\scriptsize$\bullet$};
        \node[BurntOrange] at (164, 448) {\scriptsize$\bullet$};
        \node[BurntOrange] at (148, 436) {\scriptsize$\bullet$};
        \draw[->](264, 448) -- (312, 448);
        \node[BurntOrange] at (384, 424) {\scriptsize$\bullet$};
        \node[BurntOrange] at (376, 456) {\scriptsize$\bullet$};
        \node[BurntOrange] at (376, 476) {\scriptsize$\bullet$};
        \node[BurntOrange] at (404, 512) {\scriptsize$\bullet$};
        \node[BurntOrange] at (400, 484) {\scriptsize$\bullet$};
        \node[BurntOrange] at (424, 464) {\scriptsize$\bullet$};
        \node[BurntOrange] at (424, 484) {\scriptsize$\bullet$};
        \node[BurntOrange] at (460, 476) {\scriptsize$\bullet$};
        \node[BurntOrange] at (452, 456) {\scriptsize$\bullet$};
        \node[BurntOrange] at (432, 444) {\scriptsize$\bullet$};
        \node[BurntOrange] at (436, 416) {\scriptsize$\bullet$};
        \node[BurntOrange] at (408, 460) {\scriptsize$\bullet$};
        \node[BurntOrange] at (404, 436) {\scriptsize$\bullet$};
        \node at (288, 464) {forgetful};
        \node at (288, 432) {map};
        
        \node at (50, 458) {\scriptsize{integer}};
        \node at (50, 438) {\scriptsize{Strebel graph}};
        
        \node at (550, 468) {\scriptsize{its image;}};
        \node at (550, 448) {\scriptsize{not necessarily integer}};
        \node at (550, 428) {\scriptsize{Strebel graph}};
        
        \draw[->](44, 476) .. controls (48, 516) and (76, 514) .. (95, 507) .. controls (114, 500) and (124, 488) .. (140, 480);
        \draw[->](528, 486) .. controls (528, 512) and (444, 536) .. (408, 492);
    \end{tikzpicture}
    \caption{
    \textbf{Perturbative Discreteness.} At order $m$ in perturbation theory in the 't~Hooft coupling, the $N_{g,n}$ of the interacting matrix model can be computed from a finite number of Feynman diagrams with $m$ internal vertices. These diagrams map to a discrete set of points on $\M_{g,n+m}$, labeled by integer Strebel graphs (left). Under the forgetful map, these points project to another discrete set on $\M_{g,n}$, whose images are generally not parametrized by integer Strebel points.
    }
    \label{fig:forget}
\end{figure}
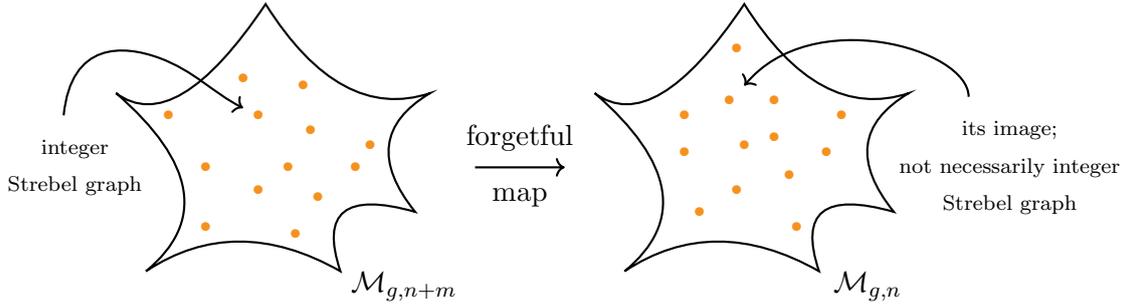

A clue\footnote{This argument was suggested to us by R.~Gopakumar.} comes from what is known in the mathematical literature as the \emph{forgetful map}, $p_m \colon \M_{g,n+m} \to \M_{g,n}$, which describes what happens when one forgets the last $m$ marked points \cite{GL26}. Via our construction in \cref{subsec:GUE:correl}, each Feynman diagram contributing to a term of order $m$ in perturbation theory can be mapped to a point on $\M_{g,n+m}$. This point is labeled by an integer Strebel graph. We can now follow the action of repeatedly applying the forgetful map to the discrete points populating $\M_{g,n+m}$, all the way down to $\M_{g,n}$. We do not yet fully understand how the forgetful map acts on integer lattice points, nor do we have a clear picture at the level of the combinatorial moduli space. However, each integer point on $\M_{g,n+m}$ is mapped to a unique point on $\M_{g,n}$, which generally will not correspond to an integer Strebel graph. Since only finitely many Feynman diagrams contribute at any order in perturbation theory, their image under the forgetful map yields a discrete set of points on $\M_{g,n}$, cf. \cref{fig:forget}.

Although many mathematical details remain to be worked out, this construction offers a compelling picture of a \emph{perturbative discreteness} persisting in interacting matrix models. Even beyond perturbation theory, a trace of this discreteness survives: the parameters $b_i$, which encode the boundary lengths of the dual Riemann surfaces, remain integer-valued, reflecting their origin as matrix powers.

\section{A discrete Mirzakhani recursion for matrix correlators}
\label{sec:discrete:rec}
In this section, we provide an alternative argument for interpreting pruned correlators as discrete volumes of moduli spaces by proving that they satisfy a discrete Mirzakhani recursion (\cref{thm:A}). The dependence on the potential enters only through the specific kernels and the initial data. Our proof is derived from the Eynard--Orantin topological recursion, which computes the standard traces by recasting the Schwinger--Dyson equations.

\subsection{Spectral curve for matrix correlators}
We begin by recalling how the genus-$g$, $n$-point differentials $\omega_{g,n}$ of a large $\ms{N}$ Hermitian matrix model in the one-cut phase (cf.~\cref{eq:MM}) are obtained via the Eynard--Orantin topological recursion on the spectral curve determined by the potential $V$. For a comprehensive reference, see~\cite{Eyn16,EKR15}. In this setting, the genus-zero resolvent satisfies the standard loop equation
\begin{equation}
    W_{0,1}(x)^2 = V'(x) W_{0,1}(x) - P(x),
\end{equation}
where
\begin{equation}\label{eq:resolvent}
    W_{0,1}(x) \coloneqq \frac{1}{\ms{N}}
    \Braket{ \Tr \frac{1}{x - M} }_{\!\!g=0},
    \qquad
    P(x) \coloneqq \frac{1}{\ms{N}} 
    \Braket{ \Tr \frac{V'(x) - V'(M)}{x - M} }_{\!\!g=0}.
\end{equation}
Geometrically, this defines the spectral curve of the matrix model:
\begin{equation}
    y^2 = \frac{1}{4} V'(x)^2 - P(x),
\end{equation}
with $y = \tfrac{1}{2} V'(x) - W_{0,1}(x)$. In the one-cut regime, this curve is a genus-zero Riemann surface with a square-root branch cut, where $y$ can be written as
\begin{equation}
    y = \frac{1}{2} \sqrt{(x-a_+)(x-a_-)} \, Q(x),
\end{equation}
for $Q(x)$ an analytic function near the cut.
The branch points of $y$ are located at $x = a_{\pm}$, corresponding to the endpoints of the eigenvalue support. As mentioned in the previous section, a convenient uniformization is obtained by introducing the Joukowsky variable $z$:
\begin{equation}\label{eq:SC}
    \mathcal{S} =
    \begin{cases}
        x(z) = \gamma \left(z + \dfrac{1}{z}\right) + \delta, \\[2ex]
        y(z) = \dfrac{\gamma}{2} \left( z - \dfrac{1}{z} \right) Q(z),
    \end{cases}
    \qquad
    \gamma = \dfrac{a_+ - a_-}{4} \quad\text{and}\quad \delta = \dfrac{a_+ + a_-}{2}.
\end{equation}
By abuse of notation, we denote $Q(z) \coloneqq Q(x(z))$.
Under the involution $z \mapsto z^{-1}$, the two branches of the square root $\sqrt{(x - a_+)(x - a_-)} = \gamma(z - z^{-1})$ are exchanged, while $Q(z^{-1}) = Q(z)$.
In particular, the two sheets correspond to the exterior and the interior of the unit circle: the exterior is often called the \emph{physical sheet}, as it contains the point $x = \infty$, which is the natural expansion point of the correlators; in contrast, the interior of the unit circle is called the \emph{non-physical sheet}.
The points $z = \pm 1$ map to the branch points $x = a_{\pm}$.

The main result of \cite{Eyn04,CEO06,EO07} states that the correlation differentials \labelcref{eq:corr:TR} are computed by a topological recursion formula involving residues at the ramification points $z = \pm 1$ of $x$. Given the above setup, the Eynard--Orantin topological recursion formula computes $\omega_{g,n}$ recursively via the following residue calculus:
\begin{equation}\label{eq:EO}
\begin{split}
	\omega_{g,n}(z_1,\dots,z_n)
	=
    \Res_{z = \pm 1} & \frac{K(z_1,z)}{\gamma^2 Q(z)}
	\Bigg(
		\omega_{g-1,n+1}(z,z,z_2,\dots,z_n) \\
		& +
		\sum_{\substack{h + h' = g \\ J \sqcup J' = \{2,\dots,n\}}}^{\textup{no }(0,1)}
			\omega_{h,1+|J|}(z,z_J) \,
			\omega_{h',1+|J'|}(z,z_{J'})
	\Bigg),
\end{split}
\end{equation}
where $K(z_1,z)$ is the Eynard--Orantin kernel for the GUE spectral curve:
\begin{equation}\label{eq:kernel}
	K(z_1,z)
	\coloneqq
	\frac{1}{2} \left(
		\frac{1}{z_1 - z} - \frac{1}{z_1 - z^{-1}}
	\right)
	\frac{z^3 dz_1}{(1 - z^2)^2 dz}.
\end{equation}
The superscript ``no $(0,1)$'' indicates that $(h,1+|J|)$ and $(h',1+|J'|)$ never contain terms of disc topology $(0,1)$---though, unlike \cref{eq:disc:rec}, the unstable cylinder amplitude $(g,n) = (0,2)$ is included. The above formula is a recursion in the negative Euler characteristic $2g - 2 + n$, hence the name topological recursion. See \cite{Bou26} for a modern and more detailed account of topological recursion. The above formula follows from the standard Eynard--Orantin topological recursion through the anti-symmetry property under the global involution satisfied by the correlation differentials: $\omega_{g,n}(z,\ldots) = -\omega_{g,n}(1/z,\ldots)$, cf. \cite[equation~(4.2.17)]{EKR15}. For the special case $(g,n) = (1,1)$ one must modify the expression because the term $\omega_{0,2}(z,z)$ is ill-defined (it is singular on the diagonal); we omit these technical adjustments, which are straightforward to reconstruct.

From now onward, we assume\footnote{
    In \cref{sec:DSSYK:rec}, in the context of the DSSYK matrix model, we will consider a case where $Q$ is the Jacobi theta function, and as such it has infinitely many poles; the details will be discussed there.
} the following analytic behavior of $Q$.

\begin{assumption}\label{assumpt}
    From now onward, we assume that $Q$ is a meromorphic function on $\P^1$, with zeros away from the unit circle $|z| = 1$, holomorphic and non-vanishing at the origin, and satisfying the symmetry relations $Q(z^{-1}) = Q(z)$.
\end{assumption}

The special case $\gamma = 1$ and $Q(z) = 1$ reproduces the GUE spectral curve.  
In this sense, $Q$ encodes, at the level of the spectral curve, the effect of the interactions present in the matrix potential.

\subsection{The ABCD of pruned traces}
\label{subsec:ABCD}
We can now complement the statement of \cref{thm:A} under the \cref{assumpt} by providing the explicit expressions for the recursion kernels $B$ and $C$, as well as for the initial data $N_{0,3} \coloneqq A$ and $N_{1,1} \coloneqq D$:
\begin{equation}\label{eq:ABCD}
    \begin{aligned}
    	A(b_1,b_2,b_3)
    	& \coloneqq
        \sigma_+
        +
        (-1)^{b_1+b_2+b_3} \sigma_- , \\
    	B(b,b',\beta)
    	& \coloneqq
    	\frac{1}{2b} \Bigl(
    		H(b + b' - \beta)
    		-
    		H(-b - b' - \beta) \\
    		& \qquad\qquad\qquad
    		+
    		H(b - b' - \beta)
    		-
    		H(-b + b' - \beta)
    	\Bigr) , \\
    	C(b,\beta,\beta')
    	& \coloneqq
    	\frac{1}{b} \Bigl(
    		H(b - \beta - \beta')
    		-
    		H(-b - \beta - \beta')
    	\Bigr) , \\
    	D(b)
    	& \coloneqq
        \frac{b^2 - 4}{48} \bigl( \sigma_+ + (-1)^b \sigma_- \bigr)
        +
        \frac{1}{16} \bigl( \tau_+ + (-1)^b \tau_- \bigr).
    \end{aligned}
\end{equation}
All quantities involved are expressed in terms of the matrix model spectral curve \labelcref{eq:SC} as
\begin{equation}\label{eq:cnstnts}
    \sigma_{\pm} \coloneqq \frac{1}{2 \gamma^2 Q(\pm 1)},
    \qquad
    \tau_{\pm} \coloneqq \frac{d^2}{dz^2} \Bigl( \frac{1}{2 \gamma^2 Q(z)} \Bigr)\Big|_{z=\pm 1},
\end{equation}
while $H \colon \Z \to \C$ is defined by a contour integral in a counter-clockwise direction encircling $z = 0$ and all zeros of $Q$ lying inside the unit circle, while avoiding $z = \pm 1$:
\begin{equation}\label{eq:building:block}
	H(\ell)
	\coloneqq
	\frac{1}{2\pi\iu} \oint_{\Gamma} \frac{2 z^{1 - \ell}}{(1-z^2)^2 \gamma^2 Q(z)} dz.
    \qquad\qquad
    \begin{tikzpicture}[baseline={(0, 0cm)}]
        \draw[thick,decoration={markings,mark=at position 0.33 with {\arrow{>}}},postaction={decorate}] (1,0) arc (0:180:1cm);
        \draw[thick,decoration={markings,mark=at position 0.33 with {\arrow{>}}},postaction={decorate}] (-1,0) arc (-180:0:1cm);
        \fill[white] (1,0) circle (.2cm);
        \fill[white] (-1,0) circle (.2cm);
        \draw[->] (-1.5,0) -- (1.5,0);
        \draw[->] (0,-1.3) -- (0,1.3);
        \node at (1,0) {\tiny$\bullet$};
        \node at (1,0) [below right] {\tiny$+1$};
        \node at (-1,0) {\tiny$\bullet$};
        \node at (-1,0) [below left] {\tiny$-1$};
        \node at (1,1) {$\Gamma$};
        \clip (0,0) circle (1cm);
        \draw[thick] (1,0) circle (.2cm);
        \draw[thick] (-1,0) circle (.2cm);
    \end{tikzpicture}
\end{equation}
Notice that the integrand is essentially the reciprocal disk amplitude,
\begin{equation}\label{eq:W01}
    \omega_{0,1}(z) = y(z) dx(z) = \frac{(1-z^2)^2}{2z^3} \gamma^2 Q(z) dz,
\end{equation}
while the contour of integration is a contour around the cut in the non-physical sheet. In terms of the resolvent $W_{0,1}$, the function $H$ reads as in \cref{eq:H}.

The ABCD terminology was first introduced in \cite{ABCO24}. It originates from the reformulation of topological recursion by Kontsevich--Soibelman~\cite{KS18} in terms of quantum Airy structures, a generalization of Virasoro constraints.

The remaining part of this section is devoted to the proof of \cref{thm:A}. Before proceeding, two comments are in order.
\begin{enumerate}
    \item A practical remark. The expressions are linear in $1/Q$: if $1/Q = \sum_{k} 1/Q_k$, the contribution of each $Q_k$ can be computed separately and then summed to obtain the final result. This provides a powerful computational tool: for a given $Q$, the strategy is to expand it into partial fractions and compute the contribution to $H$ from each individual term. This approach is illustrated in \cref{app:ex}, where we compute $H$ for the partial fraction components appearing in the DSSYK model.

    \item The contour integral is simply the sum of the residues within its interior, namely at $z = 0$ and at the zeros of $Q$ lying inside the unit circle. This leads to the natural decomposition
    \begin{equation}\label{eq:splitting}
        H(\ell) = F(\ell) + G(\ell),
    \end{equation}
    where
    \begin{equation}\label{eq:F:G}
        F(\ell)
        \coloneqq
        \Res_{z = 0} \frac{2z^{1 - \ell}}{(1-z^2)^2 \gamma^2 Q(z)}\,dz,
        \quad
        G(\ell)
        \coloneqq
        \sum_{|\alpha|<1}
            \Res_{z = \alpha} \frac{2 z^{1 - \ell}}{(1-z^2)^2 \gamma^2 Q(z)}\,dz.
    \end{equation}
    This decomposition is precisely how the function $H$ arises in the proof of \cref{thm:A}.
\end{enumerate}

\subsection{Proof of the discrete recursion}
\label{ssec:disc:rec:proof}
To establish \cref{thm:A}, recall that the pruned correlators $N_{g,n}$ are defined in terms of the genus-$g$, $n$-point function:
\begin{equation}
    \omega_{g,n}(z_1,\dots,z_n)
    =
    \sum_{b_1,\dots,b_n >0}
        N_{g,n}(b_1,\dots,b_n) \prod_{i=1}^n b_i z_i^{b_i-1} dz_i.
\end{equation}
We derive our discrete recursion formula for $N_{g,n}$, \cref{eq:disc:rec}, from the Eynard--Orantin recursion for $\omega_{g,n}$, \cref{eq:EO}, in four main steps:
\begin{enumerate}[label=\Roman*)]
	\item Separate the contributions that contain the cylinder amplitude (the $B$-terms) from those that do not (the $C$-terms).
	
	\item Move the contour from $z=\pm 1$ to the other poles of the integrand, namely the points $z=z_i^{\pm 1}$ and the zeros of $Q$. This is possible because the spectral curve is the Riemann sphere $\mathbb{P}^1$ in our one-cut uniformization.
	
	\item Compute the residues at $z=z_i^{\pm 1}$, which produce the $F$-contributions, and those at the zeros of $Q$, which produce the $G$-contributions. Altogether this recovers the function $H$ appearing in \cref{eq:splitting}.
	
	\item Compute the initial data corresponding to the pair of pants and the one-holed torus, namely $A \coloneqq N_{0,3}$ and $D \coloneqq N_{1,1}$.
\end{enumerate}
We now analyze each of these steps in more detail, relegating the more technical computations to \cref{app:proofsRecRel}. The proof follows Norbury's computations for GUE \cite{Nor13}, although the presence of the interaction term renders several steps considerably more involved.

{\bfseries I) The $B$- and $C$-terms.} In the sum over the splittings of the genus and the boundary components, we factor out the terms containing the cylinder amplitudes. As a result, the right-hand side of the residue formula \labelcref{eq:EO} naturally splits into two types of contributions: we refer to them as the $B_m$-terms (for $m = 2,\ldots,n$) and the $C$-term, defined by
\begin{equation}
\begin{aligned}
	\omega_{B_m}(z)
	&\coloneqq
	\omega_{g,n-1}(z,z_2,\ldots,\widehat{z_m},\ldots,z_n) , \\
	\omega_{C}(z,z)
	&\coloneqq
	\omega_{g-1,n+1}(z,z,z_2,\dots,z_n)
	+
	\sum_{\substack{ h + h' = g \\ J \sqcup J' = \{2,\dots,n\} }}^{\textup{stable}}
		\omega_{h,1+|J|}(z,z_{J}) \omega_{h',1+|J'|}(z,z_{J'}),
\end{aligned}
\end{equation}
respectively. Notice that the sum over the splittings of the genus and the boundary components now runs only over \emph{stable} topologies, i.e. both the disk and cylinder amplitudes are excluded. We omit the dependence on the remaining variables, as they act as spectators.
Most of the subsequent computations will treat these two terms separately. With this notation, \cref{eq:EO} is written as
\begin{multline}
	\omega_{g,n}(z_1,\dots,z_n)
	=
	\sum_{m = 2}^n
	\Res_{z = \pm 1}
		\frac{K(z_1,z) \bigl( \frac{1}{(z-z_m)^2} + \frac{1}{(1-z z_m)^2} \bigr)}{\gamma^2 Q(z)}
		\omega_{B_m}(z)
		\\
	+
	\Res_{z = \pm 1}
		\frac{K(z_1,z)}{\gamma^2 Q(z)}
		\omega_{C}(z,z) .
\end{multline}

{\bfseries II) Moving the contour.} Next, we move the contour from around $z = \pm 1$ to encircle all other poles of the integrand, using the residue theorem.
Recall that the only poles of the correlation differentials are located at the ramification points, i.e. $z = \pm 1$.

For the $B_m$-term, the other poles are located at $z = z_1^{\pm 1}$ (due to the presence of the kernel $K$), at $z = z_m^{\pm 1}$ (from the factors originating from $\omega_{0,2}$), and at the zeros of $Q$ (from $1/Q$).
Similarly, for the $C$-term the other poles are located at $z = z_1^{\pm 1}$ and at the zeros of $Q$. This gives
\begin{multline}
	\omega_{g,n}(z_1,\dots,z_n)
	= \\
	- \sum_{m = 2}^n
	\bigg(
		\Res_{z = z_1^{\pm 1}} + \Res_{z = z_m^{\pm 1}} + \sum_{\alpha} \Res_{z = \alpha}
	\bigg)
		\frac{K(z_1,z) \bigl( \frac{1}{(z-z_m)^2} + \frac{1}{(1-z z_m)^2} \bigr)}{\gamma^2 Q(z)} 
		\omega_{B_m}(z)
		\\
	-
	\bigg(
		\Res_{z = z_1^{\pm 1}} + \sum_{\alpha} \Res_{z = \alpha}
	\bigg)
	   \frac{K(z_1,z)}{\gamma^2 Q(z)} \omega_{C}(z,z),
\end{multline}
where $\alpha$ runs over \emph{all} zeros of $Q$, and the overall minus sign reflects the opposite orientation of the original contour when it is deformed to encircle other poles.

{\bfseries III) Computing the residues.} Next, we handle separately the residues at $z_i^{\pm 1}$ and those at the zeros of $Q$. This splitting gives rise to the decomposition of the building-block function $H$ into the $F$-term and the $G$-term, respectively, in \cref{eq:splitting}. In both cases, we must separately consider the $B$-terms and the $C$-term.

{\bfseries III.1) Residues at $z_i^{\pm 1}$ as $F$-contributions.} For the $B_m$-terms, a direct computation shows that the residues at $z = z_1^{\pm 1}$ contribute equally as
\begin{equation}
\begin{split}
	\Res_{z = z_1^{\pm 1}} (B_m\text{-term})
	& =
	-
	\left(
		\frac{1}{(z_1 - z_m)^2} + \frac{1}{(1 - z_1z_m)^2}
	\right)
	\frac{z_1^3}{(1-z_1^2)^2 \gamma^2 Q(z_1)}
	\omega_{B_m}(z_1) \, dz_m \\
	& =
	-
	\de_{z_m} \left[
		\left(
			\frac{1}{z_1 - z_m} - \frac{1}{z_1 - z_m^{-1}}
		\right)
		\frac{z_1^3}{(1-z_1^2)^2 \gamma^2 Q(z_1) dz_1}
		\omega_{B_m}(z_1)
	\right] dz_1 dz_m.
\end{split}
\end{equation}
The rewriting as a total $z_m$-derivative is only for later convenience. On the other hand, due to the presence of a double pole, the residues at $z = z_m^{\pm 1}$ directly evaluate as a total $z_m$-derivative:
\begin{equation}
	\Res_{z = z_m^{\pm 1}} (B_m\text{-term})
	=
	\de_{z_m} \left[
		\left(
			  \frac{1}{z_1 - z_m} - \frac{1}{z_1 - z_m^{-1}}
		\right)
		\frac{z_m^3}{(1 - z_m^2)^2 \gamma^2 Q(z_m) dz_m}
		\omega_{B_m}(z_m)
	\right] dz_1 dz_m.
\end{equation}
As for the $C$-term, a direct computation shows that the residues at $z = z_1^{\pm}$ give
\begin{equation}
	\Res_{z = z_1^{\pm 1}}
		(C\text{-term})
	=
	- \frac{z_1^3}{(1-z_1^2)^2 \gamma^2 Q(z_1) dz_1} \omega_C(z_1,z_1) .
\end{equation}
In both cases, the relevant residues involve the correlation differentials divided by $Q$. Since we are interested in the discrete Laplace transform of such expressions, it is natural to expect that this transform is obtained as the convolution of the kernels corresponding to $Q = 1$ (i.e. GUE) with the discrete Laplace transform of $1/(\gamma^2 Q)$. This is precisely the content of \cref{lem:GUE,lem:conv} in the appendix. The final result reads
\begin{multline}
	\bigg(
		\Res_{z = z_1^{\pm 1}}
		+
		\Res_{z = z_m^{\pm 1}}
	\bigg) (B_m\text{-term})
	=
	-
	\sum_{b_1,\ldots,b_n > 0}
		\Bigg[
			\sum_{\beta > 0}
				\beta 
				\frac{1}{2b_1} \Bigl(
					F(b_1 + b_m - \beta)
					-
					F(-b_1 - b_m - \beta) \\
					+
					F(b_1 - b_m - \beta)
					-
					F(-b_1 + b_m - \beta)
				\Bigr) N_{B_m}(\beta)
		\Bigg] \prod_{i=1}^n b_i z_i^{b_i -1} dz_i
\end{multline}
for the $B_m$-term, and
\begin{multline}
	\Res_{z = z_1^{\pm 1}} (C\text{-term})
	=
	-
	\frac{1}{2}
	\sum_{b_1,\ldots,b_n > 0}
		\Bigg[
			\sum_{\beta, \beta' > 0}
				\beta \beta'
				\frac{1}{b_1} \Bigl(
					F(b_1 - \beta - \beta') \\
					-
					F(-b_1 - \beta - \beta')
				\Bigr) N_C(\beta,\beta')
		\Bigg] \prod_{i=1}^n b_i z_i^{b_i -1} dz_i
\end{multline}
for the $C$-term, where $F$ is given in \cref{eq:F:G}. It is essentially the discrete convolution of the GUE building block, the ramp function, and the Taylor coefficients of $1/(\gamma^2 Q)$, as explained in the appendix. This defines the first term, $F$, appearing in the building-block function $H$ from \cref{eq:splitting}. Here $N_{B_m}$ and $N_C$ are the discrete Laplace transforms of $\omega_{B_m}$ and $\omega_C$, cf. \cref{eq:Laplace:WB:WC}.

{\bfseries III.2) Residues at the zeros of $Q$ as $G$-contributions.} First, notice that in both the $B$- and $C$-cases, the residues at $\alpha$ contribute equally to those at $\alpha^{-1}$. In formulae, $\Res_{z = \alpha^{\pm 1}} = 2 \Res_{z = \alpha}$. Thus, we can restrict our attention to the zeros of $Q$ lying inside the unit circle. This computation is carried out in \cref{lem:zeros} and reads
\begin{multline}
	2 \sum_{|\alpha| < 1} \Res_{z = \alpha} \ (B_m\text{-term})
	=
	-
	\sum_{b_1,\ldots,b_n > 0}
		\Bigg[
			\sum_{\beta > 0}
				\beta 
				\frac{1}{2b_1} \Bigl(
					G(b_1 + b_m - \beta)
					-
					G(-b_1 - b_m - \beta) \\
					+
					G(b_1 - b_m - \beta)
					-
					G(-b_1 + b_m - \beta)
				\Bigr) N_{B_m}(\beta)
		\Bigg] \prod_{i=1}^n b_i z_i^{b_i -1} dz_i
\end{multline}
for the $B_m$ term, and
\begin{multline}
	2 \sum_{|\alpha| < 1} \Res_{z = \alpha} \ (C\text{-term})
	=
	-
	\frac{1}{2}
	\sum_{b_1,\ldots,b_n > 0}
		\Bigg[
			\sum_{\beta, \beta' > 0}
				\beta \beta'
				\frac{1}{b_1} \Big(
					G(b_1 - \beta - \beta') \\
					-
					G(-b_1 - \beta - \beta')
				\Big) N_C(\beta,\beta')
		\Bigg] \prod_{i=1}^n b_i z_i^{b_i -1} dz_i
\end{multline}
for the $C$-term, where $G$ is given by a residue over the zeros of $Q$ inside the unit circle as in \cref{eq:F:G}. This defines the second term, $G$, appearing in the building-block function $H$ from \cref{eq:splitting}. Altogether, this yields the desired recursive formula \labelcref{eq:disc:rec} from \cref{thm:A}.

{\bfseries IV) The initial conditions.} To complete the proof, it remains only to compute the initial data. This is obtained by a straightforward direct calculation, which we omit here.

This completes the proof of \cref{thm:A}.

\section{The BMN-like limit}
\label{sec:BMN}

In this section, we study a universal subsector of one-cut matrix models obtained by sending the powers $b_i$ of the matrices appearing in the pruned traces uniformly to infinity (see \cref{thm:B} in the introduction). In this regime, analogous to the BMN limit\footnote{
    Strictly speaking, the BMN limit considers powers of the matrices that scale with $\ms{N}$. In some sense, we are studying a simpler limit, where we first take $\ms{N} \rightarrow \infty$, and then take the powers of the matrices to be large, at each order in the $1/\ms{N}$ expansion.
} in AdS/CFT \cite{BMN02}, the pruned correlators converge to the Kontsevich volumes that govern one of the fundamental building blocks of intersection theory on the moduli space of Riemann surfaces.

To rigorously identify the limit, a word of caution: the pruned correlators depend on the matrix powers not only through polynomial growth but also through their parity. Thus, the naive large $b_i$ limit need not exist. Once one separates even and odd sectors, however, each sector does admit a limit: the pruned correlators converge (up to a scalar factor that depends on the parity) to the Kontsevich volumes, as stated in \cref{thm:B} with scaling constants
\begin{equation}
    \sigma_{\pm} \coloneqq \frac{1}{2 \gamma^2 Q(\pm 1)}
\end{equation}
as in \cref{eq:cnstnts}.

Our proof relies on the discrete recursion from \cref{thm:A}, which, in the limit, converges in a Riemann-sum-to-integral fashion to the continuous recursion satisfied by the Kontsevich volumes, \cref{thm:Kon}. Up to scaling constants, the building-block function $H(\ell)$ universally asymptotes to the ramp function $\rho(\ell) \coloneqq \ell \theta(\ell)$, which serves as the building-block function for the Kontsevich volumes.

This limit admits an equivalent interpretation as the familiar \emph{edge of the spectrum} (or Airy) zoom in random matrix theory \cite{TW94,Eyn16,BrezinHikamiAiry,BrezinZee}, see \cref{fig:edge}. Near the spectral endpoints, the local behavior of any one-cut model is universally governed by the Airy curve, whose topological recursion computes the Kontsevich volumes. This heuristic also explains the two scaling factors as the contributions from the two edges of the spectrum. Although making this correspondence entirely rigorous beyond genus zero is delicate, our proof proceeds directly from the discrete recursion: we show that, term by term, it converges to the Kontsevich recursion, and this convergence propagates by induction on the Euler characteristic.

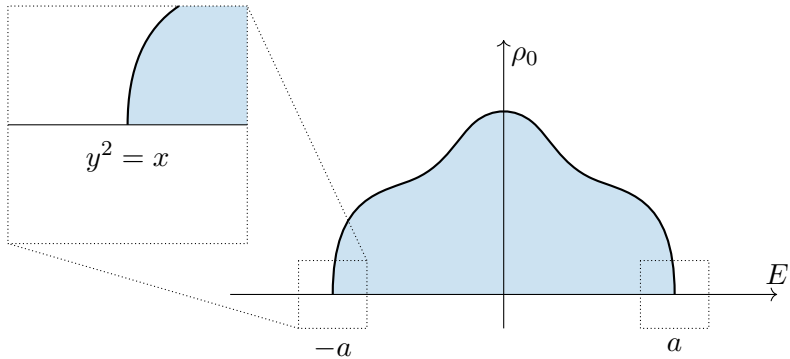
\begin{figure}
    \centering
    \begin{tikzpicture}[x=1pt,y=1pt,scale=.8]
        \fill[RoyalBlue,opacity=.2] (280, 640) -- (440, 640) .. controls (440, 688) and (416, 688) .. (401.3333, 694.6667) .. controls (386.6667, 701.3333) and (381.3333, 714.6667) .. (373.3333, 721.3333) .. controls (365.3333, 728) and (354.6667, 728) .. (346.6667, 721.3333) .. controls (338.6667, 714.6667) and (333.3333, 701.3333) .. (318.6667, 694.6667) .. controls (304, 688) and (280, 688) .. (280, 640) -- cycle;
        \draw[thick] (280, 640) .. controls (280, 688) and (304, 688) .. (318.6667, 694.6667) .. controls (333.3333, 701.3333) and (338.6667, 714.6667) .. (346.6667, 721.3333) .. controls (354.6667, 728) and (365.3333, 728) .. (373.3333, 721.3333) .. controls (381.3333, 714.6667) and (386.6667, 701.3333) .. (401.3333, 694.6667) .. controls (416, 688) and (440, 688) .. (440, 640);
        \draw[->] (360, 624) -- (360, 760);
        \draw[->] (232, 640) -- (488, 640);
        \draw[densely dotted] (424, 656) rectangle (456, 624);
        \draw[densely dotted] (264, 656) rectangle (296, 624);
        \fill[RoyalBlue,opacity=.2] (184, 720) -- (240, 720) -- (240, 776) -- (208, 776) .. controls (192, 765.3333) and (184, 746.6667) .. (184, 720) -- cycle;
        \draw (128, 720) -- (240, 720);
        \draw[thick] (184, 720) .. controls (184, 746.6667) and (192, 765.3333) .. (208, 776);
        \draw[densely dotted] (296, 656) -- (240, 776);
        \draw[densely dotted] (264, 624) -- (128, 664);
        \draw[densely dotted] (128, 776) rectangle (240, 664);
        \node at (280, 624) [below] {$a_-$};
        \node at (440, 624) [below] {$a_+$};
        \node at (370, 752) {$\varrho_0$};
        \node at (488, 640) [above] {$E$};
        \node at (184, 705) {$y^2 = x$};
    \end{tikzpicture}
    \caption{\textbf{Edge of the Spectrum:} Heuristically, in the BMN-like limit, correlators probe eigenvalues close the edge of the spectrum, governed by the Airy universality class $y^2 = x$.}
    \label{fig:edge}
\end{figure}

A more diagrammatic intuition can also be given. In the large $b_i$ regime, most Wick contractions contributing to a pruned correlator occur between edges attached to external vertices, rather than through internal ones. As the external valences grow, the dominant combinatorial patterns are those where external legs contract among themselves, effectively filling the diagram and washing out the detailed structure of the potential. This explains the universality of the Airy limit: the microscopic details of the interaction potential become irrelevant. Making this argument fully precise is challenging, since it involves summing over arbitrarily many internal vertices; moreover, the potential itself generates vertices of unbounded valency. Nevertheless, our recursion-based approach provides a clean derivation of this universal limit, bypassing these combinatorial complications.

We conclude by noting that, for even potentials, the two scaling factors coincide: $\sigma_+ = \sigma_-$. Thus, we find twice the same contribution when $t^{-1}(L_1 + \cdots + L_n)$ is even, and zero in the odd case. This agrees with the fact that, for even potentials, the correlators vanish unless the powers of the matrices sum to an even integer.

We now proceed with the rigorous proof of \cref{thm:B} by induction on $2g-2+n$.

\paragraph{Base case.} The induction step is easily deduced from \cref{eq:ABCD}: the $(0,3)$ case is straightforward, while the $(1,1)$ case reads
\begin{equation}
\begin{split}
    t^{2} \,
    N_{1,1}\Bigl( \frac{L_1}{t} \Bigr)
    &=
    \frac{L_1^2 - 4t^2}{48} \bigl( \sigma_+ + (-1)^{\frac{L_1}{t}} \sigma_- \bigr)
        +
        \frac{t^2}{16} \bigl( \tau_+ - (-1)^{\frac{L_1}{t}} \tau_- \bigr)
        \\
    &\sim
    \frac{L_1^2}{48} \bigl( \sigma_+ + (-1)^{\frac{L_1}{t}} \sigma_- \bigr).
\end{split}
\end{equation}
Here and throughout this section, we use $\sim$ to denote asymptotic equivalence: for two functions $f$ and $g$, we write $f(t) \sim g(t)$ if and only if $f(t)/g(t) \to 1$ as $t \to 0^+$.

\paragraph{Induction step.} 
Now suppose, by induction, that \cref{thm:B} holds for all $2g' - 2 + n' < 2g - 2 + n$. 
For ease of notation, set 
$N_{g,n}^{t}(L_1,\ldots,L_n) \coloneqq t^{2(3g-3+n)} N_{g,n}(L_1/t,\ldots,L_n/t)$ 
for the rescaled discrete volumes, and
\begin{equation}
    B^{t}(L_1,L_m,\ell)
    \coloneqq
    B\Bigl( \frac{L_1}{t},\frac{L_m}{t},\frac{\ell}{t} \Bigr),
    \qquad
    C^{t}(L_1,\ell,\ell')
    \coloneqq
    C\Bigl( \frac{L_1}{t},\frac{\ell}{t},\frac{\ell'}{t} \Bigr),
\end{equation}
for the rescaled kernels. The recursion for the rescaled discrete volumes then reads
\begin{equation}\label{eq:rescaled:rec}
\begin{split}
    N_{g,n}^{t}(L_1,\ldots,L_n)
    &=
    t
    \sum_{m=2}^n \sum_{\ell \in t\Z_+}
        \ell \, B^{t}(L_1,L_m,\ell)
        N_{g,n-1}^{t}(\ell,L_2,\dots, \widehat{L_m},\dots, L_n) \\
    &\qquad
    + \frac{t^2}{2} \sum_{\ell,\ell' \in t\Z_+}
        \ell \ell' \, C^{t}(L_1,\ell,\ell') \Bigg(
            N_{g-1,n+1}^{t}(\ell,\ell',L_2,\dots,L_n) \\
    &\qquad\qquad\qquad
            +
            \sum_{\substack{ h + h' = g \\ J \sqcup J' = \{2,\dots,n\} }}^{\textup{stable}}
                N_{h,1+|J|}^{t}(\ell,L_{J})
                N_{h',1+|J'|}^{t}(\ell',L_{J'})
    \Bigg).
\end{split}
\end{equation}
Notice that the internal sums over $\ell$ and $\ell'$ have been rescaled as well, hence the prefactors $t$ and $t^{2}$ multiplying the $B$- and $C$-terms, respectively, and the sums running over the rescaled positive integers $t \Z_+$.

Since the rescaled discrete volumes on the right-hand side asymptote to known quantities by the induction hypothesis, we only need to analyze the limit of the recursion kernels. 
As these are constructed from the building-block function $H$, we turn our attention to the latter. 
We claim that
\begin{equation}\label{eq:H:limit}
    \lim_{\substack{t \to 0^+ \\ t^{-1}\ell \\ \textup{even/odd}}} t \, H\Bigl( \frac{\ell}{t} \Bigr)
    =
    (\sigma_+ \pm \sigma_-) \rho(\ell),
\end{equation}
where, as before, $\rho(\ell) = \ell \theta(\ell)$ denotes the ramp function. 
To prove the claim, we analyze separately the cases where $\ell$ is negative or positive. 
Before proceeding, notice that in the definition of $H$ (\cref{eq:building:block}), we can deform the contour $\Gamma$ to a circle $|z| = r$ with $r < 1$, chosen large enough to contain all zeros of $Q$ lying inside the unit circle.

For $\ell < 0$, by changing variables $z = 1/w$ in the contour integral, we find
\begin{equation}
    t H\Bigl( \frac{\ell}{t} \Bigr)
    =
    - \frac{t}{2\pi \iu} \oint_{|w| = 1/r} \frac{2w^{\frac{\ell}{t}+1}}{(1-w^2)^2 \gamma^2 Q(w)} \, dw .
\end{equation}
The $\ell$-independent part of the integrand is bounded along the contour by a constant $C > 0$, independent of $\ell$, so that $|t H(\ell/t)| \le t C / r^{\ell/t}$, which tends to zero as $t \to 0^+$ since $\ell$ is negative and $r < 1$.

As for the case $\ell \ge 0$, we can rewrite the contour integral defining $H$ as
\begin{equation}
    t H\Bigl( \frac{\ell}{t} \Bigr)
    =
    -
    t \Res_{z = \pm 1} \frac{2z^{1-\frac{\ell}{t}}}{(1-z^2)^2 \gamma^2 Q(z)} \, dz
    -
    \frac{t}{2\pi \iu} \oint_{|z| = 1/r} \frac{2z^{1-\frac{\ell}{t}}}{(1-z^2)^2 \gamma^2 Q(z)} \, dz .
\end{equation}
Here we first flipped the orientation of the contour $|z| = r$, introducing the minus sign, then deformed it to $|z| = 1/r$ and picked up the residues at $z = \pm 1$ along the way. By the same argument as above, the contour integral tends to zero as $t \to 0^+$. Hence, we are left with evaluating the residues at $z = \pm 1$. Since these are double poles, a direct computation yields
\begin{equation}
    t \Res_{z = \pm 1} \frac{2z^{1-\frac{\ell}{t}}}{(1-z^2)^2 \gamma^2 Q(z)} \, dz
    =
    \frac{\ell}{2\gamma^2} \left( \frac{1}{Q(+1)} + (-1)^{\ell/t} \frac{1}{Q(-1)} \right)
    +
    \mathrm{O}(t).
\end{equation}
This proves the claimed limits \labelcref{eq:H:limit}.

An immediate consequence is then the behavior of the rescaled kernels in the limit:
\begin{equation}
\begin{aligned}
    \lim_{\substack{t \to 0^+ \\ t^{-1}(L_1+L_m+\ell) \\ \textup{even/odd}}} & B^{t}(L_1,L_m,\ell)
    =
    (\sigma_+ \pm \sigma_-) \, B^{\textup{comb}}(L_1,L_m,\ell), \\
    \lim_{\substack{t \to 0^+ \\ t^{-1}(L_1+\ell+\ell') \\ \textup{even/odd}}} & C^{t}(L_1,\ell,\ell')
    =
    (\sigma_+ \pm \sigma_-) \, C^{\textup{comb}}(L_1,\ell,\ell') .
\end{aligned}
\end{equation}
We are now ready to complete the proof by induction. We only consider the even case in \cref{thm:B}; the odd case is entirely analogous. Suppose then that $L_1 + \cdots + L_n$ is an even multiple of $t$. Each sum over ``internal boundary lengths'' in \cref{eq:rescaled:rec} now splits into two or four possible sums as follows:
\begin{itemize}
    \item In the $B$-term, split the sum according to whether $t^{-1}(\ell + L_2 + \cdots + \widehat{L_m} + \cdots + L_n)$ is even or odd. Since, by hypothesis, $t^{-1}(L_1 + \cdots + L_n)$ is even, we obtain that $t^{-1}(L_1 + L_m + \ell)$ is even, respectively odd.
    
    \item In the connected $C$-term, split the sum according to whether $t^{-1}(\ell + \ell' + L_2 + \cdots + L_n)$ is even or odd. We then obtain that $t^{-1}(L_1 + \ell + \ell')$ must be even, respectively odd.
    
    \item In the disconnected $C$-term, fix a splitting $J \sqcup J' = \{2,\ldots,n\}$ of the boundary components, and split the sum into four, according to the parity of $t^{-1}(\ell + \sum_{j \in J} L_j)$ and $t^{-1}(\ell' + \sum_{j' \in J'} L_{j'})$. Then the parity of $t^{-1}(L_1 + \ell + \ell')$ is the sum of the parities of $t^{-1}(\ell + L_J)$ and $t^{-1}(\ell' + L_{J'})$ modulo two.
\end{itemize}
Now consider the $B_m$-term in \cref{eq:rescaled:rec}. Splitting the sum as described above and applying the induction hypothesis, we find
\begin{equation}
\begin{split}
    t
    \sum_{\ell \in t\Z_+} &
        \ell \, B^{t}(L_1,L_m,\ell)
        N_{g,n-1}^{t}(\ell,L_2,\dots, \widehat{L_m},\dots, L_n)
         \\
    &=
    t
    \sum_{\substack{\ell \in t\Z_+ \\ t^{-1}(L_1 + L_m + \ell) \\ \textup{even}}}
        \ell \, B^{t}(L_1,L_m,\ell)
        N_{g,n-1}^{t}(\ell,L_2,\dots, \widehat{L_m},\dots, L_n) \\
    & \quad
    +
    t
    \sum_{\substack{\ell \in t\Z_+ \\ t^{-1}(L_1 + L_m + \ell) \\ \textup{odd}}}
        \ell \, B^{t}(L_1,L_m,\ell)
        N_{g,n-1}^{t}(\ell,L_2,\dots, \widehat{L_m},\dots, L_n) .
\end{split}
\end{equation}
The summands in the last two lines tend to a multiple of the corresponding term from the continuum recursion for Kontsevich's volumes, i.e.\ $B^{\textup{comb}} \, V_{g,n-1}^{\textup{Kon}}$, with multiplicative constants $(\sigma_+ \pm \sigma_-) \,(\sigma_+^{2g-2+n-1} \pm \sigma_-^{2g-2+n-1})$ in the even/odd case, respectively. As for the sums, note that both run over only \emph{half} of the rescaled lattice due to the even/odd conditions. Consequently, the Riemann sums converge to half of the corresponding integrals:
\begin{equation}
    t \ \ \sum_{\mathclap{\substack{\ell \in t\Z_+ \\ t^{-1}(L_1+L_m+\ell) \\ \textup{even/odd}}}} \ \ (\cdot)
    \quad\longrightarrow\quad
    \frac{1}{2}
    \int_{0}^{+\infty}
        d\ell \ (\cdot)
    \qquad
    \text{as } t \to 0^+.
\end{equation}
The $B_m$-term therefore tends to the desired quantity, as
\begin{multline}
    \frac{1}{2}
    \left[
        (\sigma_+ + \sigma_-)
        (\sigma_+^{2g-2+n-1} + \sigma_-^{2g-2+n-1})
        +
        (\sigma_+ - \sigma_-)
        (\sigma_+^{2g-2+n-1} - \sigma_-^{2g-2+n-1})
    \right] \\
    =
    \sigma_+^{2g-2+n} + \sigma_-^{2g-2+n} .
\end{multline}
The exact same argument applies to the connected $C$-term. As for the disconnected one, four splittings are involved. However, in that case the sums are over $\ell + \sum_{j \in J} L_j$ and $\ell' + \sum_{j' \in J'} L_{j'}$ with definite parities. Thus, both sums over $\ell$ and $\ell'$ run over only half of the rescaled lattice, yielding an overall factor of a \emph{quarter}. The remaining algebraic manipulation is rather tedious but straightforward, and we omit it.

Altogether, we obtain the desired limit: in the even case,
\begin{equation}
\begin{split}
    N_{g,n}^{t}&(L_1,\ldots,L_n)
    \sim \\
    &
    (\sigma_+^{2g-2+n} + \sigma_-^{2g-2+n})
    \Bigg[
    \int_{0}^{+\infty}
        d\ell \,
          \ell \, B^{\textup{comb}}(L_1,L_m,\ell)
          V_{g,n-1}^{\textup{Kon}}(\ell,L_2,\dots, \widehat{L_m},\dots, L_n) \\
    & \qquad\qquad\quad
    + \frac{1}{2} \int_{0}^{+\infty} \int_{0}^{+\infty}
        d\ell \, d\ell' \,
        \ell \ell' \, C^{\textup{comb}}(L_1,\ell,\ell') \Bigg(
            V_{g-1,n+1}^{\textup{Kon}}(\ell,\ell',L_2,\dots,L_n) \\
    & \qquad\qquad\qquad\qquad\qquad\qquad\quad
            +
            \sum_{\substack{ h + h' = g \\ J \sqcup J' = \{2,\dots,n\} }}^{\textup{stable}}
                V_{h,1+|J|}^{\textup{Kon}}(\ell,L_{J})
                V_{h',1+|J'|}^{\textup{Kon}}(\ell',L_{J'})
    \Bigg)
    \Bigg],
\end{split}
\end{equation}
which in turn equals $(\sigma_+^{2g-2+n} + \sigma_-^{2g-2+n})V_{g,n}^{\textup{Kon}}(L_1,\ldots,L_n)$, multiple of the Kontsevich volume, thanks to \cref{thm:Kon}.

\section{Discrete \texorpdfstring{$q$}{q}-Weil--Petersson volumes from the DSSYK matrix integral}
\label{sec:DSSYK:rec}

The Sachdev--Ye--Kitaev (SYK) model \cite{Sachdev:1992fk, Kitaev_talk, MaldacenaStanford} is a quantum mechanical system in $(0+1)$ dimensions consisting of $\ms{M}$ Majorana fermions with all-to-all $p$-body interactions. Its dynamics is governed by the Hamiltonian
\begin{equation}
    H
    \coloneqq
    \mathrm{i}^{p/2} \sum_{1 \leq i_1 < \cdots < i_p \leq \ms{M}} 
    J_{i_1 \cdots i_p} \, \psi_{i_1} \cdots \psi_{i_p}\, ,
\end{equation}
where the couplings $J_{i_1 \cdots i_p}$ are drawn from a Gaussian ensemble with zero mean and variance equal to the inverse binomial coefficient:
\begin{equation}
    \langle J_{i_1 \cdots i_p} \rangle_J = 0,
    \qquad
    \langle J_{i_1 \cdots i_p}^2 \rangle_J = \binom{\ms{M}}{p}^{-1}.
\end{equation}
Here $\langle \cdot \rangle_J$ denotes the ensemble average over random couplings. In the planar limit, the model becomes exactly solvable upon taking the double-scaling limit $\ms{M},p\to \infty$ with $\lambda \coloneqq 2p^2/\ms{M}$ fixed. This regime is referred to as the double-scaled SYK (DSSYK) model \cite{Cotler:2016fpe, Berkooz:2018qkz, Berkooz:2018jqr}; for a recent review, see \cite{BerkoozReview}. Using transfer-matrix techniques, the expectation value of the partition function $\langle \Tr e^{-\beta H} \rangle_J$ of DSSYK can be computed explicitly \cite{Erd_s_2014, Berkooz:2018qkz} as
\begin{equation}\label{eq:disk:amplitude}
    \Braket{ \Tr{e^{-\beta H}} }_J
    =
    \int_0^\pi \frac{d\theta}{2\pi} \,
        (q;q)_\infty
        (e^{2i\theta}; q )_\infty
        (e^{-2i\theta}; q )_\infty
        \,
    e^{-\beta E(\theta)} ,
\end{equation}
where $q = e^{-\lambda}$, $E(\theta) = -2\cos\theta/\sqrt{1-q}$ and $(x;q)_\infty = \prod_{l=0}^{\infty}(1-xq^l)$ denotes the $q$-Pochhammer symbol. 
The authors of \cite{JKMS23} observed that the expectation value \labelcref{eq:disk:amplitude} can equivalently be expressed as the genus-zero, one-point function $\langle \mathrm{Tr}\, e^{\beta M}\rangle_{g=0}$ of the matrix model with even potential 
\begin{equation} \label{eq:dssykPot}
    V_{q}(M)
    \coloneqq
    \sum_{k=1}^{\infty} 
        \frac{(-1)^{k+1}}{k}
        q^{k(k+1)/2} ( 1+q^{-k} ) \,
        T_{2k}\biggl( \frac{\sqrt{1-q}}{2} M \biggr) .
\end{equation}
More explicitly, the disorder-averaged amplitude \labelcref{eq:disk:amplitude} can be recast as an expectation value supported by the large $\ms{N}$ eigenvalue distribution
\begin{equation}
    \varrho_0(x)
    =
    \frac{1}{2\pi \sqrt{a^2 - x^2}} \,
    (q;q)_\infty
    (e^{2i\theta};q)_\infty
    (e^{-2i\theta};q)_\infty ,
    \qquad
    a \coloneqq \frac{2}{\sqrt{1-q}}
\end{equation}
for $x = -E(\theta)$, as
\begin{equation}
    \Braket{ \Tr{e^{-\beta H}} }_J
    =
    \int_{-a}^{a}
    dx\, \varrho_0(x)\, e^{\beta x}.
\end{equation}
The spectral curve of the corresponding matrix model thus reads \cite{Oku23}:
\begin{equation}\label{eq:DSSYK}
    \mc{S}^{\textup{DSSYK}}
    =
	\begin{cases}
		x(z) = \dfrac{a}{2}\left( z + \dfrac{1}{z} \right), \\[2ex]
		y(z) = \dfrac{1}{a} \left( z - \dfrac{1}{z} \right)
		\prod_{k \ge 1} (1-q^k)(1-z^2q^k)(1-z^{-2}q^k).
	\end{cases}
\end{equation}
Note that, as in the relation between the SSS matrix model and standard SYK \cite{SSSJTmatrix}, only the disk one-point function matches between the DSSYK matrix integral and DSSYK itself.

In this section, we analyze this DSSYK spectral curve and explicitly compute the associated ABCD of \cref{subsec:ABCD}. This provides a recursion à la Mirzakhani for the pruned correlators of the double-scaled SYK matrix model:
\begin{equation}
    N_{g,n}^{\textup{DSSYK}}(b_1,\ldots,b_n;q)
    =
    \left\langle \prod_{i=1}^n \frac{1}{b_i} \NTr{M^{b_i}} \right\rangle_{\!\!g,\cc}^{\!\!\textup{DSSYK}}.
\end{equation}
We have emphasized the $q$-dependence of the pruned correlators through the DSSYK matrix model potential \labelcref{eq:dssykPot}. Setting $q=0$ recovers Norbury's discrete volumes, as the potential reduces to that of the GUE: $\lim_{q \to 0} V_{q}(M) = M^2/2$. In this sense, $N_{g,n}^{\textup{DSSYK}}$ can be seen as a $q$-deformation of the lattice point counting on the moduli space of curves.

As a powerful application of our recursion, we analyze a more complicated limit, tuning both $q \to 1$ in the matrix model potential while simultaneously rescaling the powers of the traces in the correlators. We will show that the pruned correlators converge to the Weil--Petersson volumes, confirming a conjecture by Okuyama \cite[equation~(6.5)]{Oku23}.

Throughout this section, $q$ is assumed to lie in the interval $[0,1)$, and we use the shorthand notation $(q)_\infty \coloneqq (q;q)_\infty = \prod_{k \ge 1} (1-q^k)$, also known as Euler's function.

\subsection{A discrete \texorpdfstring{$q$}{q}-Mirzakhani recursion}
Following \cref{sec:discrete:rec}, the spectral curve \labelcref{eq:DSSYK} corresponds to a deformation of the GUE curve provided by
\begin{equation}
    \gamma
    = \frac{1}{\sqrt{1-q}},
    \qquad
	Q_q(z)
    =
    (1-q) \prod_{k \ge 1} (1-q^k)(1-z^2q^k)(1-z^{-2}q^k) .
\end{equation}
The partial fraction decomposition of $1/Q_q$ is in fact known, and given for instance in \cite[page~136]{TM02}:
\begin{equation}\label{eq:Qq}
	\frac{1}{\gamma^2 Q_q(z)}
	=
	\frac{1}{(q)_{\infty}^{3}}
	-
	\frac{1}{(q)_{\infty}^{3}} \left( z - \frac{1}{z} \right)^2
	\sum_{k \ge 1}
		(-1)^k
		q^{\frac{k(k+1)}{2}} (1 + q^k) 
		\frac{1}{( 1 - z^2 q^k ) ( 1 - z^{-2} q^k )}.
\end{equation}
This corresponds to the example analyzed in \cref{{app:ex}}, with constants $2\sigma = (q)_{\infty}^{-3}$, $\alpha_k = q^{k/2}$ and $A_k = (q)_{\infty}^{-3} (-1)^k q^{\frac{k(k+1)}{2}} (1 + q^k)$. After some algebraic manipulation explained in \cref{app:cancellation}, we find that the building-block function reads\footnote{
    In this case, another justification is due, since $1/Q_{q}(z)$ from \labelcref{eq:Qq} has infinitely many poles. The main idea is that its partial fraction decomposition can be well approximated by a sequence of rational functions, in the same way that $\frac{\pi z}{\sin(\pi z)} \approx 1 -2z^2\sum_{k=1}^N \frac{(-1)^k}{k^2 - z^2}$. More precisely, for fixed $|q| < 1$, the quantity
    \begin{equation*}
	   H_k
	   =
	   (-1)^{k+1} q^{\frac{k(k+1)}{2}}
		\frac{q^{-\frac{k\ell}{2}}}{1 - q^k}
    \end{equation*}
    decays super-exponentially in $k$: there exists $0 < \rho < 1$ and $C>0$ such that $|H_k| \le C \, \rho^{k^2}$. Therefore the contour integral computing $H$ can be exchanged with the series in $k$ thanks to absolute convergence.
}
\begin{equation}\label{eq:H:qWP}
	H_q(\ell)
	=
	\frac{2}{(q)_{\infty}^{3}}
    \sum_{k \ge 1}
		(-1)^{k+1} q^{\frac{k(k+1)}{2}}
		\frac{q^{-\frac{k\ell}{2}}}{1 - q^k} .
\end{equation}
This gives the following $q$-deformations of the Mirzakhani kernels $B$ and $C$ and the initial data $A$ and $D$:
\begin{equation}\label{eq:ABCD:qWP}
\begin{aligned}
	A_q(b_1,b_2,b_3)
	& = \frac{1+(-1)^{b_1+b_2+b_3}}{2(q)_{\infty}^{3}}, \\
	B_q(b,b',\beta)
	& =
	\frac{1}{2b} \Bigl(
		H_q(b + b' - \beta)
		-
		H_q(-b - b' - \beta) \\
	& \qquad\qquad\qquad\qquad
		+
		H_q(b - b' - \beta)
		-
		H_q(-b + b' - \beta)
	\Bigr) , \\
	C_q(b,\beta,\beta')
	& =
	\frac{1}{b} \Bigl(
		H_q(b - \beta - \beta')
		-
		H_q(-b - \beta - \beta')
	\Bigr) , \\
	D_q(b)
	& =
	\frac{1+(-1)^{b}}{2 (q)_{\infty}^{3}} \biggl( \frac{b^2 - 4}{48} + \frac{\zeta_q(2)}{2} \biggr) .
\end{aligned}
\end{equation}
The $D$-term follows from the fact that, for even potential, the contributions from the first derivative in \cref{eq:ABCD} vanish while the contributions from the second derivative equal
\begin{equation}
	\frac{d^2}{dz^2}\frac{(q)_{\infty}^{3}}{\gamma^2 Q_q(z)} \bigg|_{z = \pm 1}
	=
	8
	\sum_{k \ge 1} (-1)^{k+1} q^{\frac{k(k+1)}{2}} \frac{1 + q^k}{(1 - q^k)^2}
	=
	8 \zeta_q(2).
\end{equation}
The first equality is a direct evaluation of the partial fraction decomposition. The last equality is shown in \cite[corollary~1.1]{HH15}. Here $\zeta_q(s)$ is the $q$-analog of the Riemann zeta function:
\begin{equation}
	\zeta_q(s)
    \coloneqq
    \sum_{k \ge 1} \frac{q^{\frac{ks}{2}}}{(1 - q^k)^s} .
\end{equation}
In the second installment of this paper, we will study more generally the structural dependence of the pruned DSSYK correlators on even values of the $q$-zeta function.

To sum-up, we have the following discrete $q$-analog of Mirzakhani's recursion.

\begin{proposition}\label{prop:qMir}
    For $2g-2+n > 1$ and $b_1 + \cdots + b_n$ even, the pruned DSSYK correlators satisfy the recursion relation
	\begin{equation}
    \begin{split}
    	N^{\textup{DSSYK}}_{g,n}(b_1,\ldots,b_n;q)
    	&=
    	\sum_{\beta > 0}
    		  \beta \, B_q(b_1,b_m,\beta)
    		  N^{\textup{DSSYK}}_{g,n-1}(\beta,b_2,\dots, \widehat{b_m},\dots, b_n;q) \\
    	& \quad
    	+ \frac{1}{2} \sum_{\beta,\beta' > 0}
    		\beta \beta' \, C_q(b_1,\beta,\beta') \Bigg(
    			N^{\textup{DSSYK}}_{g-1,n+1}(\beta,\beta',b_2,\dots,b_n;q) \\
    	& \qquad\qquad
    			+
    			\sum_{\substack{ h + h' = g \\ J \sqcup J' = \{2,\dots,n\} }}^{\textup{stable}}
    				N^{\textup{DSSYK}}_{h,1+|J|}(\beta,b_{J};q)
    				N^{\textup{DSSYK}}_{h',1+|J'|}(\beta',b_{J'};q)
    	\Bigg),
    \end{split}
    \end{equation}
    with $B_q$ and $C_q$ as in \cref{eq:ABCD:qWP}. Together with the initial data $A_q = N_{0,3}^{\textup{DSSYK}}$ and $D_q = N_{1,1}^{\textup{DSSYK}}$, this recursion uniquely determines all correlators.
\end{proposition}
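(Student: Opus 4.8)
The plan is to obtain \cref{prop:qMir} as a direct specialization of \cref{thm:A} to the DSSYK spectral curve \cref{eq:DSSYK}, for which the deformation function is $Q = Q_q$ as in \cref{eq:Qq}. Concretely, there are three tasks: (i)~justify that \cref{thm:A} may be applied to $Q_q$, even though $Q_q$ is not meromorphic on $\P^1$ but has an essential singularity at the origin; (ii)~evaluate the building-block function of \cref{eq:building:block} in closed form for $Q = Q_q$, obtaining the function $H_q$ of \cref{eq:H:qWP}; and (iii)~compute the initial data $N_{0,3}^{\textup{DSSYK}} = A_q$ and $N_{1,1}^{\textup{DSSYK}} = D_q$. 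Once these are in place, substituting $H_q$ into the universal $B$- and $C$-kernel formulae of \cref{subsec:ABCD} produces the kernels $B_q$, $C_q$ of \cref{eq:ABCD:qWP}, and the assertion that $A_q$ and $D_q$ together with the recursion determine all correlators is inherited verbatim from \cref{thm:A}.

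For task~(i), I would invoke the partial fraction decomposition of $1/Q_q$ recalled just before the statement, which presents $1/Q_q$ as a series $\sum_{k \ge 0} 1/Q_q^{(k)}$ of rational functions whose only poles inside the unit disc are at $z = \pm q^{k/2}$. Since every step in the proof of \cref{thm:A} — the contour deformation off $|z| = 1$ and the subsequent residue calculus — depends $\C$-linearly on $1/Q$, one may run that proof on each truncation $\sum_{k = 0}^{K} 1/Q_q^{(k)}$, which is a genuine rational $Q$ satisfying all the hypotheses of \cref{thm:A}, and then let $K \to \infty$. Convergence of the resulting series is controlled by the super-exponential decay estimate $\lvert H_k \rvert \le C \rho^{k^2}$ (for suitable $0 < \rho < 1$, $C > 0$) recorded in the footnote to \cref{eq:H:qWP}; this is precisely the analogue of approximating $\pi z / \sin(\pi z)$ by its truncated Mittag--Leffler expansion, and it is the one genuinely delicate point of the argument.

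For task~(ii), feeding $Q = Q_q^{(k)}$ into \cref{eq:building:block} reduces $H$ to the sample computation of \cref{app:ex}, with $\alpha_k = q^{k/2}$ and $A_k = (q)_{\infty}^{-3}(-1)^k q^{k(k+1)/2}(1 + q^k)$: the Taylor coefficients $\mu(b)$ of $1/Q_q^{(k)}$ at $z = 0$ produce the ramp-type $F$-term, while the residues of $\tfrac{2 z^{1 - \ell}}{(1 - z^2)^2 Q_q^{(k)}(z)}\,dz$ at $z = \pm q^{k/2}$ produce the $G$-term. Summing over $k$ and carrying out the algebraic cancellations of \cref{app:cancellation} — in which the ramp contributions cancel against part of the $G$-contributions and the even and odd zeros combine — collapses the answer to $H_q(\ell) = \tfrac{2}{(q)_{\infty}^{3}} \sum_{k \ge 1} (-1)^{k+1} q^{k(k+1)/2}\, \tfrac{q^{-k\ell/2}}{1 - q^k}$, and plugging this $H_q$ into \cref{eq:ABCD} gives $B_q$ and $C_q$.

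For task~(iii), I would evaluate the scalars entering the $A$- and $D$-entries of \cref{eq:ABCD}: $\sigma = 1/Q_q(1)$ and $\tau = \tfrac{d^2}{dz^2}\bigl(1/Q_q(z)\bigr)\big|_{z=1}$. Since $Q_q(1) = \prod_{k \ge 1}(1 - q^k)^3 = (q)_{\infty}^{3}$, one gets $\sigma = (q)_{\infty}^{-3}$; differentiating the partial fraction decomposition twice at $z = 1$ and using \cite[corollary~1.1]{HH15} yields $\tfrac{d^2}{dz^2}\bigl((q)_{\infty}^{3}/Q_q(z)\bigr)\big|_{z=1} = 8\,\zeta_q(2)$, hence $\tau = 8\,\zeta_q(2)/(q)_{\infty}^{3}$. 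Inserting these into \cref{eq:ABCD} reproduces exactly $A_q$ and $D_q$ of \cref{eq:ABCD:qWP}. With $A_q$, $B_q$, $C_q$, $D_q$ now matching the kernels and initial data of \cref{eq:ABCD:qWP}, \cref{thm:A} immediately gives both the stated recursion and the uniqueness statement, completing the proof. As flagged above, the main obstacle is the analytic bookkeeping in task~(i) — making the term-by-term application of \cref{thm:A} legitimate across the essential singularity — which is handled by the $\lvert H_k \rvert \le C \rho^{k^2}$ bound.
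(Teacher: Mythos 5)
Your proposal is correct and takes essentially the same route as the paper: there, \cref{prop:qMir} is obtained as a direct specialization of \cref{thm:A} to $Q = Q_q$, with the essential singularity at $z=0$ handled exactly by your truncation-and-limit argument (the footnote to \cref{eq:H:qWP} with the $|H_k| \le C\rho^{k^2}$ bound), $H_q$ computed via the partial-fraction example of \cref{app:ex} together with \cref{app:cancellation}, and the initial data from $\sigma = 1/Q_q(1) = (q)_{\infty}^{-3}$ and the $q$-zeta identity of \cite{HH15} for $\tau$. One minor descriptive correction: the cancellation of \cref{app:cancellation} takes place entirely within the $F$-term (for even $\ell$ the ramp cancels against the Taylor-coefficient contributions of the partial fractions), so that $H_q$ reduces to the $G$-term alone, rather than the ramp cancelling against part of the $G$-contributions as you phrase it.
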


Although this follows as a straightforward consequence of the general \cref{thm:A}, we believe it is of independent interest to both physicists and mathematicians, especially in light of the considerations outlined in the discussion section.

\subsection{Proof of Okuyama's conjecture}
Recall the notation $q = e^{-\lambda}$ for the double-scaling parameter of the underlying DSSYK model. The goal of this section is to show that, as we send $b_i \rightarrow \infty,\lambda \rightarrow 0$ keeping $\lambda \, b_{i}=L_{i}$ fixed, the discrete volumes asymptote to the Weil--Petersson volumes:
\begin{equation}\label{eq:Oku:conj}
	\lim_{\lambda \to 0^+} \
        (2 (q)_{\infty}^3 )^{2g-2+n}
		\lambda^{2(3g-3+n)}
		N_{g,n}^{\textup{DSSYK}}\Bigl(
            \frac{L_1}{\lambda},\ldots,\frac{L_n}{\lambda};q=e^{-\lambda}
        \Bigr)
	=
	2 \cdot V^{\textup{WP}}_{g,n}(L_1,\ldots,L_{n})
\end{equation}
whenever the sum of the $L_i/\lambda \in \mathbb{Z}_{+}$ is an even integer (otherwise the correlator vanishes since the potential is even). To prove the limit, we proceed similarly to the BMN-like limit by induction on $2g-2+n$. 

\paragraph{Base case.}
We first show that the initial conditions of the recursion satisfy Okuyama's conjecture. The $(0,3)$ case in \cref{eq:ABCD:qWP} is straightforward. Using $\zeta_{q}(2) \sim \lambda^{-2} \zeta(2)$ as $\lambda \to 0^+$, the $(1,1)$ case also flows to the Weil--Petersson result: 
\begin{equation}
	2(q)_{\infty}^3 \, \lambda^2 \,
    N_{1,1}^{\textup{DSSYK}}\Bigl(\frac{L_1}{\lambda};q=e^{-\lambda}\Bigr)
	\sim
	2 \left( \frac{L_1^2}{48} + \frac{\zeta(2)}{2} \right)
	=
	2 \cdot V_{1,1}^{\textup{WP}}(L_1) .
\end{equation}

\paragraph{Induction step.}
Our recursion relation from \cref{prop:qMir} provides the induction step. By induction hypothesis, all \smash{$N_{g',n'}^{\textup{DSSYK}}$} for fixed $2g'-2+n'<2g-2+n$ satisfy Okuyama's conjecture. We thus only need to consider how the recursion kernels $B_q$ and $C_q$ behave in the combined limit to show all higher correlators \smash{$N_{g,n}^{\textup{DSSYK}}$} also flow to the continuum Weil--Petersson volumes. Since these are built out of the basic building-block function we computed in \labelcref{eq:H:qWP}, the entire computation reduces to studying the limiting behavior of $H_q$. The crucial point is to note that the explicit $H_q$ is a $q$-analog of $2 \log(1 + e^{\ell/2})$, the building-block function appearing in the Weil--Petersson volumes\footnote{
    As a side note, it is curious to find here a complete Fermi--Dirac integral: $2 \log(1 + e^{\ell/2}) = \int_{0}^{\infty} \frac{dt}{1 + e^{(t-\ell)/2}}$. Thus, the building-block in the DSSYK case is a $q$-analog of a complete Fermi--Dirac integral. We do not understand its meaning, if any.}, cf. \cref{thm:Mir}:
\begin{equation}
	\lim_{\lambda \to 0^+} \
        (q)_{\infty}^3 \, \lambda \,
        H_q\Bigl( \frac{\ell}{\lambda} \Bigr)
	=
	2 \sum_{k \ge 1} \frac{(-1)^{k+1}}{k} e^{\ell k/2}
	=
	2 \log(1 + e^{\ell/2}).
\end{equation}
This implies that the $q$-kernels behave as  
\begin{equation}
\begin{aligned}
	(q)_{\infty}^3 \, B_q\Bigl(
        \frac{L_1}{\lambda},\frac{L_m}{\lambda},\frac{\ell}{\lambda}
    \Bigr)
	&\sim
	B^{\textup{hyp}}(L_1,L_m,\ell), \\
	(q)_{\infty}^3 \, C_q\Bigl(
        \frac{L_1}{\lambda},\frac{\ell}{\lambda},\frac{\ell'}{\lambda}
    \Bigr)
	&\sim
	C^{\textup{hyp}}(L_1,\ell,\ell').
\end{aligned}
\end{equation}
The remaining details closely parallel the proof of the BMN-like limit, see \cref{sec:BMN}. This concludes the proof of \cref{thm:C}.

\section{Discussion \& outlook}

\textbf{CohFT perspective.} In the second installment of this paper, we provide an in\-ter\-sec\-tion-theoretic expression of the pruned correlators $N_{g,n}$ by deriving an operator dictionary between matrix model traces and cohomology classes on $\Mbar_{g,n}$. This generalizes the relation between matrix model observables and cohomology classes on $\Mbar_{g,n}$, extending beyond the classical Kontsevich and Weil--Petersson cases, as well as the usual double-scaling regime. Using the language of cohomological field theory, we will see precisely how the matrix model potential becomes encoded in the integrand on moduli space.

\medskip

\textbf{Relation to sine-dilaton gravity.} As mentioned in the introduction, the ETH matrix model for DSSYK discussed in \cref{sec:DSSYK:rec} plays to DSSYK the same role that the SSS matrix integral plays to standard SYK. The SSS model admits a dual description in terms of JT gravity \cite{SSSJTmatrix}. This naturally raises the question: what is the gravity dual of the DSSYK matrix integral studied here? In a series of recent works, Blommaert and collaborators have proposed \emph{sine-dilaton gravity} as the natural counterpart. In their description, the discreteness of the boundary lengths $b_i$ arises from a quantization condition (see, e.g. \cite[equation~(2.16)]{BLMPP25}), discussed primarily at genus zero. Their equation~(5.1), which relates the insertion of geodesics boundary of length $b$ to matrix model quantities, agrees with our use of pruned traces, given the relation to Chebyshev polynomials in \cref{eq:prune:Cheby}.

\medskip

\textbf{Relation to recent works.} There has been a flurry of recent developments connecting double-scaled matrix integrals to new low-dimensional string theories. In particular, the amplitudes of the Virasoro Minimal String (VMS) \cite{CollierVMS,CliffVMS, CliffSuperVMS, RangamaniSuperVMS,IoannisVMS} and Complex Liouville String (CLS) \cite{Collier:CLS,Collier:CLS:world,Collier:CLS:matrix,Collier:CLS:nonper,Collier:path} also define certain continuous deformations of the Weil--Petersson volumes or variations thereof. For example, in the VMS, the deformation is characterized by the $\mathsf{b}$ parameter of the underlying Liouville theories. Those volumes do not agree with any considered here, in part because the dual matrix descriptions are all double-scaled: their boundary lengths are continuous. 

A related yet distinct line of development has been pursued by Do and Norbury, who have recently introduced a $q$-analog of the Weil--Petersson volumes defined via a continuous recursion \cite{DN}. Their construction is closely related to our pruned correlators, but follow from a particular limit of ours. More precisely, their volumes arise as a \emph{top-degree limit} of those considered in \cref{sec:DSSYK:rec}, obtained by assigning $\deg b_i = 1$ and $\deg \zeta_q(d) = d$ and keeping only the leading terms. For example, in the genus-$1$, $1$-point case, their volume is $\frac{L^2}{48} + \zeta_q(2)$, while our corresponding pruned correlator is $\frac{L^2 - 4}{48} + \zeta_q(2)$. Importantly, their $q$-volumes are labeled by continuous boundary lengths, they cannot reproduce the lattice count on moduli space obtained in the $q\to 0$ limit, and do not agree with the correlators of the DSSYK matrix model considered by Okuyama \cite{Oku23}. 

As for non-double-scaled theories, beyond the discrete Mirzakhani-type recursion derived in this paper, there must also exist discrete analogues of the string and dilaton equations. In fact, Okuyama \cite{OkuyamaCap} has recently applied the discrete Laplace transform of \cref{sec:discrete:rec} to the genus-zero, one-point function to introduce what he calls the cap amplitude. He then shows that pruned correlators obey a discrete dilaton equation, thus illustrating another discrete facet of topological recursion. It would be equally interesting to investigate whether a discrete analogue of the string equation can be formulated within this framework.

\medskip

\textbf{Beyond perturbative discreteness.} Although we have referred to pruned matrix model correlators as \emph{discrete volumes}, the picture of a weighted count of isolated points on moduli space breaks down non-perturbatively in the interaction couplings (at each order in $1/\ms{N}$). It would be very interesting to make the resulting picture precise. We expect the integrand on moduli space to be sharply peaked around these points at weak coupling, with a characteristic width set by the coupling.

\medskip

\textbf{What do the discrete $q$-WP volumes count?} While we have shown that the pruned correlators in the DSSYK matrix model converge to the standard Weil--Petersson volumes in the $q \to 1$ limit, we do not yet have an independent geometric definition of these discrete analogs from the point of view of $\M_{g,n}$. In particular, can we assign a genuine counting problem to the $q$-parameter? In the actual DSSYK model, the power of $q$ enumerates intersections of chord diagrams. How is this combinatorial interpretation reflected in the ETH matrix integral description? This question should prove mathematically very rich.

\medskip

\textbf{A unification of moduli space volumes from DSSYK.} Finally, we note that the discrete $q$-analogs of the Weil--Petersson volumes derived from the DSSYK matrix model unify three major notions of volumes of the moduli space of Riemann surfaces that have shaped the field of algebraic geometry over the past three decades. They can all be recovered in appropriate limits. The structure of these limits can be summarized schematically as follows.
\begin{equation}
\begin{tikzcd}[column sep=6em, row sep=6em]
	{N^{\textup{DSSYK}}_{g,n}(b_1,\ldots,b_n;q)} & {V^{\textup{WP}}_{g,n}(L_1,\ldots,L_n)} \\
	{N^{\textup{Nor}}_{g,n}(b_1,\ldots,b_n)} & {V^{\textup{Kon}}_{g,n}(\ell_1,\ldots,\ell_n)}
	\arrow["{\substack{q \to 1, \; b_i \to \infty \\ \log(q^{-1})b_i = L_i}}", from=1-1, to=1-2]
	\arrow["{q \to 0}"', from=1-1, to=2-1]
	\arrow["\text{BMN}"{marking, allow upside down,description}, dotted, from=1-1, to=2-2]
	\arrow["{\substack{s \to 0, \; L_i \to \infty \\ s L_i = \ell_i}}", from=1-2, to=2-2]
	\arrow["{\substack{t \to 0, \; b_i \to \infty \\ tb_i = \ell_i}}"', from=2-1, to=2-2]
\end{tikzcd}
\end{equation}
As a simple illustration, consider the case of genus-$1$, $1$-point.
\begin{equation}
\begin{tikzcd}
	{\frac{1+(-1)^{b}}{2} \Bigl( \frac{b^2 - 4}{48} + \frac{\zeta_q(2)}{2} \Bigr)} & {\frac{L^2}{48} + \frac{\zeta(2)}{2}} \\
	{\frac{1+(-1)^{b}}{2} \frac{b^2 - 4}{48}} & {\frac{\ell^2}{48}}
	\arrow[from=1-1, to=1-2]
	\arrow[from=1-1, to=2-1]
	\arrow[dotted, from=1-1, to=2-2]
	\arrow[from=1-2, to=2-2]
	\arrow[from=2-1, to=2-2]
\end{tikzcd}
\end{equation}
It is remarkable that the DSSYK matrix model appears to encode so much of the geometry of the moduli space of Riemann surfaces.

\section*{Acknowledgments}
A.G.~is supported by an ETH~Fellowship (22-2~FEL-003) and a Hermann–Weyl Instructorship from the Forschungsinstitut für Mathematik at ETH~Zürich. He thanks Gaëtan Borot, Séverin Charbonnier, Danilo Lewański, and Campbell Wheeler for several discussions on volumes of moduli spaces in connection with Laplace transform and topological recursion. P.M is supported by the Swiss National Science Foundation (SNSF). E.A.M.~is supported by a SwissMAP Research Fellowship. He would like to thank Scott Collier for patient explanations of the VMS and CLS amplitudes, Lorenz Eberhardt for conversations on stringy origins of these recursion relations (in particular for teaching him about higher equations of motion in Liouville theory),  David Gross, Clifford Johnson, Rishabh Kaushik, Shota Komatsu, Beat Nairz, Debmalya Sarkar and Steve Shenker for discussions relating to perturbative discreteness, Adam Levine on connections to sine-dilaton gravity, Ahmed Almheiri, Daniel Jafferis and Julian Sonner on ETH matrix models. He would also particularly like to thank Matthias Gaberdiel, Rajesh Gopakumar and Wei Li for related collaborations and important conversations on stringy realizations of discrete volumes. This work was performed in part at the Aspen Center for Physics, which is supported by National Science Foundation grant PHY-2210452. 

\newpage
\appendix
\section{An example towards DSSYK} 
\label{app:ex}
In view of the DSSYK spectral curve discussed in \cref{sec:DSSYK:rec}, in this appendix we consider the simple case in which the function $Q$, determining the matrix model spectral curve, takes the form
\begin{equation} 
	\frac{1}{\gamma^2 Q(z)}
	=
	2\sigma
	-
	\left( z - \frac{1}{z} \right)^2 \frac{A}{(1-z^2\alpha^2)(1-z^{-2}\alpha^2)} ,
\end{equation}
where $\sigma$, $A$ and $\alpha$ are arbitrary scalars with $|\alpha| < 1$. Notice that $\sigma = \frac{1}{2\gamma^2 Q(\pm 1)}$. Moreover, $Q$ is even, corresponding to an even potential in the underlying matrix model.

The goal of this appendix is to compute the associated building-block function $H$, following the recipe illustrated in \cref{subsec:ABCD}. The computation of the contour integral defining $H$ is split as in \cref{eq:F:G} into the contribution from the residue at $z = 0$ (the $F$-term), and that from the zeros of $Q$ lying inside the unit circle (the $G$-term).

The function $F$ is given by computing the Taylor expansion coefficients of $1/Q$ at $z = 0$, and reads
\begin{equation}\label{eq:F:toy} 
	F(\ell)
	=
    \frac{1 + (-1)^\ell}{2}
    \left(
	   2\sigma\rho(\ell)
	   +
	   \frac{2A}{\alpha^2}
	   \frac{\alpha^\ell - \alpha^{-\ell}}{\alpha^2 - \alpha^{-2}}
        \theta(\ell)
	\right)
	.
\end{equation}
Recall that $\theta(\ell)$ is the Heaviside theta function, and $\rho(\ell) = \ell \theta(\ell)$ is the ramp function. The function $G$ is given instead from the residues at $z = \pm \alpha$, which are explicitly evaluated as
\begin{equation}\label{eq:G:toy} 
	G(\ell)
	=
	\frac{1+(-1)^\ell}{2}
	\frac{2A}{\alpha^2}
	\frac{\alpha^{-\ell}}{\alpha^2 - \alpha^{-2}} .
\end{equation}
Since $Q$ is even, only even values of $\ell$ enter in the recursion formula. Thus, we can drop the parity indicator in $F$ and $G$. Summing them up, we find that the building-block function $H$ is given as
\begin{equation}
	H(\ell)
	=
	2\sigma\rho(\ell)
	+
	\frac{2A}{\alpha^2}
		\frac{\alpha^{|\ell|}}{\alpha^2 - \alpha^{-2}}
	.
\end{equation}

Notice that a similar equation holds for $Q$ with more zeros: for
\begin{equation}\label{eq:M:toy}
	\frac{1}{\gamma^2 Q(z)}
	=
	2\sigma
	-
	\left( z - \frac{1}{z} \right)^2
	\sum_{k=1}^K \frac{A_k}{(1-z^2\alpha_k^2)(1-z^{-2}\alpha_k^2)} ,
\end{equation}
with $\sigma = \frac{1}{2\gamma^2 Q(\pm 1)}$ and $\alpha_k$ lies inside the unit circle, then
\begin{equation}\label{eq:H:toy}
	H(\ell)
	=
	2\sigma\rho(\ell)
	+
	\sum_{k=1}^K
		\frac{2A_k}{\alpha_k^2}
			\frac{\alpha_k^{|\ell|}}{\alpha_k^2 - \alpha_k^{-2}}
	.
\end{equation}

\section{Discrete recursion: proofs} \label{app:proofsRecRel}
In what follows, set
\begin{equation}\label{eq:Laplace:WB:WC}
\begin{aligned}
	\omega_B(z_1)
	&\coloneqq
	\sum_{b_1 > 0} b_1 \, N_B(b_1) z_1^{b_1-1} dz_1, \\
	\omega_C(z_1,z_2)
	&\coloneqq
	\sum_{b_1,b_2 > 0} b_1 b_2 \, N_C(b_1,b_2) z_1^{b_1-1} z_2^{b_2-1} dz_1 dz_2,
\end{aligned}
\end{equation}
for $\omega_B$ and $\omega_C$ symmetric meromorphic differentials on $\P^1$ with poles at $\pm 1$ only and satisfying $\omega_B(z_1^{-1}) = - \omega_B(z_1)$ and $\omega_C(z_1^{-1},z_2) = - \omega_C(z_1,z_2)$.
In other words, $\omega_B$ and $\omega_C$ are the discrete Laplace transforms of $N_B$ and $N_C$, respectively.
More generally, set
\begin{equation}
	\mathfrak{L}\bigl[ N \bigr](z_1,\ldots,z_n)
	\coloneqq
	\sum_{b_1,\ldots,b_n > 0} N(b_1,\ldots,b_n) \prod_{i=1}^n b_i z_i^{b_1 -1} dz_i
\end{equation}
for the discrete Laplace transform of a quasi-polynomial function $N$. Recall the kernel $B^{\textup{comb}}$ and $C^{\textup{comb}}$ from the recursion for the Kontsevich volumes, \cref{thm:Kon} defined in terms of the ramp function $\rho(\ell) = \ell \theta(\ell)$.

\begin{lemma}\label{lem:GUE}
	The following equations hold:
	\begin{align}
		\notag
		\de_{z_2}
		\Bigg[
			\bigg(
				\frac{z_1^3}{(1 - z_1^2)^2 dz_1} \omega_B(z_1)
				-
				\frac{z_2^3}{(1 - z_2^2)^2 dz_2} & \omega_B(z_2)
			\bigg)
			\bigg(
				\frac{1}{z_1 - z_2}
				-
				\frac{1}{z_1 - z_2^{-1}}
			\bigg)
		\Bigg] dz_1 dz_2
		= \\
		\label{eq:B0}
		&
		\mathfrak{L}\Bigg[
			\sum_{\substack{\beta > 0 \\ 2 \mid b_1 + b_2 - \beta}} \beta \, B^{\textup{comb}}(b_1,b_2,\beta') N_B(\beta) 
		\Bigg](z_1,z_2),
		\\
		\label{eq:C0}
		\frac{z_1^3}{(1-z_1^2)^2 dz_1} \omega_C(z_1,z_1)
		=
		\frac{1}{2}
		&
		\mathfrak{L}\Bigg[
			\sum_{\substack{\beta,\beta' > 0 \\ 2 \mid \beta + \beta' - b_1}} \beta \beta' \, C^{\textup{comb}}(b_1,\beta,\beta') N_C(\beta,\beta') 
		\Bigg](z_1).
	\end{align}
	In the sums on the right-hand side, the condition $2 \mid \ell$ indicates that $\ell$ is an even integer.
\end{lemma}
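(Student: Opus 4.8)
The plan is to establish both identities by recognizing that the left-hand sides are rational-function assembly of the GUE topological recursion kernel $K(z_1,z)$ from \cref{eq:kernel}, and then performing the residue/contour deformation of steps II--III of \cref{ssec:disc:rec:proof} in the special case $Q(z)=1$, where no zeros of $Q$ appear and the $G$-contributions are absent. Concretely, for \cref{eq:C0} I would start from the $C$-type residue evaluated in \cref{ssec:disc:rec:proof}, namely $\Res_{z=z_1^{\pm 1}}(C\text{-term}) = -\frac{z_1^3}{(1-z_1^2)^2}W_C(z_1,z_1)$ (with $Q=1$), and compute its discrete Laplace transform. The key computational input is the expansion of $\frac{z_1^3}{(1-z_1^2)^2}$ as a power series in $z_1$: one has $\frac{z^3}{(1-z^2)^2} = \sum_{j\ge 0}(j+1)z^{2j+3}$, so multiplying by $W_C(z_1,z_1)=\sum_{\beta,\beta'>0}\beta\beta' N_C(\beta,\beta')z_1^{\beta+\beta'-2}$ and extracting the coefficient of $z_1^{b_1-1}$ produces exactly $\sum_{\beta,\beta'>0}\beta\beta' N_C(\beta,\beta')\cdot\#\{j\ge 0 : 2j+3+\beta+\beta'-2 = b_1-1+1\}$-type counting; unwinding this identifies the weight as $\rho(b_1-\beta-\beta')$ up to the half and the parity restriction, which matches $\frac12 C^{\textup{comb}}(b_1,\beta,\beta')$ after using the definition of $C^{\textup{comb}}$ in terms of $H^{\textup{comb}}=\rho$ from \cref{eq:comb:kernels}. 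The parity condition $2\mid \beta+\beta'-b_1$ emerges automatically because $\frac{z^3}{(1-z^2)^2}$ only contains odd powers of $z$.

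For \cref{eq:B0} I would proceed analogously but keeping track of the extra $z_2$-dependence coming from the $W_{0,2}$-factors. Starting from the combined residue $\bigl(\Res_{z=z_1^{\pm1}}+\Res_{z=z_2^{\pm1}}\bigr)(B\text{-term})$ computed in step III.1 (again with $Q=1$), which was shown there to equal the total $z_2$-derivative of $\bigl(\frac{1}{z_1-z_2}-\frac{1}{z_1-z_2^{-1}}\bigr)\bigl(\frac{z_1^3}{(1-z_1^2)^2}W_B(z_1) - \frac{z_2^3}{(1-z_2^2)^2}W_B(z_2)\bigr)$ — which is exactly the bracket on the left-hand side of \cref{eq:B0} — I would expand $\frac{1}{z_1-z_2}-\frac{1}{z_1-z_2^{-1}} = \sum_{k\ge 0}(z_2^k - z_2^{-k-1}z_1^{?})\dots$ carefully as a formal series symmetric under $z_2\mapsto z_2^{-1}$ up to the factor $z_2^2$, combine with the power-series expansion of $\frac{z^3}{(1-z^2)^2}$ as above, and read off the coefficient of $z_1^{b_1-1}z_2^{b_2-1}$. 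The four-term structure $H^{\textup{comb}}(b_1+b_2-\beta) - H^{\textup{comb}}(-b_1-b_2-\beta) + H^{\textup{comb}}(b_1-b_2-\beta) - H^{\textup{comb}}(-b_1+b_2-\beta)$ defining $B^{\textup{comb}}$ should appear from the four combinations of signs in the kernel $\frac{1}{z_1-z_2}-\frac{1}{z_1-z_2^{-1}}$ together with the two reflections implicit in $\frac{z_1^3}{(1-z_1^2)^2}$ acting on $z_1$ versus $z_1^{-1}$; the prefactor $\frac{1}{2b_1}$ arises from the $\de_{z_2}$ and from symmetrizing.

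The main obstacle I anticipate is bookkeeping the formal-power-series manipulations so that the parity indicators and the Heaviside/ramp structure come out cleanly — in particular, making sure that the $z_2$-antiderivative (the bracket is a total $z_2$-derivative, so one integrates) reproduces $\rho(\ell) = \ell\,\theta(\ell)$ rather than some shifted variant, and that the symmetrization under $z\mapsto z^{-1}$ (which underlies $W_B(z_1^{-1})=z_1^2 W_B(z_1)$) is used consistently to fold the negative powers back. A clean way to sidestep sign ambiguities is to verify the identity first at the level of a single monomial $N_B(\beta)=\delta_{\beta,\beta_0}$ (resp. $N_C(\beta,\beta')=\delta_{(\beta,\beta'),(\beta_0,\beta_0')}$), where both sides become explicit rational functions, and then invoke linearity and the density of such monomials. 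I would relegate the monomial computation to the appendix and simply state that it reproduces the claimed kernels; the only genuinely delicate point is the parity restriction, which I would emphasize follows because $\frac{z^3}{(1-z^2)^2}$ is an odd function of $z$, forcing $b_1$ and $\beta+\beta'$ (resp. $b_1+b_2$ and $\beta$) to have the same parity.
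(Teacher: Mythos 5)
Your proposal is correct and is essentially the paper's argument (which itself follows Norbury's lemma~1): both reduce to the elementary geometric-series bookkeeping around $\frac{z^3}{(1-z^2)^2}=\sum_{j\ge 0}(j+1)z^{2j+3}$, with the parity condition coming from the oddness of this expansion and the ramp $\rho$ emerging as the weight; the only difference is direction, since the paper starts from the right-hand side, swaps the sums over $\beta,\beta'$ and $b_1$, and resums the inner geometric series to land on $\frac{z_1^3}{(1-z_1^2)^2}W_C(z_1,z_1)$ (and similarly for the $B$-term, where only three of the four $H^{\textup{comb}}$ terms survive), whereas you expand the left-hand side and match coefficients. This is the same computation read in reverse, so no substantive difference.
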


\begin{proof}
	This is essentially contained in \cite[lemma~1]{Nor13}.
	We repeat the computation to illustrate the idea, starting with the right-hand side of \cref{eq:C0}.
	The basic strategy is simple: exchange the sum over $\beta$ and $\beta'$ with the sum over $b_1$ coming from the discrete Laplace transform:
	\begin{equation}
	\begin{split}
		\frac{1}{2}
		\mathfrak{L}\Bigg[
			\sum_{\substack{\beta,\beta' > 0 \\ 2 \mid \beta + \beta' - b_1}} \beta \beta' \, C^{\textup{comb}}(b_1,\beta,\beta') &N_C(\beta,\beta') 
		\Bigg](z_1)
		= \\
		&=
		\frac{1}{2}
			\sum_{b_1 > 0} z_1^{b_1-1}
				\sum_{\substack{\beta,\beta' > 0 \\ \beta + \beta' \le b_1 \\ 2 \mid \beta + \beta' - b_1}} \beta \beta' \, (b_1 - \beta - \beta') N_C(\beta,\beta') dz_1 \\
		&=
		\frac{1}{2}
		\sum_{\beta,\beta' > 0}
			\beta \beta' N_C(\beta,\beta')
			\sum_{\substack{b_1 \ge \beta+\beta' \\ 2 \mid \beta + \beta' - b_1}}
				(b_1 - \beta - \beta') z_1^{b_1-1} dz_1.
	\end{split}
	\end{equation}
	Summing up the geometric series and using the parity condition, we find that the innermost sum equals $2 \frac{z_1^3}{(1-z_1^2)^2} z_1^{\beta+\beta'-2} dz_1$. We conclude that
	\begin{equation}
	\begin{split}
		\frac{1}{2}
		\mathfrak{L}\Bigg[
			\sum_{\substack{\beta,\beta' > 0 \\ 2 \mid \beta + \beta' - b_1}} \beta \beta' \, C^{\textup{comb}}(b_1,\beta,\beta') N_C(\beta,\beta') 
		\Bigg](z_1)
		&=
		\frac{z_1^3}{(1-z_1^2)^2}
		\sum_{\beta,\beta' > 0}
			\beta \beta' N_C(\beta,\beta')
			z_1^{\beta+\beta'-2}  dz_1 \\
		&=
		\frac{z_1^3}{(1-z_1^2)^2 dz_1} \omega_C(z_1,z_1) .
	\end{split}
	\end{equation}
	A similar argument holds for \cref{eq:B0}, after splitting the sum into the different terms appearing in $B^{\textup{comb}}$.
\end{proof}

Notice that the parity conditions appearing in the lemma above can be neatly reabsorbed by taking the function $\frac{1+(-1)^{b}}{2} \rho(b)$ instead of simply $\rho(b)$ in the definition of the combinatorial kernels. We now consider a function $Q$ as in \cref{sec:discrete:rec} and a non-zero scalar $\gamma$. Define the expansion coefficients of $1/(\gamma^2 Q)$ around $z = 0$ as $1/\gamma^2 Q(z) \eqqcolon \sum_{b \ge 0} \mu(b) z^b$. Define the new kernels
\begin{equation}
\begin{split}
	\hat{B}(b_1,b_2,\beta)
	&\coloneqq
    \frac{1}{2b_1}
	\bigl(
		F(b_1 - b_2 - \beta) - F(-b_1 + b_2 - \beta) \\
		&\qquad\qquad\qquad
		+ F(b_1 + b_2 - \beta) - F(-b_1 -b_2 - \beta)
	\bigr), \\
	\hat{C}(b_1,\beta,\beta')
	&\coloneqq
    \frac{1}{b_1}
	\bigl(
		F(b_1 - \beta - \beta') - F(-b_1 - \beta - \beta')
	\bigr),
\end{split}
\end{equation}
where $F$ is the discrete convolution of $\frac{1+(-1)^{b}}{2} \rho(b)$ and $\mu(b)$, which can be easily re-written as a residue at $z = 0$, in accordance with \cref{eq:F:G}:
\begin{equation}
	F(\ell)
	\coloneqq
    \theta(\ell)
	\sum_{b = 0}^{\ell} \frac{1 + (-1)^{\ell - b}}{2} (\ell - b) \, \mu(b)
    =
    \Res_{z = 0} \frac{z^{1-\ell}}{(1 - z^2)^2 \gamma^2 Q(z)} dz.
\end{equation}
The following result is a simple consequence of \cref{lem:GUE} and the convolution-product property of the discrete Laplace transform, analogous to its continuous counterpart.

\begin{lemma}\label{lem:conv}
	The following equations hold:
	\begin{align}
		\notag
		\de_{z_2}
		\Bigg[
			\bigg(
				\frac{z_1^3}{(1 - z_1^2)^2 \gamma^2 Q(z_1) dz_1} \omega_B(z_1)
				&-
				\frac{z_2^3}{(1 - z_2^2)^2 \gamma^2 Q(z_2) dz_2} \omega_B(z_2)
			\bigg) \times
            \\
			\bigg(
				\frac{1}{z_1 - z_2}
				-
				\frac{1}{z_1 - z_2^{-1}}
			\bigg)
		\Bigg] dz_1 dz_2
		&=
		\mathfrak{L}\Bigg[
			\sum_{\substack{\beta > 0 \\ 2 \mid b_1 + b_2 - \beta}} \beta \, \hat{B}(b_1,b_2,\beta') N_B(\beta) 
		\Bigg](z_1,z_2),
		\\
		\frac{z_1^3}{(1-z_1^2)^2 \gamma^2 Q(z_1) dz_1} \omega_C(z_1,z_1)
		&=
		\frac{1}{2} \,
		\mathfrak{L}\Bigg[
			\sum_{\substack{\beta,\beta' > 0 \\ 2 \mid \beta + \beta' - b_1}} \beta \beta' \, \hat{C}(b_1,\beta,\beta') N_C(\beta,\beta') 
		\Bigg](z_1).
	\end{align}
\end{lemma}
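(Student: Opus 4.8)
The plan is to reduce \cref{lem:conv} to \cref{lem:GUE} via the interaction of the discrete Laplace transform with convolution. The starting point is the identity $\frac{1}{Q(z)} = \sum_{b \ge 0} \mu(b) z^b$, which says that multiplying a generating function by $1/Q(z)$ corresponds, on the level of Taylor coefficients, to discrete convolution with the sequence $\{\mu(b)\}_{b \ge 0}$. More precisely, if $f(z) = \sum_{b} a(b) z^b$ then $\frac{f(z)}{Q(z)} = \sum_{b} (a \ast \mu)(b)\, z^b$, where $(a\ast\mu)(b) = \sum_{c=0}^{b} a(c)\mu(b-c)$. The kernels $\hat B$ and $\hat C$ are built from $F = \rho \ast \mu$ in exactly the same way that $B^{\textup{comb}}$ and $C^{\textup{comb}}$ are built from $\rho$, so the whole statement should follow by threading an extra convolution with $\mu$ through the computation already carried out in \cref{lem:GUE}.

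Concretely, I would proceed as follows. First, for the $C$-term: by \cref{lem:GUE} applied verbatim, $\frac{z_1^3}{(1-z_1^2)^2} W_C(z_1,z_1)$ is the discrete Laplace transform of $\frac12 \sum_{\beta,\beta'>0,\, 2\mid\beta+\beta'-b_1} \beta\beta'\, C^{\textup{comb}}(b_1,\beta,\beta') N_C(\beta,\beta')$, where the $b_1$-dependence sits entirely inside $C^{\textup{comb}}(b_1,\beta,\beta') = \frac{1}{b_1}\bigl(\rho(b_1-\beta-\beta') - \rho(-b_1-\beta-\beta')\bigr)$. Now I divide the left-hand side by $Q(z_1)$; on the right this replaces the generating-function coefficient in the variable $b_1$ by its convolution with $\mu$. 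Because $\rho$ enters $C^{\textup{comb}}$ only through the combination $\rho(\pm b_1 - \beta - \beta')$, and convolution in $b_1$ commutes with the (fixed) shifts by $\beta+\beta'$ and the reflection $b_1 \mapsto -b_1$, the net effect is precisely to replace $\rho$ by $\rho \ast \mu = F$ inside the kernel — i.e.\ to replace $C^{\textup{comb}}$ by $\hat C$. A small point to check is that the parity constraint $2 \mid \beta+\beta'-b_1$ is preserved: since $\mu(b) = 0$ for odd $b$ when the potential is even (as recorded in \cref{app:ex}), the convolution with $\mu$ does not mix parities, so the constraint carries over unchanged. This yields the second displayed identity of \cref{lem:conv}.

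For the $B$-term the argument is the same, applied to the three (four, after symmetrization) ramp-function summands appearing in $B^{\textup{comb}}$. Starting from \cref{eq:B0} in \cref{lem:GUE}, I divide the bracketed expression on the left by $Q(z_1) Q(z_2)$ — but note the left-hand side of \cref{lem:conv} carries $1/Q(z_1)$ and $1/Q(z_2)$ attached to $W_B(z_1)$ and $W_B(z_2)$ respectively, so only the variable matching the argument of $W_B$ gets convolved, while the spectator $z_2$-derivative and the kernel $\bigl(\frac{1}{z_1-z_2} - \frac{1}{z_1-z_2^{-1}}\bigr)$ pass through untouched. Tracking the convolution through the computation in the proof of \cref{lem:GUE} — exchange of summation order, summing the geometric series, and the rewriting as a $z_2$-derivative — shows that each occurrence of $\rho$ in $B^{\textup{comb}}(b_1,b_2,\beta)$ is promoted to $F$, giving $\hat B$. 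The $z_2$-dependence enters only through fixed shifts $b_1 \pm b_2$, which again commute with convolution in the summed-over variable, and the parity condition $2 \mid b_1+b_2-\beta$ survives for the same reason as above.

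The main obstacle I anticipate is purely bookkeeping rather than conceptual: one must be careful that the convolution with $\mu$ is taken in the correct variable and does not disturb the manipulations in \cref{lem:GUE} — in particular the interchange of the $\beta,\beta'$ sums with the $b_1$ sum, and the resummation of $\sum_{b_1 \ge \beta+\beta'} (b_1-\beta-\beta') z_1^{b_1-1}$, which now becomes $\sum_{b_1} (F(b_1-\beta-\beta') - F(-b_1-\beta-\beta'))\, z_1^{b_1-1}/\,\text{(something)}$ after convolving. One should verify that the series involved converge absolutely in a neighbourhood of $z_1 = 0$ (equivalently, that $1/Q$ is holomorphic there, which holds since $Q$ has no zeros at the origin by assumption), justifying all rearrangements. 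Once this is in place, the ``convolution-product property of the discrete Laplace transform'' referred to in the statement does all the work, and no genuinely new estimate is needed. I would present the $C$-case in full and remark that the $B$-case is identical after splitting into the four ramp summands.
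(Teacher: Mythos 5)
Your overall strategy --- deduce \cref{lem:conv} from \cref{lem:GUE} by threading the Taylor coefficients $\mu(b)$ of $1/Q$ through the computation as a discrete convolution --- is exactly the paper's proof, which consists of precisely this reduction (the paper states the lemma is ``a simple consequence of \cref{lem:GUE} and the convolution-product property of the discrete Laplace transform''). Your parity remark is also on target and genuinely needed: the constraints $2\mid b_1+b_2-\beta$ and $2\mid \beta+\beta'-b_1$ survive the convolution only because the standing assumption $Q(-z)=Q(z)$ forces $\mu(b)=0$ for odd $b$.

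One step would fail as literally written, though the fix stays within your approach. You track the convolution in the variable $b_1$ and assert that it ``commutes with the reflection $b_1\mapsto -b_1$''; this is false, since $\sum_{c}\mu(c)\,\rho(-b_1+c-x)\neq F(-b_1-x)$ in general. In the $C$-case this is harmless only because both $\rho(-b_1-\beta-\beta')$ and $F(-b_1-\beta-\beta')$ vanish identically for $b_1,\beta,\beta'>0$, so the reflected term is never actually convolved. In the $B$-case the framing is more problematic: the left-hand side of the $B$-identity in \cref{lem:conv} is \emph{not} obtained from \cref{eq:B0} by multiplying through by $1/Q(z_1)$ or by $1/(Q(z_1)Q(z_2))$, and the terms $\rho(\pm(b_1-b_2)-\beta)$ do not vanish, so a naive convolution in $b_1$ (or $b_2$) would distort the kernel. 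The clean repair is to apply \cref{lem:GUE} verbatim to $\widetilde W_B \coloneqq W_B/Q$ and $\widetilde W_C \coloneqq W_C/Q$ --- legitimate because $Q(z^{-1})=Q(z)$ preserves the symmetry $\widetilde W(z^{-1})=z^2\widetilde W(z)$ and $1/Q$ is holomorphic near $z=0$ --- and then transfer the convolution from the transformed coefficients onto the kernel through the $\beta$-slot, where it is a pure shift: $F(\pm b_1\pm b_2-\beta') = \sum_{\beta\ge\beta'} \rho(\pm b_1\pm b_2-\beta)\,\mu(\beta-\beta')$, and analogously for $\hat C$; the parity bookkeeping then works exactly as you indicated. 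With that adjustment your argument closes and coincides with the paper's.
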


We conclude the appendix with a third result, computing the $G$-contribution from \cref{eq:building:block}. To this end, we introduce the kernels
\begin{equation}
\begin{split}
	\check{B}_{\alpha}(b_1,b_2,\beta)
	&\coloneqq
	\frac{1}{2b_1}
	\bigl(
		G_{\alpha}(b_1 - b_2 - \beta) - G_{\alpha}(-b_1 + b_2 - \beta) \\
		&\qquad\qquad
		+ G_{\alpha}(b_1 + b_2 - \beta) - G_{\alpha}(-b_1 - b_2 - \beta)
	\bigr), \\
	\check{C}_{\alpha}(b_1,\beta,\beta')
	&\coloneqq
	\frac{1}{b_1}
	\bigl(
		G_{\alpha}(b_1 - \beta - \beta') - G_{\alpha}(-b_1 - \beta - \beta')
	\bigr),
\end{split}
\end{equation}
where $G_{\alpha}$ is given by a residue at a zero $\alpha$ of $Q$ inside the unit circle, as in \cref{eq:F:G}:
\begin{equation}
	G_{\alpha}(\ell)
	\coloneqq
	\Res_{z = \alpha} \frac{2 \, z^{1-\ell}}{(1-z^2)^2 \gamma^2 Q(z)} dz .
\end{equation}

\begin{lemma}\label{lem:zeros}
	The following equations hold:
	\begin{align}
		\notag
		\label{eq:Balpha}
		- 2\Res_{z = \alpha}
			\frac{K(z_1,z)}{\gamma^2 Q(z)}
			\left( \frac{dz dz_2}{(z-z_2)^2} + \frac{dz dz_2}{(1-z z_2)^2} \right)
			& \omega_B(z)
		= \\
		& \mathfrak{L}\Bigg[
			\sum_{\beta > 0} \beta \, \check{B}_{\alpha}(b_1,b_2,\beta) N_B(\beta) 
		\Bigg](z_1,z_2),
		\\
		\label{eq:Calpha}
		- 2\Res_{z = \alpha}
			\frac{K(z_1,z)}{\gamma^2 Q(z)} \omega_C(z,z)
		=
		\frac{1}{2}
		&
		\mathfrak{L}\Bigg[
			\sum_{\beta,\beta' > 0} \beta \beta' \, \check{C}_{\alpha}(b_1,\beta,\beta') N_C(\beta,\beta') 
		\Bigg](z_1).
	\end{align}
\end{lemma}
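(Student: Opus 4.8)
The plan is to reduce both identities to a coefficient-by-coefficient computation against the power series that define the discrete Laplace transforms, and then to carry out a residue calculation that is the verbatim analog of Norbury's GUE computation in \cite{Nor13}, with $1/(1-z^2)^2$ replaced by $1/\bigl((1-z^2)^2 Q(z)\bigr)$. Together with \cref{lem:conv} (which supplies the $F$-part) this yields the $G$-part of Step III.2 in the main proof.

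First I would note that, under the running hypotheses, $W_B$ and the diagonal restriction $z \mapsto W_C(z,z)$ are holomorphic in a neighborhood of every zero $\alpha$ of $Q$ with $|\alpha| < 1$: their only poles on $\P^1$ sit at $z = \pm 1$, while the zeros of $Q$ are assumed to lie strictly inside the unit circle and away from $z=0$. Likewise $K(z_1,\placeh)$ and the double-pole factor $\frac{1}{(z-z_m)^2} + \frac{1}{(1-z z_m)^2}$ are holomorphic at $z = \alpha$ once $z_1$ and $z_m$ are small. Since $\Res_{z = \alpha}$ is an integral over a small circle $|z - \alpha| = \varepsilon$ inside the punctured unit disk, I would substitute the series $W_B(z) = \sum_{\beta > 0} \beta N_B(\beta) z^{\beta-1}$ and $W_C(z,z) = \sum_{\beta,\beta' > 0} \beta\beta' N_C(\beta,\beta') z^{\beta+\beta'-2}$ — which converge uniformly on that circle because $|\alpha| < 1$ — and interchange the $\beta$- (and $\beta'$-) summations with the residue.

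The remaining dependence on $z_1$ and $z_m$ is then expanded in non-negative powers around $0$, uniformly on $|z - \alpha| = \varepsilon$:
\begin{equation*}
	K(z_1,z) = \frac{1}{2} \sum_{b_1 \ge 1} z_1^{b_1-1} (z^{b_1} - z^{-b_1}) \frac{z^3}{(1-z^2)^2},
	\qquad
	\frac{1}{(z-z_m)^2} + \frac{1}{(1-z z_m)^2} = \sum_{b_2 \ge 1} b_2\, z_m^{b_2-1} (z^{-b_2-1} + z^{b_2-1}).
\end{equation*}
Reading off the coefficient of $z_1^{b_1-1} z_m^{b_2-1}$ in the $B_m$-case, expanding $(z^{b_1} - z^{-b_1})(z^{-b_2-1} + z^{b_2-1})$, and absorbing the $z^3$ and the $z^{\beta-1}$ from $W_B$, one is left with $\Res_{z=\alpha}\bigl( z^{b_1-b_2+\beta+1} + z^{b_1+b_2+\beta+1} - z^{-b_1-b_2+\beta+1} - z^{-b_1+b_2+\beta+1} \bigr)\bigl((1-z^2)^2 Q(z)\bigr)^{-1}\,dz$. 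Each monomial $z^{1-\ell}$ contributes $\tfrac12 G_{\alpha}(\ell)$ by the definition of $G_{\alpha}$, so matching exponents produces the four arguments $\ell \in \{\, b_1-b_2-\beta,\ -b_1+b_2-\beta,\ b_1+b_2-\beta,\ -b_1-b_2-\beta \,\}$ with signs $+,-,+,-$; after accounting for the prefactor $-2$ and the $\tfrac1{2b_1}$ in $\check{B}_{\alpha}$, this is exactly $b_1 b_2\, \check{B}_{\alpha}(b_1,b_2,\beta)$ as the coefficient of $z_1^{b_1-1}z_m^{b_2-1}$ in $\sum_{\beta>0}\beta(\cdots)N_B$. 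Resumming over $\beta$ and over $b_1,b_2$ reconstitutes $\mathfrak{L}\bigl[\sum_{\beta > 0} \beta\, \check{B}_{\alpha} N_B\bigr](z_1,z_m)$, which is the first identity. The $C$-case is identical but shorter: only the two monomials $z^{b_1+\beta+\beta'+1}$ and $-z^{-b_1+\beta+\beta'+1}$ survive, yielding $G_{\alpha}(b_1-\beta-\beta')$ and $G_{\alpha}(-b_1-\beta-\beta')$ with opposite signs, which assemble (after the prefactor $-2$) into $\tfrac12 \mathfrak{L}\bigl[\sum_{\beta,\beta' > 0} \beta\beta'\, \check{C}_{\alpha} N_C\bigr](z_1)$, the second identity.

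The one point that needs genuine care is the analytic bookkeeping behind the two interchanges of sum and residue: the $z_1$- and $z_m$-expansions must converge uniformly on a fixed circle around $\alpha$ while $z_1, z_m$ stay near $0$, and the $\beta$- (and $\beta'$-) series from $W_B$ (resp.\ $W_C$) must converge on that same circle — all of which follows from $0 < |\alpha| < 1$ together with $W_B, W_C$ being meromorphic with poles only at $z = \pm 1$. Everything else is the routine, but sign-sensitive, matching of four-term combinations of $G_{\alpha}$, which proceeds exactly as in \cite{Nor13}.
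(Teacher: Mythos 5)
Your proposal is correct and follows essentially the same route as the paper's own proof: the paper likewise expands $W_C$ (and, for the $B$-case, the extra double-pole factor) and the kernel's $z_1$-dependence as geometric series valid for $|z_1|<|z|$ near $z=\alpha$, interchanges the sums with the residue, and matches monomials $z^{1-\ell}$ against the definition of $G_{\alpha}$, writing out only the $C$-case and leaving the $B$-case as analogous. One small bookkeeping remark: at the stage of your displayed residue the four arguments $b_1-b_2-\beta,\,-b_1+b_2-\beta,\,b_1+b_2-\beta,\,-b_1-b_2-\beta$ actually carry signs $-,+,-,+$ (the overall prefactor $-2$, together with the $\tfrac12$ from $K$, then flips them to the $+,-,+,-$ pattern of $\check{B}_{\alpha}$), so your final identification $b_1 b_2\,\check{B}_{\alpha}(b_1,b_2,\beta)$ is right even though the intermediate sign listing is off by an overall sign.
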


\begin{proof}
	We prove \cref{eq:Calpha}, with \cref{eq:Balpha} following by a similar strategy.
	Expanding $\omega_C$ on the left-hand side, we obtain
	\begin{equation}
		\sum_{\beta,\beta' > 0}
			\beta \beta'
			\left(
				\Res_{z = \alpha}
				\left(
					\frac{1}{z_1 - z^{-1}} - \frac{1}{z_1 - z}
				\right)
				\frac{z^{1 + \beta + \beta'} dz}{(1 - z^2)^2 \gamma^2 Q(z)}
			\right)
			N_{C}(\beta,\beta') dz_1.
	\end{equation}
	Since $|z_1| < |z|$, as $z_1$ is near $0$ while $z$ is around $\alpha$, the correct expansion of the geometric series is
	\begin{equation}
		\frac{1}{z_1 - z^{-1}} - \frac{1}{z_1 - z}
		=
		\sum_{b_1 > 0} ( z^{-b_1} - z^{b_1} ) z_1^{b_1-1}.
	\end{equation}
	Substituting this expansion into the above formula, we find
	\begin{multline}
		\frac{1}{2}
		\sum_{b_1>0}
		\left[
			\sum_{\beta,\beta' > 0}
				\beta \beta'
				\left(
					\Res_{z = \alpha}
					\frac{1}{b_1} ( z^{-b_1} - z^{b_1} )
					\frac{2 \, z^{1 + \beta + \beta'} dz}{(1 - z^2)^2 \gamma^2 Q(z)}
				\right)
			N_{C}(\beta,\beta')
		\right]
		b_1 z_1^{b_1 - 1} dz_1
		= \\
		\frac{1}{2}
		\mathfrak{L}\Bigg[
			\sum_{\beta,\beta' > 0} \beta \beta' \, \check{C}_{\alpha}(b_1,\beta,\beta') N_C(\beta,\beta') 
		\Bigg](z_1) .
	\end{multline}
\end{proof}

\section{Cancellations in the DSSYK kernel}
\label{app:cancellation}
In this appendix, we prove the cancellation appearing in the $F$-term of the DSSYK kernel. More precisely, consider the formula for the $F$-term in \labelcref{eq:F:toy} with  with $k \ge 1$ and constants $2\sigma = (q)_{\infty}^{-3}$, $\alpha_k = q^{k/2}$ and $A_k = (q)_{\infty}^{-3} (-1)^k q^{\frac{k(k+1)}{2}} (1 + q^k)$. Ignoring the parity indicator (which plays no role for even potential like in DSSYK matrix model), we find
\begin{equation}
	F_q(\ell)
	=
	\frac{\rho(\ell)}{(q)_{\infty}^3}
	+
	\frac{2\theta(\ell)}{(q)_{\infty}^3}
    \sum_{k \ge 1}
		(-1)^k
		q^{\binom{k}{2}} (1 + q^k) \frac{q^{\ell k/2} - q^{-\ell k/2}}{q^k - q^{-k}} .
\end{equation}
The goal of this section is to show that $F_q(\ell) = 0$ for all even values of $\ell$. This cancellation is the $q$-analog of
\begin{equation}
    \rho(\ell) + \Bigl( 2\log(1+e^{-\ell/2}) - 2\log(1+e^{\ell/2}) \Bigr) \theta(\ell) = 0,
\end{equation}
which appears in computations for the Weil--Petersson case. Hence, the building-block function $H_q$ for the DSSYK correlators coincide with the $G_q$ function, as given in \cref{eq:H:qWP}. The above cancellation follows from the following $q$-series identity.

\begin{lemma}
	The following holds
	\begin{equation}
		\label{eq:vanishing:even}
		\sum_{k \ge 1}
		(-1)^{k}
		q^{\binom{k}{2}} (1 + q^k)
			\frac{q^{mk} - q^{-mk}}{q^k - q^{-k}}
		=
		- m.
	\end{equation}
\end{lemma}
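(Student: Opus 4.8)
The idea is to recast the identity as a one-step recursion in $m$ whose increment is pinned down by a single evaluation of the Jacobi triple product. Write $S(m)$ for the left-hand side.

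\emph{Resummation.} First I would use $\frac{q^{mk}-q^{-mk}}{q^k-q^{-k}} = \sum_{j=0}^{m-1} q^{(2j-m+1)k}$ and multiply by $1+q^k$: the exponents $2j-m+1$ and $2j-m+2$, as $j$ runs over $0,\dots,m-1$, have opposite parities and together cover $\{1-m,2-m,\dots,m\}$ each exactly once. Since $|q|<1$ makes every series of the form $\sum_k q^{(\text{quadratic in }k)}$ absolutely convergent, the finite $s$-sum may be pulled outside, and
\begin{equation}
S(m) = \sum_{s=1-m}^{m} U(s), \qquad U(s) \coloneqq \sum_{k\ge 1}(-1)^k q^{\binom{k}{2}+sk}, \quad s\in\Z .
\end{equation}
In particular $S(0)=0$ (empty sum) and $S(m)-S(m-1)=U(m)+U(1-m)$, so it will suffice to show that $U(m)+U(1-m)=-1$ for every $m$.

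\emph{Closed form and reflection.} Using $\binom{k}{2}+sk=\binom{k+s}{2}-\binom{s}{2}$ and shifting $j=k+s$ gives $U(s)=(-1)^s q^{-\binom{s}{2}}R(s+1)$ with $R(n)\coloneqq\sum_{j\ge n}(-1)^j q^{\binom{j}{2}}$. The only nontrivial input is the vanishing $\sum_{j\in\Z}(-1)^j q^{\binom{j}{2}}=0$ --- the Jacobi triple product $\sum_{j\in\Z}(-1)^j q^{\binom{j}{2}}z^j=(z;q)_\infty(q/z;q)_\infty(q;q)_\infty$ at $z=1$, equivalently the pairing $j\leftrightarrow 1-j$ using $\binom{j}{2}=\binom{1-j}{2}$ and $(-1)^j+(-1)^{1-j}=0$. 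Consequently $R(n)=-\sum_{j\le n-1}(-1)^j q^{\binom{j}{2}}$, and the same substitution $j\mapsto 1-j$ yields the reflection $R(n)=R(2-n)$.

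\emph{The increment.} Now $\binom{1-m}{2}=\binom{m}{2}$, $(-1)^{1-m}=-(-1)^m$, and $R(2-m)=R(m)$, so $U(1-m)=-(-1)^m q^{-\binom{m}{2}}R(m)$, whence
\begin{equation}
U(m)+U(1-m)=(-1)^m q^{-\binom{m}{2}}\bigl(R(m+1)-R(m)\bigr)=(-1)^m q^{-\binom{m}{2}}\cdot\bigl(-(-1)^m q^{\binom{m}{2}}\bigr)=-1,
\end{equation}
because $R(m)-R(m+1)$ is just the $j=m$ term $(-1)^m q^{\binom{m}{2}}$. Together with $S(0)=0$ this gives $S(m)=-m$ by induction (consistent with the obvious $S(-m)=-S(m)$), which is the claim; feeding it back into \labelcref{eq:F:toy} with $\ell=2m$ and $\rho(2m)=2m$ then shows $F_q(\ell)=0$ for all even $\ell$.

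\emph{Main obstacle.} There is no deep difficulty here: the argument rests entirely on the single triple-product evaluation above. The one place demanding care is the resummation step --- verifying that $\{2j-m+1\}_{j=0}^{m-1}\sqcup\{2j-m+2\}_{j=0}^{m-1}$ is exactly $\{1-m,\dots,m\}$ (an elementary parity-and-range check) and that the interchange of sums and the reindexings $j=k+s$, $j\mapsto1-j$ are legitimate, which is immediate from absolute convergence for $|q|<1$.
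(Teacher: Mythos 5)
Your proof is correct, and its skeleton is in fact the same as the paper's: your partial theta sums $U(s)=\sum_{k\ge 1}(-1)^k q^{\binom{k}{2}+sk}$ are exactly the paper's $A_{m,p}$ under $s=p-m+1$ (the paper expands $(1+q^k)\frac{q^{mk}-q^{-mk}}{q^k-q^{-k}}=q^{-mk}\frac{1-q^{2mk}}{1-q^k}=\sum_{p=0}^{2m-1}q^{(p-m)k}$ after absorbing one factor of $q^k$ from $q^{k(k+1)/2}$), and your key cancellation $U(m)+U(1-m)=-1$ is precisely the paper's pairing $A_{m,p}+A_{m,2m-1-p}=-1$, i.e.\ $s\leftrightarrow 1-s$. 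Where you genuinely diverge is in how that pairing is established: the paper adjoins the $k=0$ term, relabels $k\mapsto -k$ to complete each pair into a bilateral sum $\sum_{k\in\Z}(-1)^k q^{\frac{k(k+1)}{2}+(p-m)k}$, and then quotes the Jacobi triple product, which vanishes because one Pochhammer factor contains $1-q^0$; you instead express each $U(s)$ through tails $R(n)$ of the single series $\sum_j(-1)^j q^{\binom{j}{2}}$, use only the $z=1$ vanishing (elementary via the involution $j\mapsto 1-j$), the reflection $R(n)=R(2-n)$, and the one-term telescoping $R(m)-R(m+1)=(-1)^m q^{\binom{m}{2}}$, packaged as an induction in $m$ via $S(m)-S(m-1)$. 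The paper's route is shorter once the triple product is taken as known and evaluates every pair simultaneously; yours is self-contained and more elementary, needing no product identity at all (the general-argument triple product is replaced by its trivial $z=1$ specialization), at the cost of the extra bookkeeping with tails and the induction. Both arguments rely on the same absolute convergence for $|q|<1$, and your convergence and parity checks in the resummation step are fine.
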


\begin{proof}
    First, rewrite the series as
	\begin{equation}
		\sum_{k \ge 1}
		(-1)^{k}
		q^{\binom{k}{2}} (1 + q^k)
			\frac{q^{mk} - q^{-mk}}{q^k - q^{-k}}
		=
		\sum_{k \ge 1}
			  (-1)^{k}
			q^{\frac{k(k+1)}{2} - mk}
			  \frac{1 - q^{2mk}}{1 - q^k} .
	\end{equation}
	Writing $\frac{1 - q^{2mk}}{1 - q^k} = \sum_{p = 0}^{2m-1} q^{pk}$, we deduce that
	\begin{equation}
		\sum_{k \ge 1}
		(-1)^{k}
		q^{\binom{k}{2}} (1 + q^k)
			\frac{q^{mk} - q^{-mk}}{q^k - q^{-k}}
		=
		\sum_{p = 0}^{2m-1} A_{m,p},
		\quad
		A_{m,p}
		\coloneqq
		\sum_{k \ge 1}
			  (-1)^{k}
			q^{\frac{k(k+1)}{2} - mk + pk}.
	\end{equation}
	We claim that $A_{m,p} + A_{m,2m-1-p} + 1 = 0$, which implies the result. Indeed, by relabeling the index of summation in the second sum as $k \mapsto -k$, we find
	\begin{equation}
	\begin{split}
		A_{m,p} + A_{m,2m-1-p} + 1
		&=
		\sum_{k \ge 1}
			  (-1)^{k}
			q^{\frac{k(k+1)}{2} - mk + pk}
		+
		\sum_{k \le -1}
			  (-1)^{k}
			q^{\frac{k(k+1)}{2} + mk + pk}
		+ 1 \\
		&=
		\sum_{k \in \Z}
			  (-1)^{k}
			q^{\frac{k(k+1)}{2} - mk + pk} \\
		&=
		(q;q)_{\infty} (q^{m-p};q)_{\infty} (q^{p-m+1};q)_{\infty},
	\end{split}
	\end{equation}
	where in the last line we have recognized Jacobi's triple product. Since $0 \le p \le 2m-1$, one of the last two Pochhammer symbols vanishes. This concludes the proof.
\end{proof}

\bibliographystyle{JHEP}
\bibliography{Bibliography.bib}

\end{document}